\newlength{\smallfigwidth}
\newlength{\smallfigheight}
\newlength{\smallfigsep}
\newlength{\legendheight}
\setlist{nolistsep}
\setlist[itemize]{leftmargin=*}
\setlist[enumerate]{leftmargin=*}
\algnewcommand\algorithmicinput{\textbf{Input:}}
\algnewcommand\Input{\item[\algorithmicinput]}
\algnewcommand\algorithmicoutput{\textbf{Output:}}
\algnewcommand\Output{\item[\algorithmicoutput]}
\newtheorem{definition}{Definition}
\newtheorem{problem}[definition]{Problem}
\newtheorem{lemma}[definition]{Lemma}
\newtheorem{proposition}[definition]{Proposition}
\newtheorem{corollary}[definition]{Corollary}
\newtheorem{theorem}[definition]{Theorem}
\newtheorem{claim}[definition]{Claim}
\newcommand{\para}[1]{\noindent{\textbf{#1}}}
\newcommand{\spara}[1]{\smallskip\noindent{\textbf{#1}}}
\newcommand{\epara}[1]{\smallskip\noindent{\emph{#1}}}
\newcommand{\abs}[1]{\left|#1\right|}
\newcommand{\sgn}{\operatorname{sgn}}
\renewcommand{\lg}{\log}
\renewcommand{\varepsilon}{\epsilon}
\renewcommand{\vartheta}{\theta}
\newcommand{\poly}{\operatorname{poly}}
\newcommand{\polylg}{\operatorname{poly\,log}}
\newcommand{\bigO}{\ensuremath{\mathcal{O}}\xspace}
\newcommand{\tO}{\ensuremath{\widetilde{\bigO}}\xspace}
\newcommand{\reals}{\ensuremath{\mathbb{R}}\xspace}
\DeclareMathOperator*{\argmin}{arg\,min}
\newcommand{\prox}{\operatorname{prox}}
\newcommand{\ProjQ}[1]{\operatorname{Proj}_Q\!\left(#1\right)}
\newcommand{\AXC}{\m+A_{\m+X}}
\newcommand{\LXC}{\m+L_{\m+X}}
\newcommand{\LXCp}{\m+L_{\m+X'}}
\newcommand{\LXCO}{\m+L_{\m+X_1}}
\newcommand{\LXCT}{\m+L_{\m+X_2}}
\newcommand{\zXC}{\v+z_{\m+X}}
\newcommand{\tzXC}{\widetilde{\v+z}_{\m+X}}
\newcommand{\tzXCit}[1]{\widetilde{\v+z}_{\m+X^{(#1)}}}
\newcommand{\tzXCT}{\tzXCit{T}}
\newcommand{\zXCO}{\v+z_{\m+X_1}}
\newcommand{\zXCT}{\v+z_{\m+X_2}}
\newcommand{\zXCTT}{\v+z_{\m+X^{(T)}}}
\newcommand{\Solve}{\operatorname{Solve}}
\newcommand{\tfidf}{{\texttt{tf}-\texttt{idf}}}
\newcommand{\ouralgo}{{GDPM}\xspace}
\newcommand{\cjl}{{Convex.jl}\xspace}
\newcommand{\blone}{{BL-1}\xspace}
\newcommand{\bltwo}{{BL-2}\xspace}
\DeclareRobustCommand{\ALG}{%
	\ifmmode
		\operatorname{ALG}
	\else
		\text{ALG}\xspace
	\fi
}
\DeclareRobustCommand{\OFF}{%
	\ifmmode
		\operatorname{OFF}
	\else
		\text{OFF}\xspace
	\fi
}
\DeclareRobustCommand{\APPROXALGO}{%
	\ifmmode
		\operatorname{APPROX}
	\else
		\text{APPROX}\xspace
	\fi
}
\newcommand{\twitter}{\ensuremath{\mathbb{X}}\xspace}
\newcommand{\TwitterFive}{\textsf{\small \twitter{-Small}}\xspace}
\newcommand{\TwitterFifty}{\textsf{\small \twitter{-Large}}\xspace}
\newcommand{\Erdos }{\textsf{\small Erdos992}\xspace}
\newcommand{\Advogato }{\textsf{\small Advogato}\xspace}
\newcommand{\PagesGovernment }{\textsf{\small PagesGovernment}\xspace}
\newcommand{\WikiElec }{\textsf{\small WikiElec}\xspace}
\newcommand{\HepPh }{\textsf{\small HepPh}\xspace}
\newcommand{\Anybeat }{\textsf{\small Anybeat}\xspace}
\newcommand{\PagesCompany }{\textsf{\small PagesCompany}\xspace}
\newcommand{\AstroPh }{\textsf{\small AstroPh}\xspace}
\newcommand{\CondMat }{\textsf{\small CondMat}\xspace}
\newcommand{\Gplus }{\textsf{\small Gplus}\xspace}
\newcommand{\Brightkite }{\textsf{\small Brightkite}\xspace}
\newcommand{\Themarker }{\textsf{\small Themarker}\xspace}
\newcommand{\Slashdot }{\textsf{\small Slashdot}\xspace}
\newcommand{\BlogCatalog }{\textsf{\small BlogCatalog}\xspace}
\newcommand{\WikiTalk }{\textsf{\small WikiTalk}\xspace}
\newcommand{\Gowalla }{\textsf{\small Gowalla}\xspace}
\newcommand{\Academia }{\textsf{\small Academia}\xspace}
\newcommand{\GooglePlus }{\textsf{\small GooglePlus}\xspace}
\newcommand{\Citeseer }{\textsf{\small Citeseer}\xspace}
\newcommand{\MathSciNet }{\textsf{\small MathSciNet}\xspace}
\newcommand{\TwitterFollows }{\textsf{\small \twitter-{Follows}}\xspace}
\newcommand{\Delicious }{\textsf{\small Delicious}\xspace}
\newcommand{\YoutubeSnap }{\textsf{\small YoutubeSnap}\xspace}
\newcommand{\Flickr }{\textsf{\small Flickr-und}\xspace}
\newcommand{\Flixster }{\textsf{\small Flixster}\xspace}
\newcommand{\Exact}{\textbf{\small Exact}\xspace}
\newcommand{\Approx}{\textbf{\small Approx}\xspace}
\newcommand{\Error}{\textbf{\small Error}\xspace}
\newcommand{\Uniform}{\textbf{\small Uniform}\xspace}
\newcommand{\Powerlaw}{\textbf{\small Power-law}\xspace}
\newcommand{\Exponential}{\textbf{\small Exponential}\xspace}
\newcommand{\Polarized}{\textbf{\small Polarized}\xspace}
\providecommand{\abs}[1]{\ensuremath{\left\lvert#1\right\rvert}}
\providecommand{\norm}[1]{\ensuremath{\left\lVert#1\right\rVert}}
\newcommand{\diag}{\operatorname{diag}}
\def\v+#1{\ensuremath{\mathbf{#1}}\xspace}
\def\m+#1{\ensuremath{\mathbf{#1}}\xspace}
\newcommand{\LB}{\mathsf{LB}}
\newcommand{\sninline}[1]{}
\begin{document}

\title[Modeling the Impact of Timeline Algorithms on Opinion Dynamics Using	Low-rank Updates]%
	{Modeling the Impact of Timeline Algorithms on \\ Opinion Dynamics Using	Low-rank Updates}

\author{Tianyi Zhou}
\orcid{0000-0001-9566-8035}
\affiliation{
  \institution{KTH Royal Institute of Technology}
  \city{Stockholm}
  \country{Sweden}}
\email{tzho@kth.se}

\author{Stefan Neumann}\authornote{This work was done while the author was at KTH Royal Institute of Technology.}
\affiliation{
  \institution{TU Wien}
  \city{Vienna}
  \country{Austria}}
\email{stefan.neumann@tuwien.ac.at}

\author{Kiran Garimella}
\affiliation{
  \institution{Rutgers University}
  \city{New Brunswick}
  \country{USA}}
\email{kg766@comminfo.rutgers.edu}

\author{Aristides Gionis}
\affiliation{
  \institution{KTH Royal Institute of Technology}
  \city{Stockholm}
  \country{Sweden}}
\email{argioni@kth.se}

\begin{abstract}
	Timeline algorithms are key parts of online social networks, but during
	recent years they have been blamed for increasing polarization and
	disagreement in our society.  
  Opinion-dynamics models have been used to study a variety of phenomena in online social networks, 
  but an open question remains on how these	models can be augmented 
  to take into account the fine-grained impact of user-level timeline algorithms.
	We make progress on this question by providing a way to model
	the impact of timeline algorithms on opinion dynamics.  Specifically,
	we show how the popular Friedkin--Johnsen opinion-formation model can be
	augmented based on \emph{aggregate information}, extracted from timeline data. 
	We use our model to study the problem of minimizing the polarization and
	disagreement; we assume that we are allowed to make small changes to the users'
	timeline compositions by strengthening some topics of discussion and
	penalizing some others.  We present a gradient descent-based algorithm for
	this problem, and show that under realistic parameter settings, our
	algorithm computes a $(1+\varepsilon)$-approximate solution in
	time~$\tO(m\sqrt{n} \lg(1/\varepsilon))$, where $m$~is the number of edges
	in the graph and $n$~is the number of vertices.
	We also present an algorithm that provably
	computes an $\varepsilon$-approximation of our model in near-linear time.
    We evaluate our method on
	real-world data and show that it effectively reduces the polarization and
	disagreement in the network. 
	Finally, we release an anonymized graph dataset with ground-truth opinions
	and more than 27\,000~nodes (the previously largest publicly available dataset
	contains less than 550~nodes).
\end{abstract}

\begin{CCSXML}
<ccs2012>
   <concept>
       <concept_id>10002951.10003260.10003282.10003292</concept_id>
       <concept_desc>Information systems~Social networks</concept_desc>
       <concept_significance>500</concept_significance>
       </concept>
   <concept>
       <concept_id>10003752.10003809.10003635</concept_id>
       <concept_desc>Theory of computation~Graph algorithms analysis</concept_desc>
       <concept_significance>500</concept_significance>
       </concept>
 </ccs2012>
\end{CCSXML}

\ccsdesc[500]{Information systems~Social networks}
\ccsdesc[500]{Theory of computation~Graph algorithms analysis}

\keywords{Opinion dynamics, Friedkin--Johnsen model, social-network analysis, polarization, disagreement}

\maketitle

\section{Introduction}
Online social networks are used by millions of people on a daily basis
and they are integral parts of modern societies. However, during the last decade
there has been growing criticism that timeline algorithms, employed in online social networks,
create filter bubbles and increase the polarization and
disagreement in societies.

Despite significant research effort, our understanding of these phenomena is still
limited.  One of the main challenges is that polarization and
disagreement appear at a \emph{global network-level}, whereas timeline
algorithms operate on a \emph{local user-level}. 
So, on the one hand, opinion dynamics are commonly studied in the context of 
the graph structure of the social network. 
On the other hand, timeline algorithms provide a personalized ranking of content
(such as posts on Facebook or \twitter) and only consider users' local neighborhoods in the graph (e.g., $k$-hop neighborhoods), without considering the global polarization and disagreement. Providing models that bridge the gap between these two levels of abstraction is a major challenge to facilitate our understanding of the underlying phenomena.

A popular way to study the network-level polarization and disagreement is using
opinion-formation models, and one of the most popular abstractions is
the Friedkin--Johnsen~(FJ) model~\cite{friedkin1990social}.
The vanilla version of the FJ model, however, is not sufficient to model real-world online social networks, 
since it assumes that the underlying graph is static, based only on friendship relations, 
and not taking into account \emph{additional} relations and interactions
based on recommendations from timeline algorithms.

To address these limitations a lot of attention has been devoted to augmenting
the FJ~model to understand phenomena that are more closely aligned with the real
world~\cite{zhu2021minimizing,musco2018minimizing,chitra2020analyzing,cinus2023rebalancing,tu2022viral,racz2023towards}.  However, existing augmentations are rather simplistic:
they either study a small number of edge additions or
deletions~\cite{zhu2021minimizing,racz2023towards} or they
directly perform global changes to the graph structure to minimize the
polarization and
disagreement~\cite{musco2018minimizing,chitra2020analyzing,cinus2023rebalancing}
(see Section~\ref{sec:related} for a more detailed description of existing
 approaches).
Most importantly, these papers assume that the graph structure is manipulated
directly, which does not align with how timeline algorithms interact with the
underlying graph structure.  Hence, the augmentations studied in existing papers
provide no way of incorporating the properties of timeline algorithms into
opinion-formation models. They also provide no means of updating a timeline
algorithm's recommendations to reduce polarization and disagreement.

\vspace{2mm}
\para{Our contributions.} In this paper, we make progress on these issues by
introducing an augmentation of the FJ~model that \emph{combines a fixed
underlying graph and a network that is based on aggregate information of a
timeline algorithm}.

In particular, we obtain our \emph{aggregate information} 
by aggregating along the topics that are discussed in the social network. First,
for each user we consider how many posts of their timeline are from each topic.
This provides us with the topic distribution on the timeline of each user.
Second, for each topic we consider how frequently posts by the users are displayed by the timeline algorithm.
This provides us with a distribution for each topic, indicating how influential each user is for this topic.
We argue that this is a realistic way to obtain aggregate
information for a large range of timeline algorithms in real-world platforms, 
e.g., on \twitter or Reddit.

Based on the aggregate information, we introduce a low-rank graph update, which
encodes the social-network connections created by the timeline algorithm's
recommendations. In other words, we use the aggregate information and the
low-rank graph to bridge between the network-level opinion dynamics and the
user-level recommendations of a timeline algorithm.  Our model is the first
that allows to quantify how timeline algorithms impact polarization and
disagreement; we also show that our model can be computed in nearly-linear time.
Details are presented in Section~\ref{sec:problem}.

Next, we use our model to study how a timeline algorithm's recommendations need
to be adapted to reduce polarization and disagreement, by allowing small
changes to the aggregate information.  More concretely, we allow small changes to
the timelines of the users, such as reducing a user's interest in a highly
polarizing topic and slightly strengthening a less controversial topic in the
user's timeline.
Incorporating these types of the changes into real-world timeline algorithms is quite practical.

For the problem of reducing polarization and disagreement, 
we provide a gradient descent-based algorithm, called \ouralgo, 
and show that under realistic parameter settings it computes a
$(1+\varepsilon)$-approximate solution in time
$\tO(m\sqrt{n}\log(1/\varepsilon))$, where $n$~is the number of vertices and $m$
is the number of edges in the original graph.
The details are presented in Section~\ref{sec:gradient-descent}.

To obtain our efficient optimization algorithm, we have to overcome significant
computational challenges.  In particular, since it is possible that the number
of edges introduced by the low-rank graph is much larger than in
the original graph, even writing down the edges introduced by the
recommender system may be infeasible in practice. 
Therefore, in Section~\ref{sec:fast-opinions} 
we show that we can efficiently approximate the opinions, the polarization, and the disagreement 
in time that is \emph{near-linear} in the size of the \emph{original}~graph.

Furthermore, we experimentally evaluate our algorithm on 27~real-world datasets.
Our results show that \ouralgo can efficiently reduce the
\emph{disagreement--polarization index} proposed by~\citet{musco2018minimizing}. 
We also qualitatively evaluate which topics are favored and which topics are
penalized when reducing the polarization and the disagreement.  Additionally,
our experiments show that our algorithms are orders of magnitude faster than
baseline algorithms and that they scale to graphs with millions of nodes and
edges.

Finally, we make our code and two anonymized \twitter datasets available for
research purposes~\cite{Code_data}.
Our anonymized graph datasets contain ground-truth opinions
and the graph structure for more than 27\,000~nodes. The previoulsy largest
publicly available dataset contains less than 550~nodes~\cite{de2019learning}.

We include all omitted proofs in the appendix.
Our code and our data is available online~\cite{Code_data}.

\section{Related work}
\label{sec:related}
Over the past few years, researchers have studied 
the phenomena of political polarization on 
social media~\cite{iyengar2015fear,pariser2011filter}.
The work includes understanding the impact of polarized
discussions~\cite{barber2015causes,levin2021dynamics} as well as developing 
mitigation strategies~\cite{balietti2021reducing,mccarty2015reducing}.

From a practical point of view, there have been various attempts to develop
algorithmic solutions to reduce polarization.  
Several works propose approaches that expose users to opposing viewpoints on 
online social networks~\cite{garimella2017mary,garimella2017reducing,graells2016data}.
\citet{munson2010presenting} design a browser extension that visualizes
the bias of a user's content consumption.

To study polarization in online social networks theoretically, researchers
resorted to opinion formation models and in recent years the most popular model
in this context is the the Friedkin--Johnsen (FJ)
model~\cite{friedkin1990social}.  It has been popular to augment the FJ~model
with abstractions of algorithmic
interventions~\cite{bhalla2021local,chitra2020analyzing,xu2021fast,zhu2021minimizing}.
Other works in this research area study the impact of
adversaries~\cite{chen2021adversarial,gaitonde2020adversarial,tu2023adversaries}
and viral content~\cite{tu2022viral}, as well basic properties of the FJ
model~\cite{bindel2015how}.

Several works in this area dealt with the question of minimizing the
polarization and disagreement using small updates to the underlying
graph~\cite{matakos2017measuring,musco2018minimizing,zhu2021minimizing}.  Zhu et
al.~\cite{zhu2021minimizing} and R{\'{a}}cz and Rigobon~\cite{racz2023towards}
allow $k$~edge updates to the underlying graph.  Musco et
al.~\cite{musco2018minimizing} allow to redistribute all edge weights
arbitrarily, whereas Cinus et al.~\cite{cinus2023rebalancing} allow edge updates
under the constraints that the vertex degrees must stay the same and that no new
edges are added to the graph.  The main limitation of these works is that the
graph updates performed in their algorithms have no clear correspondence with
operations of timeline algorithms; for instance, it is unclear how the
graph updates proposed in~\cite{musco2018minimizing,cinus2023rebalancing} should
be incorporated into a timeline algorithm. 
In contrast, incorporating the changes to the aggregate information that we study in this
paper is feasible in practice. 
We believe that this is a significant contribution to this line of work.

\section{Preliminaries}
\label{sec:preliminaries}

\para{Linear algebra.}
Let $G=(V,E,w)$ be an undirected, connected, weighted graph with $n=\abs{V}$
vertices and $m=\abs{E}$ edges.  We set $\m+L = \m+D-\m+A$ to the
Laplacian of~$G$, where $\m+D$ is the diagonal matrix with
$\m+D_{ii}=\sum_{j\colon(i,j)\in E} w_{ij}$ and~$\m+A$ 
is the weighted adjacency matrix
with $\m+A_{ij} = w_{ij}$.

For $\m+X\in\mathbb{R}^{n\times k}$, we denote the Frobenius norm by
$\norm{\m+X}_F = ({\sum_{i,j} \m+X_{ij}^2 })^{1/2}$.
The spectral norm of~$\m+X$ is 
$\norm{\m+X}_2 = \sigma_{\max}(A)$, where $\sigma_{\max}(\m+A)$ is the largest
singular value of $\m+A$.
We also use the $1$-norm of a matrix, which is given by
$\norm{\m+X}_{1,1}=\sum_{ij}\abs{\m+X_{ij}}$. 
We write $\m+X_i$ to denote the $i$-th row of $\m+X$.

We write $\m+I$ to denote the identity matrix and $\v+1$ to denote the vector
with all entries equal to 1; the dimension will typically be clear from the
context. Given a vector $\v+v$, we write $\diag(\v+v)$ to denote the diagonal
matrix with $\diag(\v+v)_{ii} = \v+v_i$.
For vectors~$\v+u,\v+v\in\mathbb{R}^n$, we 
write $\v+u \odot \v+v \in \mathbb{R}^{n}$ to denote their
Hadamard product, i.e., $(\m+u\odot\m+v)_{i} = \m+u_{i} \m+v_{i}$.
We define $\norm{\v+v}_2=({\sum_{i} \v+v_i^2})^{1/2}$ to be the Euclidean norm of $\v+v$.
For a vector $\v+v$ and a convex set~$Q$, $\ProjQ{\v+v}$ denotes the
orthogonal projection of $\v+v$ onto~$Q$.

We write $\sgn(x) \colon \mathbb{R} \to \{-,0,+\}$ to denote the sign of~$x$.
We use the notation $\tO(T)$ to denote running times of the
form $T  \lg^{\bigO(1)}(n)$. We write $\poly(n)$ to denote numbers bounded
by $n^{\bigO(1)}$.

\spara{Friedkin--Johnsen (FJ) model.}
Let $G$ and $\m+L$ be as defined above.
In the FJ model~\cite{friedkin1990social}, 
each node~$i$ has a fixed \emph{innate opinion} $\v+s_i$ and 
an \emph{expressed opinion}~$\v+z^{(t)}_i$ at time~$t$. 
Initially, $\v+z^{(0)}_i = \v+s_i$, and 
at time $t+1$ every node~$i$ updates their expressed
opinion as the weighted average of its own innate opinion and the expressed opinions of its neighbors:
\begin{align}
\label{eq:update-rule}
	\v+z^{(t+1)}_i
	= \frac{\v+s_i + \sum_{j\colon(i,j)\in E} w_{ij} \v+z^{(t)}_i}{1 + \sum_{j\colon(i,j)\in E} w_{ij}}.
\end{align}
We write $\v+s\in\reals^n$ and $\v+z^{(t)}\in\reals^n$ to denote the vectors of
innate and expressed opinions, respectively.  It is known that in the limit, the
expressed opinions converge to
$\v+z = \lim_{t\to\infty} \v+z^{(t)} = (\m+I + \m+L)^{-1} \v+s$.

We assume that the innate opinions are mean-centered
and in the interval $[-1,1]$, i.e., $\sum_{i\in V} \v+s_i = 0$ and
$\v+s_i\in[-1,1]$ for all $i\in V$.  The latter implies that
$\v+z^{(t)}_i\in[-1,1]$.  We note that these assumptions are made without loss
of generality as they can always be achieved by re\-scaling the opinions~$\v+s$.  

For mean-centered opinions, the \emph{polarization index}~$P(G)$ measures the
variance of the opinions and is given by
\( P(G) = \sum_{i\in V} \v+z_i^2.\) The \emph{disagreement index}~$D(G)$
describes the tension along edges in the network and is given by
\(   D(G) = \sum_{(i,j)\in E} w_{ij} (\v+z_i - \v+z_j)^2\).
Finally, the \emph{disagreement--polarization index}~$I(G)$, on which we will
focus for the rest of the paper, is given by
\begin{align}
\label{eq:dis-pol}
	I(G) = P(G) + D(G) = \v+s^\top (\m+I + \m+L)^{-1} \v+s,
\end{align}
where the last equality was shown by Musco et al.~\cite{musco2018minimizing}.
They also observe that the function
$f(\v+L) = \v+s^\top (\m+I + \m+L)^{-1} \v+s$ is
convex if $\m+L\in\mathcal{L}$ is from a convex set of
Laplacians~$\mathcal{L}$~\cite{nordstrom2011convexity}.

\section{Problem formulation}
\label{sec:problem}

In this section, we formally introduce our augmented version of the FJ~model and
we state the optimization problem we study for minimizing the
disagreement--polarization index.  In our model, we show how aggregate
information from a timeline algorithm can be used to obtain a low-rank graph
update for the FJ~model.  At a high level, we start with the initial adjacency
matrix $\m+A$, which only contains interaction-information (such as who follows
whom), and add an adjacency matrix $\AXC$ based on the aggregate information. 

\begin{table}[t]
  \small
  \centering
  \caption{Notation}
  \label{tab:notation}
  \begin{tabular}{ll}
    \toprule
    Variable & Meaning \\
    \midrule
        $G = (V,E)$ & Original graph, vertex set, edge set \\  
	$n$ & Number of vertices in the original graph \\
	$m$ & Number of edges in the original graph\\
	$k$ & Number of topics \\
	$\m+X$ & User--topic matrix (variable of our algorithm)\\
	$\m+Y$ & Influence--topic matrix (fixed)\\
	$\m+A$ & Adjacency matrix of the original graph \\
	$\AXC$ & Low-rank adjacency matrix based on aggregate information\\
	$\m+L$ & Laplacian of the original graph\\
	$\LXC$ & Laplacian of the low-rank graph\\
	$\v+s$ & Innate opinions\\
	$\v+z$ & Expressed opinions for the original graph\\
	$\zXC$ & Expressed opinions after adding the low-rank \\
		& \quad update to the original graph\\
	$\tzXC$ & Approximation of $\zXC$ \\
	$f(\m+X)$ & Objective function value for $\m+X$ \\
	$\m+X^{(L)}$ & Entry-wise lower bound for $\m+X'$ in Problem~\ref{problem:min-dpi} \\
	$\m+X^{(U)}$ & Entry-wise upper bound for $\m+X'$ in Problem~\ref{problem:min-dpi} \\
  $\vartheta$ & Parameter used to define $\m+X^{(L)}\!$ and $\m+X^{(U)}$  \\
	$Q$ & Feasible set of matrices~$\m+X$ with $\m+X^{(L)}\!\leq \m+X \leq \m+X^{(U)}$ \\
  $C$ & Percentage of extra edge weight added by low-rank update \\
    \bottomrule
  \end{tabular}
\end{table}

The \emph{aggregate information} that we consider is as follows. We consider
$k$~different topics and two row-stochastic matrices $\m+X\in[0,1]^{n\times k}$
and $\m+Y\in[0,1]^{k\times n}$, i.e.,
$\sum_{j=1}^k \m+X_{ij}=1$, for all $i=1,\dots,n$, and 
$\sum_{r=1}^n \m+Y_{jr}=1$, for all $j=1,\dots,k$, and we assume $k\leq n$.
Here, $\m+X$ models how user timelines are formed based on various topics; more
concretely, we assume that $\m+X_{ij}$ is the fraction of
posts in user~$i$'s timeline from topic~$j$.
The matrix~$\m+Y$ models which users are recommended by the timeline algorithm
for each topic; that is, when the algorithm recommends contents for topic~$j$,
then a fraction of $\m+Y_{jr}$ of the contents was composed by user~$r$.  We
believe that it is possible to obtain this type of aggregate data for providers
of online social networks in the real-world, for instance, by monitoring users'
timelines (to get $\m+X$) and the recommendations of timeline algorithms (to get
$\m+Y$).

Observe that if we consider the product $\m+X \m+Y$, 
a $(\m+X \m+Y)_{ij}$-fraction of the recommended contents in the timeline of
user~$i$ is composed by user~$j$. 
This can also be viewed as the impact that a user~$j$ 
has on another user~$i$.
Since in general $\m+X \m+Y$ is
a non-symmetric matrix, we also add the transposed term $\m+Y^\top \m+X^\top$,
which ensures symmetry of the adjacency matrix.
This can be interpreted as the impact of users' audience to them, 
for instance, users want to create content that is liked by their audience.

Thus, we consider a scaled version
of $\m+X\m+Y + \m+Y^T \m+X^T$.  In the following lemma, we
show that this matrix adds (weighted) edges of total weight~$2n$.

\begin{lemma}
\label{lem:weight-edges}
	It holds that $\norm{\m+X\m+Y + \m+Y^T \m+X^T}_{1,1} = 2n$.
\end{lemma}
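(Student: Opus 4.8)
The plan is to exploit the fact that all entries involved are non-negative, which turns the $1,1$-norm into a plain sum of entries, together with the elementary fact that the product of two row-stochastic matrices is again row-stochastic.

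First I would observe that since $\m+X$ and $\m+Y$ have non-negative entries, so does the product $\m+X\m+Y$, and hence $\norm{\m+X\m+Y}_{1,1} = \sum_{i,j}(\m+X\m+Y)_{ij}$. I would then evaluate this sum row by row: for each fixed row index~$i$,
\[
  \sum_{j=1}^{n} (\m+X\m+Y)_{ij}
  = \sum_{j=1}^{n}\sum_{r=1}^{k} \m+X_{ir}\m+Y_{rj}
  = \sum_{r=1}^{k} \m+X_{ir}\Bigl(\sum_{j=1}^{n}\m+Y_{rj}\Bigr)
  = \sum_{r=1}^{k}\m+X_{ir}
  = 1,
\]
using first that $\m+Y$ is row-stochastic and then that $\m+X$ is row-stochastic. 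Summing over the $n$ rows gives $\norm{\m+X\m+Y}_{1,1} = n$; equivalently, $\m+X\m+Y$ is itself row-stochastic with non-negative entries.

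Next I would note that $\m+Y^\top\m+X^\top = (\m+X\m+Y)^\top$ has exactly the same set of entries as $\m+X\m+Y$, merely transposed, so it is also entry-wise non-negative and $\norm{\m+Y^\top\m+X^\top}_{1,1} = \norm{\m+X\m+Y}_{1,1} = n$. Finally, since both $\m+X\m+Y$ and $\m+Y^\top\m+X^\top$ have non-negative entries, no cancellation occurs in the entry-wise sum, so the $1,1$-norm is additive on them:
\[
  \norm{\m+X\m+Y + \m+Y^\top\m+X^\top}_{1,1}
  = \sum_{i,j}\bigl((\m+X\m+Y)_{ij} + (\m+X\m+Y)_{ji}\bigr)
  = \norm{\m+X\m+Y}_{1,1} + \norm{\m+X\m+Y}_{1,1}
  = 2n.
\]
There is essentially no obstacle here; the only point requiring care is that additivity of the $1,1$-norm holds because all entries are non-negative — for matrices of mixed sign one would only get the upper bound $\norm{\m+A+\m+B}_{1,1}\le\norm{\m+A}_{1,1}+\norm{\m+B}_{1,1}$ — and the hypothesis $\m+X,\m+Y\ge 0$ entry-wise is precisely what rules this out.
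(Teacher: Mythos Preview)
Your proof is correct and follows essentially the same approach as the paper's: both use non-negativity to split the $1,1$-norm additively and then use row-stochasticity of $\m+X$ and $\m+Y$ to show each summand contributes~$n$. The only cosmetic difference is that the paper writes the entry sum compactly as $\v+1^\top \m+X\m+Y\,\v+1$ and simplifies via $\m+Y\v+1=\v+1$, $\m+X\v+1=\v+1$, whereas you expand the index sums explicitly and handle the transposed term by noting it has the same multiset of entries.
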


To obtain a more fine-grained control over how many edges we add to the original
graph, we consider a scaled version of $\m+X\m+Y + \m+Y^T \m+X^T$.  More
concretely, based on the result from Lemma~\ref{lem:weight-edges}, we add the
low-rank adjacency matrix given by
\[ \AXC=\frac{C W}{2n} \left( \m+X \m+Y + \m+Y^T \m+X^T \right), \]
where $C>0$ is a parameter that is fixed throughout the paper and
$W=\sum_{(i,j)\in E}w_{ij}$ is the total weight
of edges in the original graph~$G$. 
Observe that Lemma~\ref{lem:weight-edges} implies that $\norm{\AXC}_{1,1}=CW$
and thus if we add the edges in $\AXC$ to the graph, the total weight of edges
increases by a $C$-fraction.\footnote{
	We note that while here we only guarantee that the \emph{global} increase of
	edges is a $C$-fraction, in Figure~\ref{fig:node-degree-increase} we show
	that also on a \emph{local} user-level, most individual node-degrees are
	increased by approximately a $C$-fraction.
}  In practice, it may be realistic to
think of $C=10\%$ or $C=50\%$.

After adding the edges $\AXC$, which are based on the aggregate information, the
new adjacency matrix becomes
$$\m+A + \AXC = \m+A + \frac{C W}{2n} \left( \m+X \m+Y + \m+Y^T \m+X^T \right),$$
where $\m+A$ is the adjacency matrix of the original graph and $\AXC$ is the
adjacency matrix of the edges that are introduced by the low-rank update.
Next, we write
\begin{align*}
	\LXC = \diag(\AXC\v+1) - \AXC
\end{align*}
to denote the Laplacian associated with the adjacency matrix $\AXC$.
Note that the Laplacian of the combined graph is given by $\m+L + \LXC$, where
$\m+L$ is the Laplacian of the original graph that only contains the
follow-information.

Now, after adding the edges from the low-rank update, the expressed
equilibrium opinions that are produced by the FJ opinion dynamics are given by
$\zXC = (\m+I + \m+L + \LXC)^{-1} \v+s$.

Next, we formally introduce the optimization problem that we study.
Intuitively, the problem states that we wish to minimize the
disagreement--polarization index (Eq.~\eqref{eq:dis-pol}), while allowing
small changes to the aggregate information. In particular, we allow to make
changes to how the users' timelines are composed of different topics.  The
formal definition is as follows.

\begin{problem}
\label{problem:min-dpi}
Given a graph $G=(V,E)$ with adjacency matrix~$\m+A$ and Laplacian~$\m+L$, 
user--topic matrix $\m+X\in[0,1]^{k\times n}$, 
influence--topic matrix $\m+Y\in[0,1]^{k\times n}$, 
and lower and upper bound matrices $\m+X^{(L)}\!$ and $\m+X^{(U)}\!$, respectively,
find a matrix $\m+X' \in [0,1]^{n\times k}$ to satisfy
\begin{equation}
\label{eq:problem}
\begin{aligned}
	\min_{\m+X'} \quad & f(\m+X') = \v+s^T\left(\m+I + \m+L + \LXCp\right)^{-1}\v+s , \\
	\text{such that} \quad & \norm{\m+X_i'}_{1} =1, \quad \text{for all } i=1,\dots,n, \text{ and}\\
	 &\m+X^{(L)}\! \leq \m+X' \leq \m+X^{(U)}.
\end{aligned}
\end{equation}
\end{problem}

In Problem~\ref{problem:min-dpi}, 
we write $\m+X_i'$ to denote the $i$-th row of the matrix-valued
variable $\m+X'$. The first constraint ensures that $\m+X'$ is a row-stochastic
matrix. Furthermore, the matrices $\m+X^{(L)}\!\in[0,1]^{n\times k}$ and
$\m+{X}^{(U)}\in[0,1]^{n\times k}$ are part of the input and they give entry-wise
lower and upper bounds for the entries in $\m+X'$, 
i.e., we require  
$0\leq \m+{X}^{(L)}_{ij}\!\leq \m+{X}_{ij}' \leq \m+{X}^{(U)}_{ij}\! \leq 1$ 
for all $i,j$.  
This constraint can be interpreted as a quantification of how much we can
increase/decrease the attention of user~$i$ to topic~$j$ without 
the risk of making non-relevant recommendations and without violating ethical considerations.
We further assume that $\m+{X}^{(L)}\!\leq \m+X \leq \m+{X}^{(U)}\!$, which corresponds
to the assumption that the initial matrix $\m+X$ is a feasible solution to our
optimization problem.

In the following, we let $Q$ denote the set of all matrices $\m+X'$ that satisfy
the constraints of Problem~\ref{problem:min-dpi}. Observe that $Q$ is a convex
set, since it is the intersection of a box and a hyperplane (the first
constraint is equivalent to the hyperplane constraint $\langle \m+X_i', \v+1
\rangle = 1$, since all entries in $\m+X'$ are in the interval $[0,1]$; the
second constraint is a box constraint). Furthermore, observe that the
constraints are independent across different rows $\m+X_i'$, which we will
exploit later.

Since the objective function and $Q$ are convex, Problem~\ref{problem:min-dpi}
can be solved optimally in polynomial time.  However, if we use a off-the-shelf
solver for this purpose, its running time will be prohibitively high in
practice (see Section~\ref{section:experimental-evaluation}). Even more, already
\emph{a single} computation of the gradient is impractical when done na\"ively
(see Section~\ref{section:experimental-evaluation}).  We address these
challenges in the following section.

\section{Optimization algorithm}
\label{sec:alg}

In this section, we present a gradient-descent algorithm, which converges to an
optimal solution for Problem~\ref{problem:min-dpi}. We present bounds for its
running time and its approximation error after a given number of iterations.  We
also show that we can approximate the expressed opinions $\zXC$ highly
efficiently. We conclude the section by presenting two greedy baseline algorithms.

\subsection{Efficient estimation of expressed opinions}
\label{sec:fast-opinions}

To understand the impact of the low-rank update on the user opinions, it
is highly interesting to inspect the expressed opinions~$\zXC$: 
comparing them with the original expressed opinions~$\v+z$ will offer us
insights into the impact of the timeline algorithm.  However, even though in
Lemma~\ref{lem:weight-edges} we bound the total \emph{weight} of edges that are
added, their \emph{number} could still be $\Omega(n^2)$, 
since the matrix $\AXC$ might be dense. 
Thus, even writing down $\AXC$
would result in running times of $\Omega(n^2)$ and would be prohibitively expensive. 
Therefore, one challenge is to show how to compute $\zXC$ efficiently.

In the following proposition, we show that since $\AXC$ has small rank, we can
exploit the Woodbury identity to obtain an approximation~$\tzXC$ via
Algorithm~\ref{alg:opinions} (see Appendix~\ref{alg:opinions} for the
pseudocode). 
By using such an approximation we can achieve much faster running times, 
while still obtaining provably small errors. 
In the following proposition we use $\m+U =
\begin{pmatrix} \m+X ~\:~ \m+Y^\top \end{pmatrix}$ and $\m+V = \begin{pmatrix} \m+Y
\\ \m+X^\top \end{pmatrix}$.

\begin{proposition}
\label{prop:fast-opinions}
	Let $\varepsilon > 0$.
	Suppose $\left(- \frac{2n}{CW} \m+I + \m+V \m+M^{-1} \m+U\right)^{-1}$
	exists and $\norm{\m+V \m+M^{-1} \m+U}_2 \leq 0.99 \frac{2n}{CW}$.
	Algorithm~\ref{alg:opinions} computes $\tzXC$ with
	$\norm{\tzXC - \zXC}_2 \leq \varepsilon$ in
	expected time $\tO((mk + nk^2 + k^3)\lg(W/\varepsilon))$.
\end{proposition}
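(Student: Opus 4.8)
The plan is to rewrite $\m+I + \m+L + \LXC$ as a sparse, well-conditioned symmetric diagonally dominant (SDD) matrix plus a rank-$O(k)$ correction, apply the Woodbury identity so that only SDD linear solves and $O(k)$-dimensional dense linear algebra remain, and then bound how the error of an approximate SDD solver propagates through this computation. Concretely, I would set $\m+M = \m+I + \m+L + \diag(\AXC\v+1)$, so that $\m+I + \m+L + \LXC = \m+M - \AXC = \m+M - \frac{CW}{2n}\m+U\m+V$, where $\m+U\m+V = \m+X\m+Y + \m+Y^\top\m+X^\top$ has rank at most $2k$. The matrix $\m+M$ is symmetric and diagonally dominant with positive diagonal, has $O(m+n)$ nonzeros, and can be written down in $\tO(m+nk)$ time — the only nontrivial part is $\AXC\v+1 = \frac{CW}{2n}\m+U(\m+V\v+1)$, computed by two matrix--vector products. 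Moreover $\lambda_{\min}(\m+M)\ge 1$ and $\lambda_{\max}(\m+M)\le 1 + \norm{\m+L}_2 + \max_i(\AXC\v+1)_i = O(W)$, using $\norm{\m+L}_2 \le 2\max_i\m+D_{ii} = O(W)$ and $\max_i(\AXC\v+1)_i \le \norm{\AXC}_{1,1} = CW$ (Lemma~\ref{lem:weight-edges}). Consequently a fast (randomized) SDD solver solves $\m+M\v+x=\v+b$ to $\ell_2$-relative accuracy $\delta$ in expected time $\tO(m\lg(W/\delta))$, where the $\lg W$ absorbs the $\sqrt{\kappa(\m+M)}$ loss incurred in passing from the solver's $\m+M$-norm guarantee to the $\ell_2$ norm.

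Next, under the stated hypotheses the Woodbury identity gives the exact formula
\[
	\zXC = \m+M^{-1}\v+s - \m+M^{-1}\m+U\Bigl(-\tfrac{2n}{CW}\m+I + \m+V\m+M^{-1}\m+U\Bigr)^{-1}\m+V\m+M^{-1}\v+s ,
\]
which Algorithm~\ref{alg:opinions} evaluates approximately in three stages: (i) use the SDD solver to compute $\v+p \approx \m+M^{-1}\v+s$ and $\m+W \approx \m+M^{-1}\m+U$ (one solve for $\v+s$ and one for each of the $2k$ columns of $\m+U$), in $\tO(mk\lg(W/\delta))$ total time; (ii) form the $2k\times 2k$ matrix $\m+N = -\frac{2n}{CW}\m+I + \m+V\m+W$ in $O(nk^2)$ time and invert it in $O(k^3)$ time; (iii) output $\tzXC = \v+p - \m+W\,\m+N^{-1}(\m+V\v+p)$ in $O(nk+k^2)$ time. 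Summing these costs yields the stated running time, once we show that it suffices to run the solver with $\delta = \varepsilon/\poly(n,W)$, because then all the $\lg(W/\delta)$ factors become $\tO(\lg(W/\varepsilon))$.

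The main obstacle is the error analysis: bounding $\norm{\tzXC - \zXC}_2$ in terms of $\delta$. I would propagate the solver error of stage~(i) through stages~(ii)--(iii), using that every operator norm that appears is polynomially bounded in $n$ and $W$: $\norm{\m+M^{-1}}_2 \le 1$, $\norm{\v+s}_2 \le \sqrt n$, and $\norm{\m+U}_2, \norm{\m+V}_2 \le \sqrt{n+k}$ (since $\norm{\m+X}_F^2 \le \sum_{ij}\m+X_{ij} = n$ and $\norm{\m+Y}_F^2 \le k$, both matrices being row-stochastic with entries in $[0,1]$). The delicate step is controlling the error of the $2k\times 2k$ inverse. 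Writing $\m+N^{*} = -\frac{2n}{CW}\m+I + \m+V\m+M^{-1}\m+U$ for the exact matrix and $\m+N = -\frac{2n}{CW}\m+I + \m+V\m+W$ for the computed one, the hypothesis $\norm{\m+V\m+M^{-1}\m+U}_2 \le 0.99\cdot\frac{2n}{CW}$ forces every singular value of $\m+N^{*}$ to lie in $[0.01\cdot\frac{2n}{CW},\,1.99\cdot\frac{2n}{CW}]$; hence $\norm{(\m+N^{*})^{-1}}_2 = O(W)$, $\m+N^{*}$ remains invertible under perturbations of norm $o(n/W)$, and the standard resolvent bound gives $\norm{\m+N^{-1} - (\m+N^{*})^{-1}}_2 = O(W^2)\cdot\norm{\m+N - \m+N^{*}}_2$, where $\norm{\m+N - \m+N^{*}}_2 \le \norm{\m+V}_2\,\norm{\m+W - \m+M^{-1}\m+U}_2 = \delta\cdot\poly(n,W)$. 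Chaining the three stages gives $\norm{\tzXC - \zXC}_2 \le \delta\cdot\poly(n,W)$, so a suitable choice $\delta = \varepsilon/\poly(n,W)$ makes this $\le\varepsilon$ while adding only a $\tO(\lg(W/\varepsilon))$ factor to the running time. (The word ``expected'' in the statement reflects the randomization of the fast SDD solver; every other step is deterministic.)
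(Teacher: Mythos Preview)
Your proposal is correct and follows essentially the same approach as the paper: define $\m+M = \m+I + \m+L + \diag(\AXC\v+1)$, apply the Woodbury identity to reduce the computation to SDD solves against $\v+s$ and the $2k$ columns of $\m+U$ plus a dense $2k\times 2k$ inversion, and use the hypothesis $\norm{\m+V\m+M^{-1}\m+U}_2 \le 0.99\cdot\frac{2n}{CW}$ to control the conditioning of that small inverse (the paper does this via a Neumann series, you via a singular-value gap; these are equivalent). The only minor deviation is that in your step~(iii) you reuse the approximation $\m+W\approx\m+M^{-1}\m+U$ to form $\m+W\m+N^{-1}\m+V\v+p$, whereas Algorithm~\ref{alg:opinions} instead performs one extra SDD solve $\Solve(\m+M,\m+U\v+y_2,\cdot)$; both variants are valid and yield the same asymptotic guarantees.
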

\begin{proof}[Proof sketch]
	The algorithm is based on the observation that
	using the Woodbury matrix identity with
	$\m+M = \m+I + \m+L + \diag(\AXC\v+1)$,
	and $\m+U$ and $\m+V$ as before, we get that
	\begin{align*}
		\zXC &= \m+M^{-1} \v+s
			 	+ \frac{CW}{2n}
					\m+M^{-1}
					\m+U
			 		\left(
						\m+I
						- \frac{CW}{2n} \m+V \m+M^{-1} \m+U
					\right)^{-1}
					\m+V
					\m+M^{-1} \v+s.
	\end{align*}
	Now Algorithm~\ref{alg:opinions} (pseudocode in the appendix) basically computes this quantity from
	right to left.
    Our main insight here is that we can compute the quantities $\m+M^{-1}\v+s$ and $\m+M^{-1}\m+U$ using the Laplacian solver from Lemma~\ref{lem:laplacian-solver}. Here, we approximate $\m+M^{-1}\m+U$ column-by-column using the call $\Solve(\m+M,\v+w_j,\varepsilon_{\m+R})$, where $\v+w_j$ is the $j$'th column of $\m+U$ and $\varepsilon_{\m+R}$ is a suitable error parameter. The remaining matrix multiplications are efficient since $\m+U$ has only $2k$~columns and since matrix $\m+V$ has only $2k$~rows.
 
    To obtain our guarantees for the approximation error, we have to perform an intricate error analysis to ensure that errors do not compound too much. This is a challenge since we solve $\m+I - \frac{CW}{2n} \m+V\m+M^{-1}\m+U$ only approximately but then we have to compute an inverse of this approximate quantity. 
In the proposition we used the assumptions that $\m+V \m+M^{-1} \m+U$ exists and
that $\norm{\m+V \m+M^{-1} \m+U}_2 \leq 0.99 \cdot \frac{2n}{CW}$, to ensure that this can be done without obtaining too much error.
\sninline{While in general
this may not be true, we show experimentally that this condition is satisfied in
practice.} \sninline{We have to run these experiments.} %
In the proof we will also show that these assumptions imply that the inverse
$\m+S^{-1}$ used in the algorithm exists.
 See
	Appendix~\ref{sec:proof-prop-fast-opinions} for details.
\end{proof}

\sninline{Check whether we need the assumption that
	$\left(- \frac{2n}{CW} \m+I + \m+V \m+M^{-1} \m+U\right)^{-1}$ exists.
	It probably follows from the assumption that 
	$\norm{\m+V \m+M^{-1} \m+U}_2 \leq 0.99 \frac{2n}{CW}$.}

The input of Algorithm~\ref{alg:opinions} are the innate opinions $\v+s$,
the user--topic matrix $\m+X$, the influence--topic matrix $\m+Y$, 
the fraction of weight parameter $C$, and the approximation error parameter $\varepsilon$. 
The algorithm returns the approximated expressed opinions $\tzXC$.
Note that if we consider the practical scenario of $k=\polylg(n)$ and
$W\leq\poly(n)$, the running time of Algorithm~\ref{alg:opinions} is
$\tO(m\lg(1/\varepsilon))$.

Proposition~\ref{prop:fast-opinions} also allows us to efficiently evaluate the
disagree\-ment--polarization index after adding the edges in $\AXC$. More
concretely, in the following corollary we show that we can efficiently evaluate
our objective function $f(\m+X) = \v+s^T (\m+I + \m+L + \LXC)^{-1} \v+s$ with
small error.
\begin{corollary}
\label{cor:fast-objective-function}
	Let $\varepsilon > 0$.
	Suppose $\left(- \frac{2n}{CW} \m+I + \m+V \m+M^{-1} \m+U\right)^{-1}$
	exists and $\norm{\m+V \m+M^{-1} \m+U}_2 \leq 0.99 \frac{2n}{CW}$.
	We can compute a value $\widetilde{f}$ such that
	$\abs{\widetilde{f} - f(\m+X)} \leq \varepsilon$
	in expected time $\tO((mk + nk^2 + k^3)\lg(W/\varepsilon))$.
\end{corollary}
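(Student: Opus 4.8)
The plan is to reduce the claim directly to Proposition~\ref{prop:fast-opinions} by observing that the objective value is nothing but an inner product against the expressed opinions. Indeed, applying Eq.~\eqref{eq:dis-pol} to the combined graph with Laplacian $\m+L + \LXC$ gives $f(\m+X) = \v+s^\top(\m+I+\m+L+\LXC)^{-1}\v+s = \v+s^\top \zXC$. Hence the natural estimator is $\widetilde{f} = \v+s^\top \tzXC$, where $\tzXC$ is the approximation of $\zXC$ returned by Algorithm~\ref{alg:opinions}; the hypotheses we need (existence of the relevant inverse and the spectral-norm bound $\norm{\m+V\m+M^{-1}\m+U}_2 \le 0.99\cdot\frac{2n}{CW}$) are exactly those of Proposition~\ref{prop:fast-opinions}, so they carry over verbatim.

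Next I would control the approximation error with Cauchy--Schwarz: $\abs{\widetilde f - f(\m+X)} = \abs{\v+s^\top(\tzXC - \zXC)} \le \norm{\v+s}_2\, \norm{\tzXC - \zXC}_2$. Since the innate opinions satisfy $\v+s_i\in[-1,1]$, we have $\norm{\v+s}_2 \le \sqrt n$, so it suffices to invoke Algorithm~\ref{alg:opinions} with target accuracy $\varepsilon' = \varepsilon/\sqrt n$: then $\norm{\tzXC - \zXC}_2 \le \varepsilon'$ immediately yields $\abs{\widetilde f - f(\m+X)} \le \varepsilon$.

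Finally I would verify the running time. By Proposition~\ref{prop:fast-opinions}, computing $\tzXC$ to accuracy $\varepsilon'$ takes expected time $\tO((mk+nk^2+k^3)\lg(W/\varepsilon'))$, and forming the scalar $\v+s^\top\tzXC$ adds only $O(n)$, which is dominated. It then remains to absorb the change from $\varepsilon$ to $\varepsilon'=\varepsilon/\sqrt n$: since $\lg(W/\varepsilon') = \lg(W/\varepsilon) + \tfrac12\lg n = O(\lg(W/\varepsilon)+\lg n)$ and the $\tO(\cdot)$ notation already hides $\poly\log n$ factors, the bound collapses back to $\tO((mk+nk^2+k^3)\lg(W/\varepsilon))$, as claimed.

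I do not expect any genuine obstacle here; the corollary is essentially a one-line consequence of Proposition~\ref{prop:fast-opinions} together with the identity $f(\m+X)=\v+s^\top\zXC$ and one application of Cauchy--Schwarz. The only point requiring a moment's care is propagating the $\sqrt n$ factor from $\norm{\v+s}_2$ into the accuracy parameter and confirming that the extra $\lg n$ term this introduces is swallowed by the $\tO$ notation.
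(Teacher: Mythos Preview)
Your proposal is correct and matches the paper's proof essentially line for line: the paper also sets $\widetilde{f}=\v+s^\top\tzXC$, invokes Proposition~\ref{prop:fast-opinions} with accuracy $\varepsilon'=\varepsilon/\sqrt{n}$, and bounds $\abs{\widetilde f-f(\m+X)}$ via Cauchy--Schwarz together with $\norm{\v+s}_2\le\sqrt n$. Your explicit check that the extra $\lg n$ from $\varepsilon'$ is absorbed by the $\tO$ notation is a detail the paper leaves implicit.
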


\subsection{Gradient descent-based polarization minimization}
\label{sec:gradient-descent}

Next, we present our gradient descent-based polarization minimization (\ouralgo) algorithm. 
We start by presenting basic
facts about the gradient of our problem in the following proposition.
\sninline{The way we set the parentheses in the gradient is a bit odd at the
	moment. We should fix this at some point.}
 \begin{proposition}
\label{prop:gradient}
	The following three facts hold for the gradient of $f(\m+X)$ with respect to $\m+X$:
	\begin{enumerate}
		\item The gradient $\nabla_{\m+X} f(\m+X)$ is given by
		\begin{equation}
		\label{eq:gradient}
		\begin{split}
			&\nabla_{\m+X} f(\m+X) = \\
			& \frac{CW}{2n} (2\cdot \zXC \cdot \zXC^\top\cdot \m+Y^\top
			- (\zXC\odot \zXC) \cdot \v+1_k^\top
			- \v+1_n \cdot (\zXC^\top \odot \zXC^\top)\cdot \m+Y^\top ).
		\end{split}
		\end{equation}
		\item The function $f(\m+X)$ is $L$-smooth with
		$L=\frac{8CW}{\sqrt{n}} \cdot \norm{\v+s}_2 \cdot \norm{\m+Y}_2^2$,
		i.e., for all $\m+X_1,\m+X_2 \in Q$ it holds that
		\begin{align*}
			\norm{\nabla_{\m+X} f(\m+X_1) - \nabla_{\m+X} f(\m+X_2)}_F
				\leq \frac{8CW}{\sqrt{n}} \cdot \norm{\v+s}_2 \cdot \norm{\m+Y}_2^2
						\cdot \norm{\m+X_1 - \m+X_2}_F.
		\end{align*}
		\item Let $\varepsilon>0$. Suppose the conditions of
		Proposition~\ref{prop:fast-opinions} hold, then we can compute an
		approximate gradient
			$\widetilde{\nabla}_{\m+X} f(\m+X)$ such that
			$\norm{\widetilde{\nabla}_{\m+X} f(\m+X) - \nabla_{\m+X} f(\m+X)}_F \leq
			\varepsilon$
			in expected time $\tO((mk + nk^2 + k^3)\lg(W/\varepsilon))$.
	\end{enumerate}
\end{proposition}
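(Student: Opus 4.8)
The plan is to prove the three parts in order, since Part~2 builds on Part~1 and Part~3 on both.

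\emph{Part~1 (the gradient).} I would write $f(\m+X)=\v+s^\top \m+M(\m+X)^{-1}\v+s$ with $\m+M(\m+X)=\m+I+\m+L+\LXC$, and note that $\LXC=\diag(\AXC\v+1)-\AXC$ depends \emph{linearly} on $\m+X$ because $\AXC=\frac{CW}{2n}(\m+X\m+Y+\m+Y^\top\m+X^\top)$ does. Differentiating through the inverse, $\partial_{\m+X_{ij}}\m+M^{-1}=-\m+M^{-1}(\partial_{\m+X_{ij}}\LXC)\m+M^{-1}$, and substituting $\m+M^{-1}\v+s=\zXC$, gives $\partial_{\m+X_{ij}}f=-\zXC^\top(\partial_{\m+X_{ij}}\LXC)\zXC$. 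I would then use the Laplacian quadratic-form identity
\[
	\zXC^\top\LXC\zXC=\frac{CW}{2n}\sum_{a,b,j}\m+X_{aj}\,\m+Y_{jb}\,(\zXC_a-\zXC_b)^2 ,
\]
where the two summands of $\m+X\m+Y+\m+Y^\top\m+X^\top$ collapse into one by relabeling. Differentiating this explicit polynomial in the single entry $\m+X_{ij}$, expanding the square, and collapsing $\sum_b\m+Y_{jb}=1$ (row-stochasticity of $\m+Y$) produces three pieces, proportional to $\zXC_i(\m+Y\zXC)_j$, to $\zXC_i^2$, and to $(\m+Y(\zXC\odot\zXC))_j$, which repackage into exactly Eq.~\eqref{eq:gradient}. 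This is routine matrix calculus.

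\emph{Part~2 ($L$-smoothness).} Since $f$ is smooth on a neighborhood of the convex set $Q$, it suffices to bound $\frac{d^2}{dt^2}f(\m+X_0+t\Delta)\big|_{t=0}$ for all $\m+X_0,\m+X_0+\Delta\in Q$. Writing $\m+M_0=\m+I+\m+L+\m+L_{\m+X_0}$ and using that $\m+X\mapsto\m+L_{\m+X}$ is linear (so the perturbation of $\m+M_0$ in the direction $\Delta$ is $\m+L_\Delta$, the low-rank Laplacian with $\Delta$ substituted for $\m+X$), the second-order resolvent expansion gives
\[
	\frac{d^2}{dt^2}f(\m+X_0+t\Delta)\Big|_{t=0}=2\,\zXC^\top\m+L_\Delta\m+M_0^{-1}\m+L_\Delta\zXC\;\le\;2\,\norm{\m+M_0^{-1}}_2\,\norm{\m+L_\Delta}_2^2\,\norm{\zXC}_2^2 .
\]
Now $\m+L+\m+L_{\m+X_0}$ is positive semidefinite, so $\norm{\m+M_0^{-1}}_2\le1$, and $\norm{\zXC}_2=\norm{\m+M_0^{-1}\v+s}_2\le\norm{\v+s}_2$. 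It remains to bound $\norm{\m+L_\Delta}_2$ by $\norm{\Delta}_F$: the key point is that $\m+X_0$ and $\m+X_0+\Delta$ are row-stochastic, so $\Delta\v+1=\v+0$, and since also $\m+Y\v+1=\v+1$ the ``outgoing'' part of $\m+A_\Delta\v+1=\frac{CW}{2n}(\Delta\v+1+\m+Y^\top\Delta^\top\v+1)$ vanishes, leaving only $\frac{CW}{2n}\m+Y^\top\Delta^\top\v+1$, of $\ell_\infty$-norm at most $\frac{CW}{2\sqrt n}\norm{\m+Y}_2\norm{\Delta}_F$ by Cauchy--Schwarz ($\norm{\Delta^\top\v+1}_2\le\sqrt n\,\norm{\Delta}_F$); combined with $\norm{\m+A_\Delta}_2\le\frac{CW}{n}\norm{\m+Y}_2\norm{\Delta}_F$ (from $\norm{\Delta\m+Y}_2\le\norm{\Delta}_F\norm{\m+Y}_2$) this gives $\norm{\m+L_\Delta}_2\le\frac{2CW}{\sqrt n}\norm{\m+Y}_2\norm{\Delta}_F$ --- the source of the $1/\sqrt n$ --- and substituting back yields the claimed $L$-smoothness of $f$. (Equivalently one can bound $\norm{\nabla f(\m+X_1)-\nabla f(\m+X_2)}_F$ directly, using the resolvent identity to control $\zXCO-\zXCT$ in terms of $\m+L_\Delta$ and the fact that $\zXC\mapsto\nabla f$ in Eq.~\eqref{eq:gradient} is a fixed quadratic map whose increments telescope.)

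\emph{Part~3 (fast approximate gradient).} Because $\nabla f$ in Eq.~\eqref{eq:gradient} depends only on $\zXC$, on $\m+Y$, and on scalars, I would run Algorithm~\ref{alg:opinions} with a refined tolerance $\varepsilon'$ to obtain $\tzXC$ with $\norm{\tzXC-\zXC}_2\le\varepsilon'$ (Proposition~\ref{prop:fast-opinions}), and set $\widetilde\nabla_{\m+X}f(\m+X)$ to be Eq.~\eqref{eq:gradient} evaluated at $\tzXC$, computed so that no $n\times n$ matrix is ever materialized --- e.g.\ first the $k$-vectors $\m+Y\tzXC$ and $\m+Y(\tzXC\odot\tzXC)$, then the rank-one $n\times k$ products --- at extra cost $\bigO(nk)$. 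The error bound is the Lipschitz estimate underlying Part~2 applied to the perturbation $\zXC\mapsto\tzXC$: since $\tzXC$ still satisfies $\norm{\tzXC}_2\le\norm{\v+s}_2+\varepsilon'$ and $\norm{\tzXC}_\infty\le1+\varepsilon'$, the quadratic gradient map has Lipschitz constant $\bigO(\frac{CW}{\sqrt n}\norm{\m+Y}_2)$ on the relevant ball, so $\norm{\widetilde\nabla_{\m+X}f(\m+X)-\nabla_{\m+X}f(\m+X)}_F=\bigO(\frac{CW}{\sqrt n}\norm{\m+Y}_2)\,\varepsilon'$. Taking $\varepsilon'=\Theta\!\big(\varepsilon\sqrt n/(CW\norm{\m+Y}_2)\big)$ makes this at most $\varepsilon$; since $\lg(W/\varepsilon')=\bigO(\lg(nW/\varepsilon))$, the running time inherited from Proposition~\ref{prop:fast-opinions} remains $\tO((mk+nk^2+k^3)\lg(W/\varepsilon))$.

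\emph{Main obstacle.} Parts~1 and~3 are essentially bookkeeping once Part~2 is in place. The delicate step is the smoothness bound: one must use a \emph{spectral} estimate for $\m+L_\Delta$ (through $\norm{\Delta\m+Y}_2\le\norm{\Delta}_F\norm{\m+Y}_2$) rather than an entrywise one, exploit the hyperplane constraint of $Q$ --- so that $\m+X_1-\m+X_2$ has zero row sums --- to simplify the diagonal of $\m+L_\Delta$, and keep track of which factor should get the crude bound $\norm{\v+s}_2\le\sqrt n$ versus the sharper $\norm{\zXC}_\infty\le1$, as this is what fixes the precise constant in $L$.
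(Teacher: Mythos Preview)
Your Parts~1 and~3 are essentially right and close to the paper's arguments, though your Part~1 is cleaner: the paper actually derives Eq.~\eqref{eq:gradient} by feeding the expression into \texttt{matrixcalculus.org} and then simplifying the output in several passes, whereas you get there directly via the Laplacian quadratic form. For Part~3 the paper does exactly what you propose --- plug $\tzXC$ into Eq.~\eqref{eq:gradient}, then bound each of the three resulting error terms separately using Lemmas~\ref{lem:difference-hadamard} and~\ref{lem:difference-rank-one} --- so your sketch matches.

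The gap is in Part~2. Your second-derivative route does \emph{not} yield the stated constant. With $\norm{\m+M_0^{-1}}_2\le 1$, $\norm{\zXC}_2\le\norm{\v+s}_2$, and your bound $\norm{\m+L_\Delta}_2\le\tfrac{2CW}{\sqrt n}\norm{\m+Y}_2\norm{\Delta}_F$, the Hessian estimate $\tfrac{d^2}{dt^2}f\le 2\norm{\m+M_0^{-1}}_2\norm{\m+L_\Delta}_2^2\norm{\zXC}_2^2$ gives a Lipschitz constant of order $\tfrac{C^2W^2}{n}\norm{\v+s}_2^2\norm{\m+Y}_2^2$, which is \emph{quadratic} in $CW\norm{\v+s}_2/\sqrt n$, whereas the statement asks for $L=\tfrac{8CW}{\sqrt n}\norm{\v+s}_2\norm{\m+Y}_2^2$, linear in that quantity. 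So ``substituting back yields the claimed $L$-smoothness'' is not correct as written: you would prove \emph{a} smoothness bound, just not the one in the proposition.

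The paper instead follows the route you put in parentheses: it bounds $\norm{\nabla f(\m+X_1)-\nabla f(\m+X_2)}_F$ directly by splitting the three pieces of Eq.~\eqref{eq:gradient}, using Lemmas~\ref{lem:difference-hadamard} and~\ref{lem:difference-rank-one} to control each in terms of $\norm{\zXCO-\zXCT}_2$, and then invoking the resolvent identity $\m+M_1^{-1}-\m+M_2^{-1}=\m+M_1^{-1}(\m+L_{\m+X_2}-\m+L_{\m+X_1})\m+M_2^{-1}$ to bound that difference by $\norm{\v+s}_2\norm{\m+Y}_2\norm{\m+X_1-\m+X_2}_F$ (up to constants). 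The point is that this introduces only \emph{one} factor of $\norm{\m+L_\Delta}$ and one factor of $\norm{\v+s}_2$, not two of each, and the outer $\tfrac{CW}{2n}$ from the gradient formula then supplies the remaining scaling. Your parenthetical ``equivalently one can bound \ldots'' is therefore the approach you should make primary; your nice observation that $\Delta\v+1=\v+0$ on $Q$ is not needed for it (the paper never uses it), but it does no harm.
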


The gradient of our problem is given in Eq.~\eqref{eq:gradient} and in the
second point we show that it is Lipschitz continuous. Computing the
gradient exactly involves computing $\zXC$ exactly; however, this requires to
compute the matrix inverse $(\m+I + \m+L)^{-1}$, which is expensive for large graphs. 
Hence, in the third point we show that an approximate
gradient can be computed highly efficiently and with error guarantees.

Since we only have an approximate gradient, \ouralgo is an implementation of the
gradient descent method by d'Aspremont~\cite{d2008smooth}, who analyzed a method
of Nesterov~\cite{nesterov1983method} with approximate gradient.
\sninline{Do we have to explain gradient descent here?}
We use Kiwiel's algorithm~\cite{kiwiel2008breakpoint} to
compute the orthogonal projections $\ProjQ{\cdot}$ on our set of feasible
solutions~$Q$ in linear time, where we exploit that our constraints are
independent across different rows of $\m+X$.
The pseudo\-code of \ouralgo is given in Algorithm~\ref{alg:optimization} in
the appendix.

Algorithm~\ref{alg:optimization} takes as input the innate opinions $\v+s$,
the user--topic matrix $\m+X$, the influence--topic matrix $\m+Y$, 
the budget $\vartheta$, and the extra \emph{weight} parameter $C$. 
It returns $\m+X^{(T)}$ after a number of iterations~$T$.

In the following theorem we present error and running-time guarantees for
{\ouralgo}, which show that it converges to the optimal solution given enough iterations.
\begin{theorem}
\label{thm:approximation-guarantee}
	Let $\varepsilon>0$.
	Suppose at each iteration of {\ouralgo} the
	conditions of Proposition~\ref{prop:fast-opinions} are satisfied.
	Then {\ouralgo} computes a solution $\m+X^{(T)}$ such
	that $f(\m+X) - f(\m+X^*) \leq \varepsilon$ in expected time
	\[
	\tO\left(\sqrt{\varepsilon^{-1} \cdot CWkn} \cdot (mk + nk^2 + k^3)\lg(W/\varepsilon)\right),
	\]
	where $\m+X^*$ is the optimal solution for Problem~\ref{problem:min-dpi}.
\end{theorem}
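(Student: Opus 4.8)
The plan is to realize \ouralgo as an instance of d'Aspremont's inexact accelerated gradient method~\cite{d2008smooth} and to feed it the structural facts about $f$ and $Q$ established earlier. First I would record that Problem~\ref{problem:min-dpi} is a convex program amenable to projected accelerated gradient descent: the map $\m+X' \mapsto \LXCp$ is affine, since $\LXCp$ is the Laplacian of $\frac{CW}{2n}(\m+X'\m+Y + \m+Y^\top \m+X'^\top)$, which depends linearly on $\m+X'$; the image of $Q$ under this map is a convex set of Laplacians; and $f(\m+L) = \v+s^\top(\m+I+\m+L)^{-1}\v+s$ is convex on convex sets of Laplacians~\cite{nordstrom2011convexity,musco2018minimizing}, so the composition $\m+X'\mapsto f(\m+X')$ is convex. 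Moreover $Q$ is convex (a box intersected with row-wise hyperplanes), and because the constraints decouple across the rows of $\m+X'$, the Euclidean projection $\ProjQ{\cdot}$ can be computed row by row in total time $\tO(nk)$ using Kiwiel's breakpoint-search algorithm~\cite{kiwiel2008breakpoint}.

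Next I would invoke Proposition~\ref{prop:gradient}: $f$ is $L$-smooth on $Q$ with $L = \frac{8CW}{\sqrt n}\norm{\v+s}_2\norm{\m+Y}_2^2$, and for every $\delta > 0$ one can compute $\widetilde{\nabla}_{\m+X}f(\m+X)$ with $\norm{\widetilde{\nabla}_{\m+X}f(\m+X) - \nabla_{\m+X}f(\m+X)}_F \le \delta$ in expected time $\tO((mk + nk^2 + k^3)\lg(W/\delta))$. Plugging these into d'Aspremont's convergence analysis of Nesterov's method run with a $\delta$-approximate gradient yields, after $T$ iterations, a bound of the form $f(\m+X^{(T)}) - f(\m+X^*) = O\!\left(\frac{LR^2}{T^2}\right) + E(\delta,T,L,R)$, where $R = \norm{\m+X^{(0)} - \m+X^*}_F$ and the accumulated-inexactness term $E$ is polynomial in $\delta$, $T$, $L$ and $R$. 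Since the cost of one gradient evaluation grows only with $\lg(1/\delta)$, I would pick $\delta$ polynomially small in $\varepsilon/(nW)$ (and in $1/T$), which makes $E \le \varepsilon/2$ while inflating the per-iteration cost only by constant factors inside $\tO$; taking $T = \Theta(\sqrt{LR^2/\varepsilon})$ then makes the first term at most $\varepsilon/2$, so the output obeys $f(\m+X^{(T)}) - f(\m+X^*) \le \varepsilon$.

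It remains to estimate $L$ and $R$ in terms of the input parameters. From $\v+s_i \in [-1,1]$ we get $\norm{\v+s}_2 \le \sqrt n$; since $\m+Y$ is row-stochastic with nonnegative entries, every row has $\ell_2$-norm at most $1$, hence $\norm{\m+Y}_2 \le \norm{\m+Y}_F \le \sqrt k$, and therefore $L \le 8CWk$. Every row of $\m+X^{(0)}$ and of $\m+X^*$ lies in the probability simplex, so $\norm{\m+X^{(0)}_i - \m+X^*_i}_2 \le \sqrt 2$ for each $i$ and thus $R^2 \le 2n$. Consequently $LR^2 = O(CWkn)$ and $T = O(\sqrt{\varepsilon^{-1} CWkn})$. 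Each iteration computes one approximate gradient in expected time $\tO((mk+nk^2+k^3)\lg(W/\varepsilon))$ (absorbing $\lg(1/\delta) = O(\lg(nW/\varepsilon))$ into $\tO$), one projection onto $Q$ in time $\tO(nk)$, and a constant number of $O(nk)$-time matrix updates; multiplying the per-iteration cost by $T$ and using linearity of expectation over the randomized Laplacian solves gives the claimed expected running time $\tO\!\left(\sqrt{\varepsilon^{-1}\cdot CWkn}\cdot(mk+nk^2+k^3)\lg(W/\varepsilon)\right)$.

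I expect the main obstacle to be the error analysis of the inexact accelerated scheme: accelerated methods, unlike plain gradient descent, can amplify gradient errors across iterations, so d'Aspremont's bound has to be instantiated carefully to pin down how small $\delta$ must be and to confirm that the accumulated error $E$ can be driven below $\varepsilon/2$ at only a logarithmic runtime overhead. The remaining pieces --- convexity of the composed objective, substituting the smoothness constant, the $L$ and $R$ estimates, and the per-iteration bookkeeping --- are routine given Propositions~\ref{prop:fast-opinions} and~\ref{prop:gradient}.
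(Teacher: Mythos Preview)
Your proposal is correct and follows essentially the same route as the paper: instantiate d'Aspremont's inexact accelerated scheme, use Proposition~\ref{prop:gradient} for smoothness and the approximate gradient, bound $L$ via $\norm{\v+s}_2\le\sqrt n$ and $\norm{\m+Y}_2^2\le k$, bound the squared diameter of $Q$ by $O(n)$, and multiply the resulting iteration count by the per-iteration cost. The one place you overcomplicate matters is the ``main obstacle'' you anticipate: d'Aspremont's theorem states that if $\abs{\langle\widetilde\nabla f(x)-\nabla f(x),\,y-z\rangle}\le\delta$ for all $x,y,z\in Q$, then the suboptimality after $T$ steps is at most $Ld(\m+X^*)/(A_T\kappa)+3\delta$ --- the inexactness term is a fixed $3\delta$, \emph{not} polynomial in $T$, so there is no accumulation to control; one simply converts the Frobenius-norm gradient error from Proposition~\ref{prop:gradient} into the required inner-product bound via Cauchy--Schwarz and the diameter of $Q$, choosing $\delta=\Theta(\varepsilon)$ and gradient precision $\Theta(\varepsilon/\sqrt n)$.
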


We note that in parameter settings that are realistic in practice, 
\ouralgo computes a solution with
multiplicative error at most $(1+\varepsilon')$ in time
$\tO(m\sqrt{n}\lg(1/\varepsilon'))$.
More concretely, this is the case when the number of topics $k=\polylg(n)$ is small, 
the fraction of additional edges $C=\bigO(1)$ is small,  and 
the network is sparse with $W=\tO(n)$. 
Additionally, it is realistic to assume that the
optimal solution still has a large amount of polarization and disagreement since
at least a constant fraction of the users will differ from the average opinion by at
least $0.01$; 
this argument implies that the polarization is at least $\LB = \Omega(n)$, 
which in turn implies that $f(\m+X^*) \geq \LB = \Omega(n)$. 
Hence, if in the theorem we set $\varepsilon = \varepsilon'\, \LB$,
we get the bound above.

\subsection{Baselines}
\label{sec:baseline}

Next, we introduce two greedy baseline algorithms. The baselines proceed in
iterations and, intuitively, in each iteration they update the user
timelines such that some topics are penalized and others are favored; 
the choice of these topics depends on the baseline.

More concretely, the baselines obtain as input the original graph and the
matrices $\m+X$, $\m+X^{(L)}\!$, $\m+X^{(U)}$, $\m+Y$ and a number~$T_{\max}$ of
iterations to perform. First, we set $\m+X^{(0)}\gets \m+X$. Now the algorithm
performs $T_{\max}$~iterations. In each iteration~$T$, we initialize
$\m+X^{(T)}\gets\m+X^{(T-1)}$. Then we manipulate the timeline of each user~$i$
by redistributing the weights in row~$i$ of $\m+X^{(T)}$.  We pick two
topics~$j$ and $j'$ and transfer as much weight as possible from topic~$j'$ to
topic~$j$.  Intuitively, one can think of $j$ as a topic that we want to
strengthen and $j'$ as a topic that we want to penalize; how these topics are
picked depends on the implementation of the baseline (see below). To denote how
much weight we can transfer, we set
$\delta \gets \min\{\m+X^{(U)}_{ij} - \m+X^{(T)}_{ij},
					\m+X^{(T)}_{ij'} - \m+X^{(L)}_{ij'}\}$,
i.e., $\delta$ corresponds to the weight that we can transfer from topic $j'$
to~$j$ without violating the constraints of Problem~\ref{problem:min-dpi}. Then we
set $\m+X^{(T)}_{ij} \gets \m+X^{(T)}_{ij} + \delta$ and
$\m+X^{(T)}_{ij'} \gets \m+X^{(T)}_{ij'} - \delta$. As stated before, we do this
for each user~$i$. Then the next iteration~$T+1$ starts.

\epara{Baseline 1: Strengthening non-controversial topics} (\blone). 
We introduce our first baseline (\blone), 
which aims to penalize controversial topics and to
strengthen non-controversial topics. We build upon the meta-algorithm above and
state how to pick the topics~$j$ and $j'$ for the current user~$i$.  First, we
compute $\tzXCit{T}$ using Algorithm~\ref{alg:opinions} and set
$\bar{z}=\frac{1}{n}\sum_{u\in V} \tzXCit{T}(u)$ to the average user opinion.
Also, for each topic~$j$ we set $\tau_j = \sum_{u\in V} \m+Y_{ju} \tzXCit{T}(u)$ to
the weighted average of the opinions of influential users for topic~$j$. Since this does
not depend on the user~$i$, this can be done at the beginning of each
iteration~$T$.  In \blone, we set $j$ to a controversial topic that is
``far away'' from the average opinion and $j'$ to a non-controversial topics
which is ``close'' to the average opinion.  More concretely, we let $j$ be the
topic with $\m+X^{(T)}_{ij} < \m+X^{(U)}_{ij}$ that minimizes
$\abs{\tau_j-\bar{z}}$; and we let $j'$ be the topic with $\m+X^{(T)}_{ij'} >
\m+X^{(L)}_{ij'}$ that maximizes $\abs{\tau_{j'}-\bar{z}}$.

\epara{Baseline 2: Strengthening opposing topics} (\bltwo).
Our second baseline (\bltwo) can be viewed as a reverse of the above strategy and is
inspired by the experimental outputs that we observed from \ouralgo:
it penalizes non-controversial topics and strengthens topics that
are opposing to user~$i$'s opinion. More concretely, we compute $\tzXCit{T}$ and
$\bar{z}$ as before. However, then we strengthen the topic~$j$ with
$\m+X'_{ij} < \m+X^{(U)}_{ij}$ that maximizes $-\tzXCit{T}(i) \tau_j$.
For instance, if $\tzXCit{T}(i) > 0$ then the algorithm will pick the topic $\tau_j<0$
of largest absolute value; note that since $\tzXCit{T}$ and $\tau_j$ must have
different signs, this corresponds to connecting user~$i$ to a topic that opposes
its own opinion.  Also, we let $j'$ be the topic with $\m+X'_{ij'} >
\m+X^{(L)}_{ij'}$ and $\tzXCit{T}(i)\tau_j>0$ that minimizes
$\abs{\tau_{j'}-\bar{z}}$; this corresponds to our choice of non-controversial
topics in \blone assuming that~$\tau_j$ has the same sign as $\tzXCit{T}(i)$.

\smallskip
 The pseudo\-code for \blone and \bltwo is presented in 
Algorithm~\ref{alg:baseline} in the appendix.

\section{Experimental evaluation}
\label{section:experimental-evaluation}

We evaluate our algorithms on 27~real-world datasets.  To conduct realistic
experiments, we collect two novel real-world datasets from \twitter, which we
denote \TwitterFive ($n=1\,011$, $m=1\,960$) and \TwitterFifty 
($n=27\,058$, $m=268\,860$). These two datasets contain ground-truth opinions
and we use retweet-information to obtain the aggregate information for the
interest--topic and in\-flu\-ence--topic matrices $\m+X$ and $\m+Y$.
We provide details on how this data was obtained in Section~\ref{sec:exp-setting}.
We make our novel datasets available online~\cite{Code_data} and we will
release them for research purposes upon acceptance of the paper; we note that
\TwitterFifty contains more than 27\,000~nodes and is thus almost 50~times
larger than the previoulsy largest publicly available dataset with ground-truth
opinions (which contains less than 550~nodes)~\cite{de2019learning}.
Appendix~\ref{sec:exp-setting} also provides details for the remaining
25~datasets. 

We experimentally compare \ouralgo  
against the greedy baselines \blone and \bltwo. We also compare our
gradient-descent algorithm against the off-the-shelf solver \cjl which uses the
SCS solver internally.
In our experiments, given $\m+X$ and a parameter $\vartheta \in [0,1]$, we set
$\m+X_{ij}^{(U)}\! = \min\{1, \m+X_{ij} + \vartheta\}$ and
$\m+X_{ij}^{(L)}\! = \max\{0, \m+X_{ij}-\vartheta\}$, when not mentioned
otherwise. We run \ouralgo with learning rate $L=10$ (see
Section~\ref{app:additional-exp} for a justification).

We conduct our experiments on a Linux workstation with a 2.90\,GHz Intel
Core i7-10700 CPU and 32\,GB of RAM.
Our code is written in Julia~v1.7.2 and available online~\cite{Code_data}.

\begin{figure}[t]
    \centering
    \begin{tabular}{cc}
        \includegraphics[width=0.45\columnwidth]{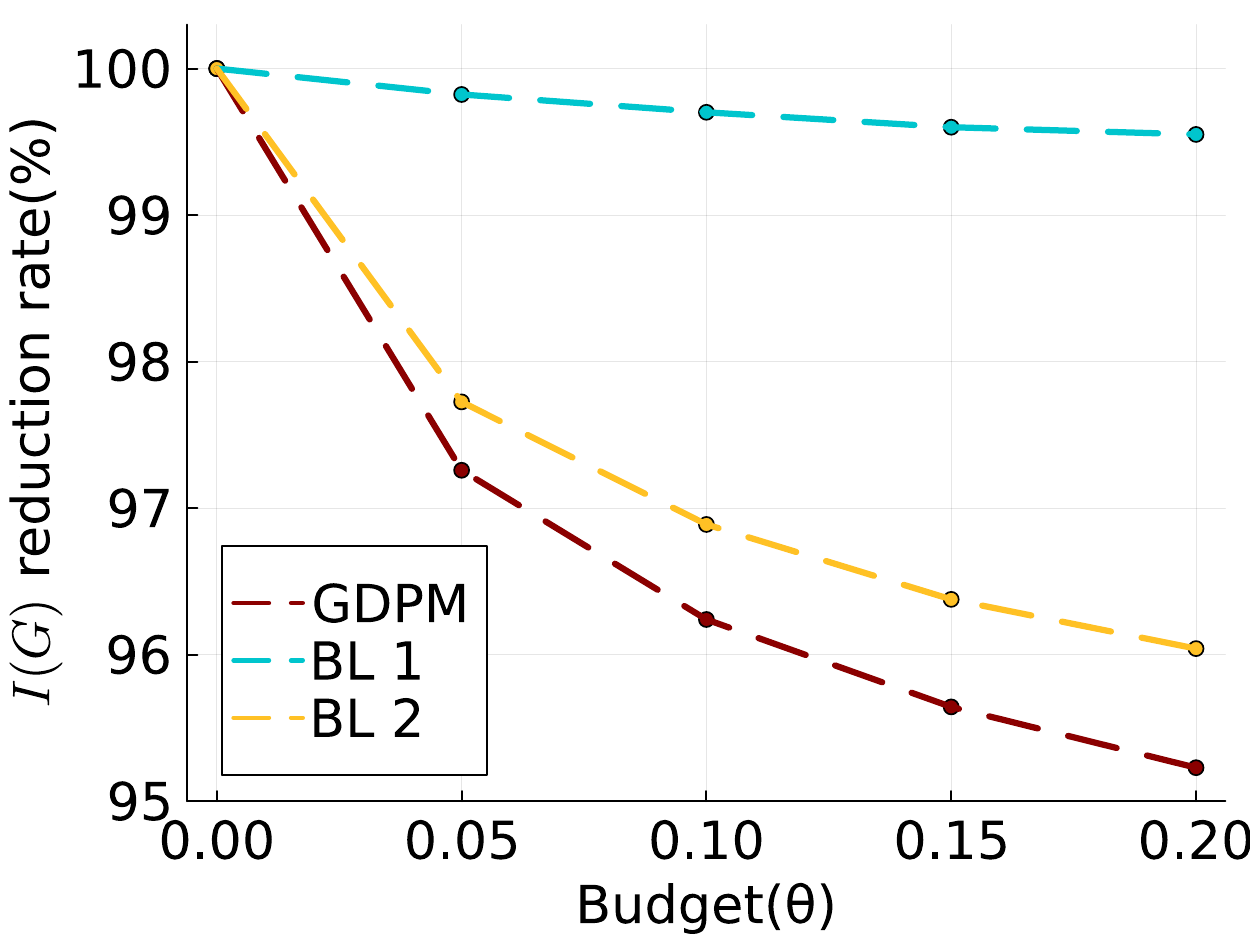} &
	\includegraphics[width=0.45\columnwidth]{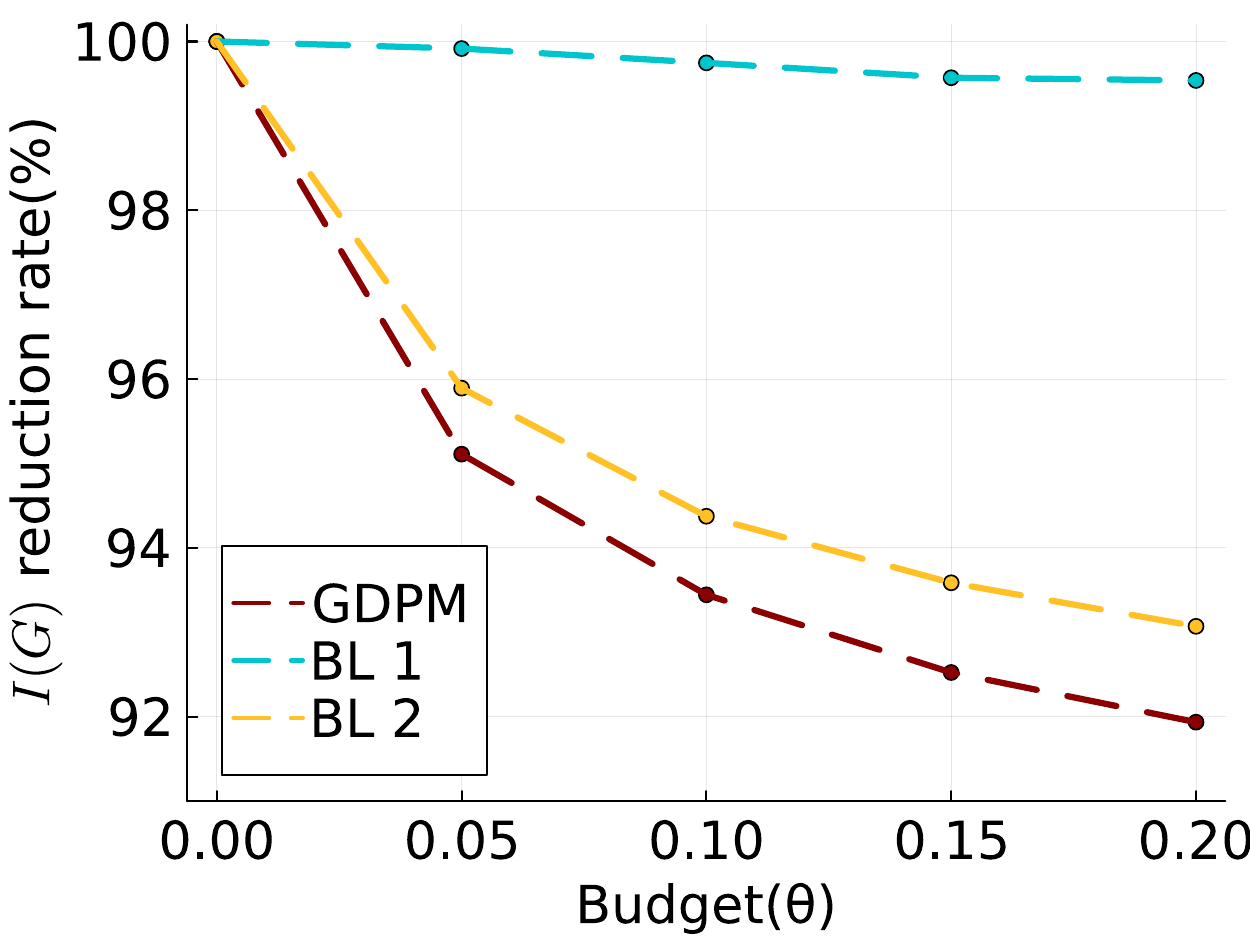} \\
        (a) \TwitterFive & (b) \TwitterFifty \\
        \includegraphics[width=0.45\columnwidth]{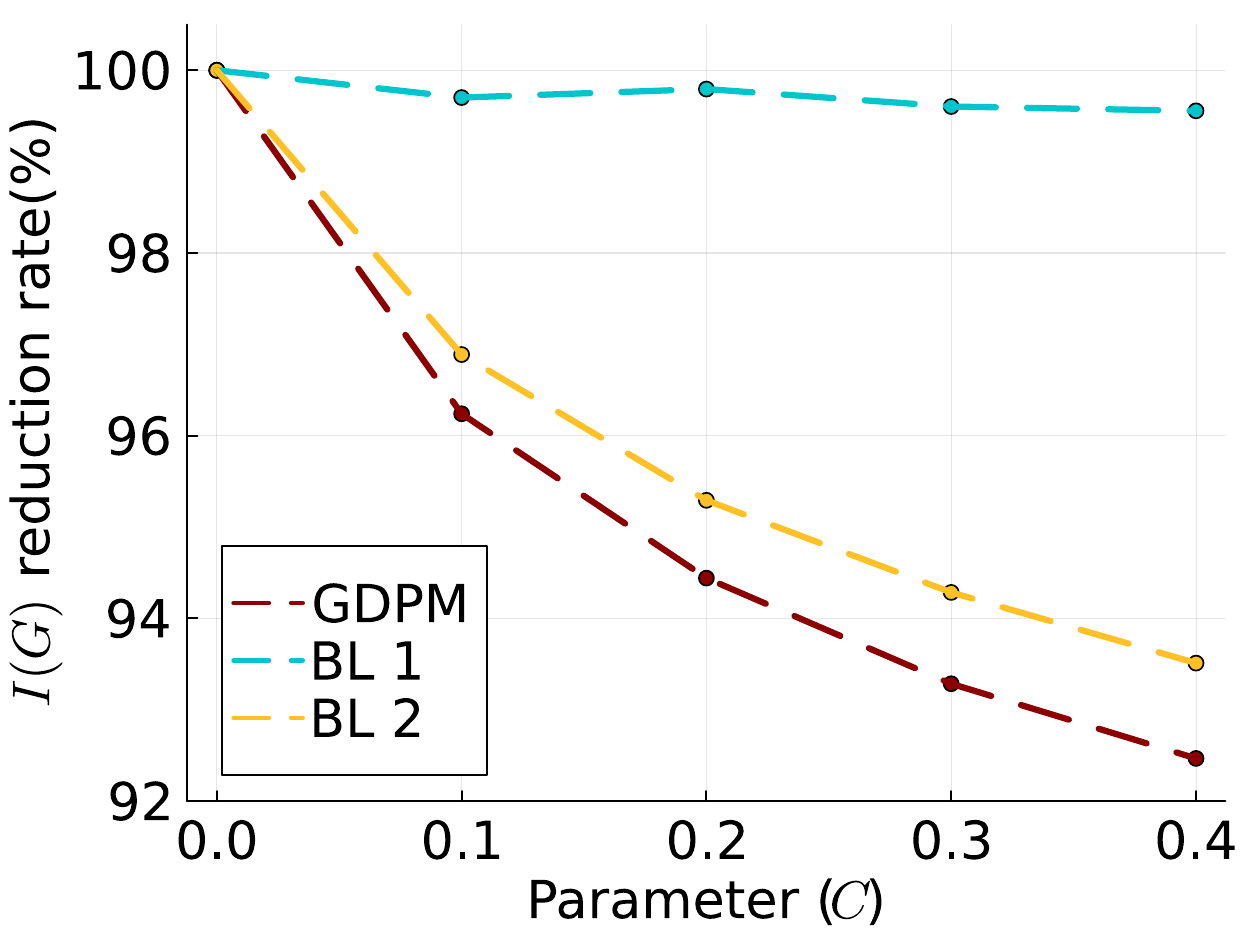} &
	\includegraphics[width=0.45\columnwidth]{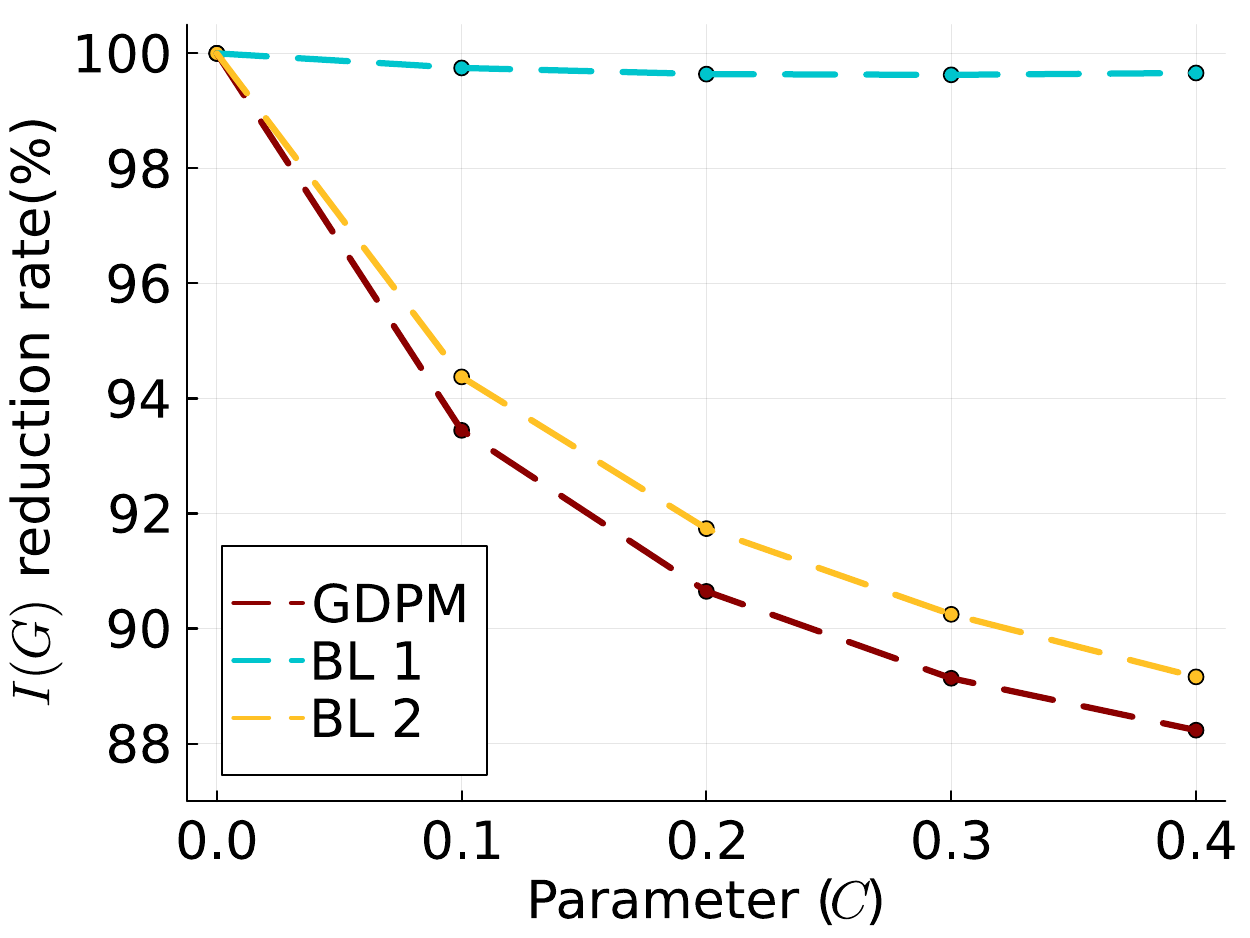} \\
        (c) \TwitterFive & (d) \TwitterFifty \\
    \end{tabular}
    \caption{\label{fig:alg-comparison}
		Reduction of the disagreement--polarization index on two datasets
		for all of our algorithms ($L=10$). The $y$-axis shows the reduction ratio ${f(\m+X_{\ALG})}/{f(\m+X)}$. In (a)-(b) we set $C=0.1$ and vary $\vartheta\in\{0.05,0.1,0.15,0.2\}$. In (c)-(d) we set $\vartheta=0.1$ and vary $C \in\{0.1,0.2,0.3,0.4\}$. }
    \label{fig:algo-comparison}
\end{figure}

\begin{table}[t]
\centering
\caption{Comparison of the reduction ratio $\frac{f(\m+X^{(T)})}{f(\m+X)}$(\%) of
	\ouralgo, \blone, and \bltwo on real-world graphs.
	In the experiments, we set $\vartheta =0.1$ and $C=0.1$. We used $L=10$ and $T=100$
	for \ouralgo and we set $T=10$ for the greedy baselines. We use synthetic
	opinion vectors, user-interest, and influence-topic matrices for all
	datasets except \TwitterFive and \TwitterFifty. }
\label{tab:convex-optimization}
\scalebox{0.95}{
\begin{tabular}{lccc}
\toprule

\textbf{Graph} & \textbf{\ouralgo} & \textbf{\blone} &  \textbf{\bltwo} \\
\midrule
\TwitterFive     & \textbf{96.24} & 99.70 & 96.88 \\
\TwitterFifty & \textbf{93.44} & 99.74 & 94.37\\
\midrule
\Erdos           & \textbf{94.34} & 100 & 94.97\\
\Advogato        & \textbf{91.36} & 100 & 92.59\\
\PagesGovernment & \textbf{87.82} & 100 & 88.62\\
\WikiElec        & \textbf{87.30} & 100 & 89.72\\
\HepPh           & \textbf{86.13} & 100 & 88.69\\
\Anybeat         & \textbf{92.17} & 100 & 93.21\\
\PagesCompany    & \textbf{92.41} & 100 & 93.28\\
\AstroPh         & \textbf{88.20} & 100 & 89.74\\
\CondMat         & \textbf{91.75} & 100 & 94.36\\
\Gplus           & \textbf{93.91} & 100 & 94.48\\
\Brightkite      & \textbf{93.02} & 100 & 93.99\\
\Themarker      & \textbf{85.62} & 100 & 89.30\\
\Slashdot        & \textbf{92.23} & 100 & 93.43\\
\WikiTalk        & \textbf{92.82} & 100 & 93.47\\
\Gowalla         & \textbf{91.79} & 100 & 92.63\\
\Academia        & \textbf{92.04} & 100 & 93.37 \\
\GooglePlus      & \textbf{86.43} & 100 & 88.28\\
\Citeseer        & \textbf{90.53} & 100 & 91.48\\
\MathSciNet      & \textbf{93.55} & 100 & 93.93\\
\TwitterFollows  & \textbf{94.22} & 100 & 95.59\\
\YoutubeSnap     & \textbf{93.58} & 100 & 94.76\\
\bottomrule
\end{tabular}
}
\end{table}

\spara{Comparison with greedy baselines and varying parameters.}
We compare \ouralgo against the baselines \blone and \bltwo on \TwitterFive and
\TwitterFifty, and vary the
parameters~$\vartheta$ and~$C$.  Note that since \ouralgo is guaranteed to
converge to an optimal solution, we expect it to outperform both baselines.

We report the results of all algorithms with varying 
$\vartheta \in \{0.05,0.1,\allowbreak 0.15,0.2\}$ in 
Figures~\ref{fig:alg-comparison}(a)--(b)
for \TwitterFive and \TwitterFifty.
As expected, \ouralgo obtains the largest reduction of the objective.
Furthermore, \bltwo outperforms \blone by a large margin.
This is not surprising, since we designed \bltwo based on insights
from analyzing the behavior of \ouralgo (see below);
the observed behavior thus suggests that our intuition about \ouralgo is correct.
In addition, we observe that the reduction in disagreement and polarization 
increases with~$\vartheta$.
This behavior aligns with our expectation, 
as larger values of $\vartheta$ enlarge the feasible space
and allow for more flexibility in recommending 
interesting topics.

In Figures~\ref{fig:alg-comparison}(c)--(d) we report the results of all algorithms with varying 
$C\in\{0.1,0.2,\allowbreak 0.3,0.4\}$ for \TwitterFive and \TwitterFifty.
The behavior of all algorithms remains consistent:
\ouralgo achieves the largest reduction, 
while \bltwo outperforms \blone.
As expected, the reduction in disagreement and polarization increases with $C$, 
since larger values of $C$ allow for more impact of the timeline algorithm.

Finally, we note that \ouralgo achieves a larger reduction on \TwitterFifty than
on \TwitterFive throughout all experiments. This is perhaps a bit surprising
since on both datasets we increase the total edge weight by a
$C$-fraction. However, the average node degree of \TwitterFifty is
larger than for \TwitterFive.  Furthermore, the user--topic matrix $\m+X$ and
influence--topic matrix $\m+Y$ have different structure for $\TwitterFive$
and $\TwitterFifty$, which results in the low-rank adjacency matrix~$\AXC$
containing 25\% and 33\% of non-zero entries, respectively. We believe that both
of these characteristics of the datasets lead to higher connectivity in
$\TwitterFifty$, which results in better averaging of the opinions and thus
ultimately in less polarization and disagreement.

\spara{Performance of the optimization algorithms.}
We report the optimization results of \ouralgo, \blone, and \bltwo in
Table~\ref{tab:convex-optimization}. We used different real-world graphs with
synthetically generated polarized opinions and synethically generated matrices~$\m+X$ and
$\m+Y$. We run the greedy baselines 10~iterations due to the high computation
cost and choose the best $\m+X^{(T)}$ as output in our experiments. 

We observe that \ouralgo outperforms the two baselines on all graphs. This is
the expected behavior, since \ouralgo guarantees decreasing the objective
function constantly and converges to the optimal solution; the two baselines
have no such property. Across all datasets, \ouralgo decreases the polarization
and disagreement by at least 5.6\% and by up to 14.4\%. Furthermore, \bltwo
outperforms \blone by a large margin and its results are often not much worse
than those of \ouralgo. Interestingly, \blone cannot reduce the polarization and
disagreement on all graphs. 

\spara{Comparison with off-the-shelf convex solver.}
Since Problem~\ref{eq:problem} is convex, we compare our gradient-descent based
algorithm \ouralgo against \cjl. \cjl is a popular off-the-shelf convex optimization
tool written in Julia and we use it with the SCS solver. Our experiments show
that \ouralgo is orders of magnitude more efficient than \cjl. In particular,
 even though for this experiment we used
102~GB of RAM, running \cjl on graphs with more than 500~nodes exceeds the
memory constraint. In contrast, \ouralgo scales up to graphs with millions of
nodes and edges (see Tables~\ref{tab:convex-optimization}
and~\ref{tab:running-time-error}). For the details of these experiments, see
Section~\ref{app:additional-exp}. Here, one of the bottlenecks for \cjl is that
it cannot access our efficient opinion estimation routine from
Proposition~\ref{prop:fast-opinions}. In the appendix
(Table~\ref{tab:running-time-error}) we show that the routine from the
proposition is indeed orders of magnitude more efficient than estimating the
opinions using na\"ive matrix inversion.

\spara{Understanding the behavior of \ouralgo.}
Next, we perform experiments to obtain further insights into which topics 
are favored by \ouralgo and which ones are penalized.

\begin{figure*}[t]
  \centering
  \begin{tabular}{ccc}
  \includegraphics[width=0.66\columnwidth]{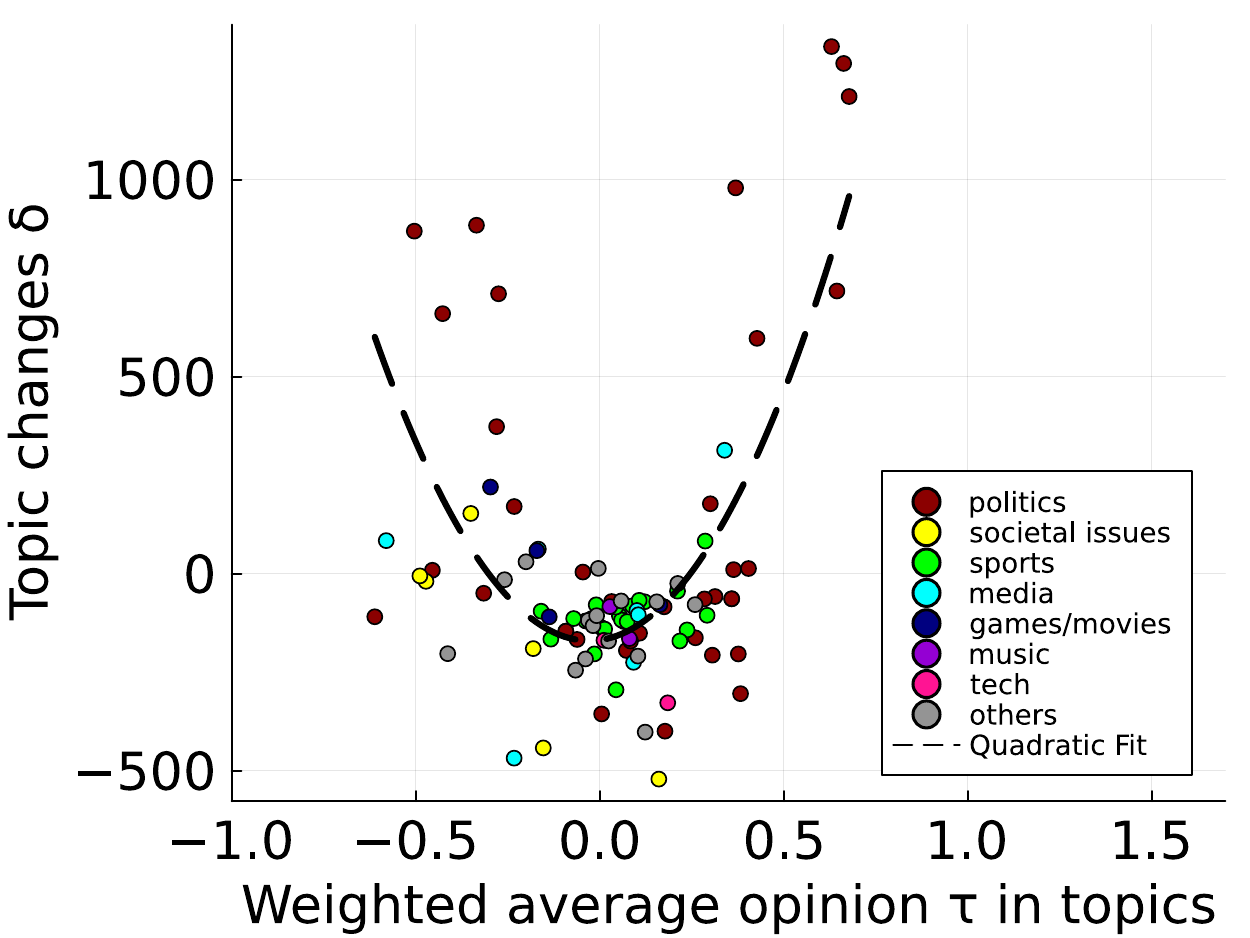} &
  \includegraphics[width=0.66\columnwidth]{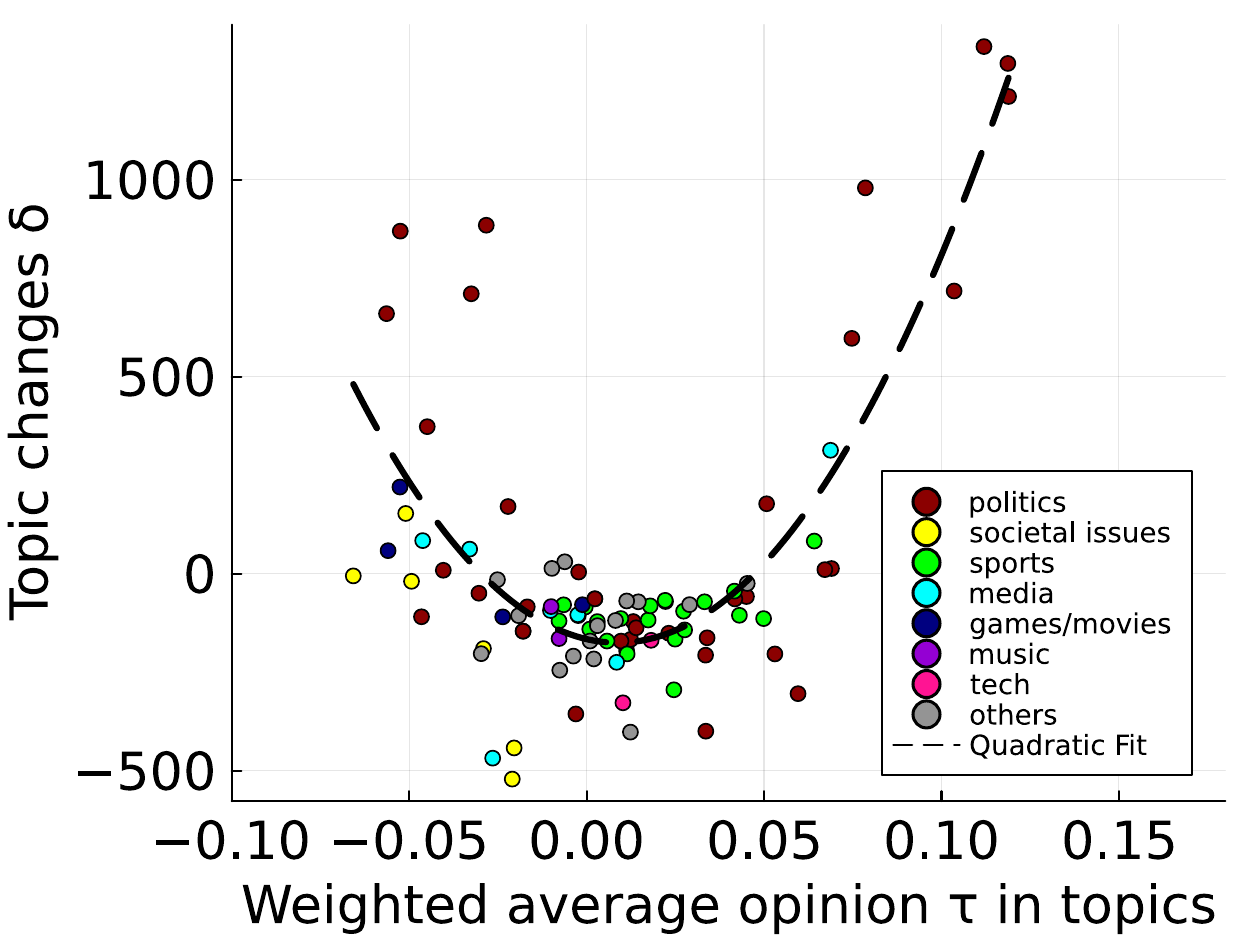} &
  \includegraphics[width=0.66\columnwidth]{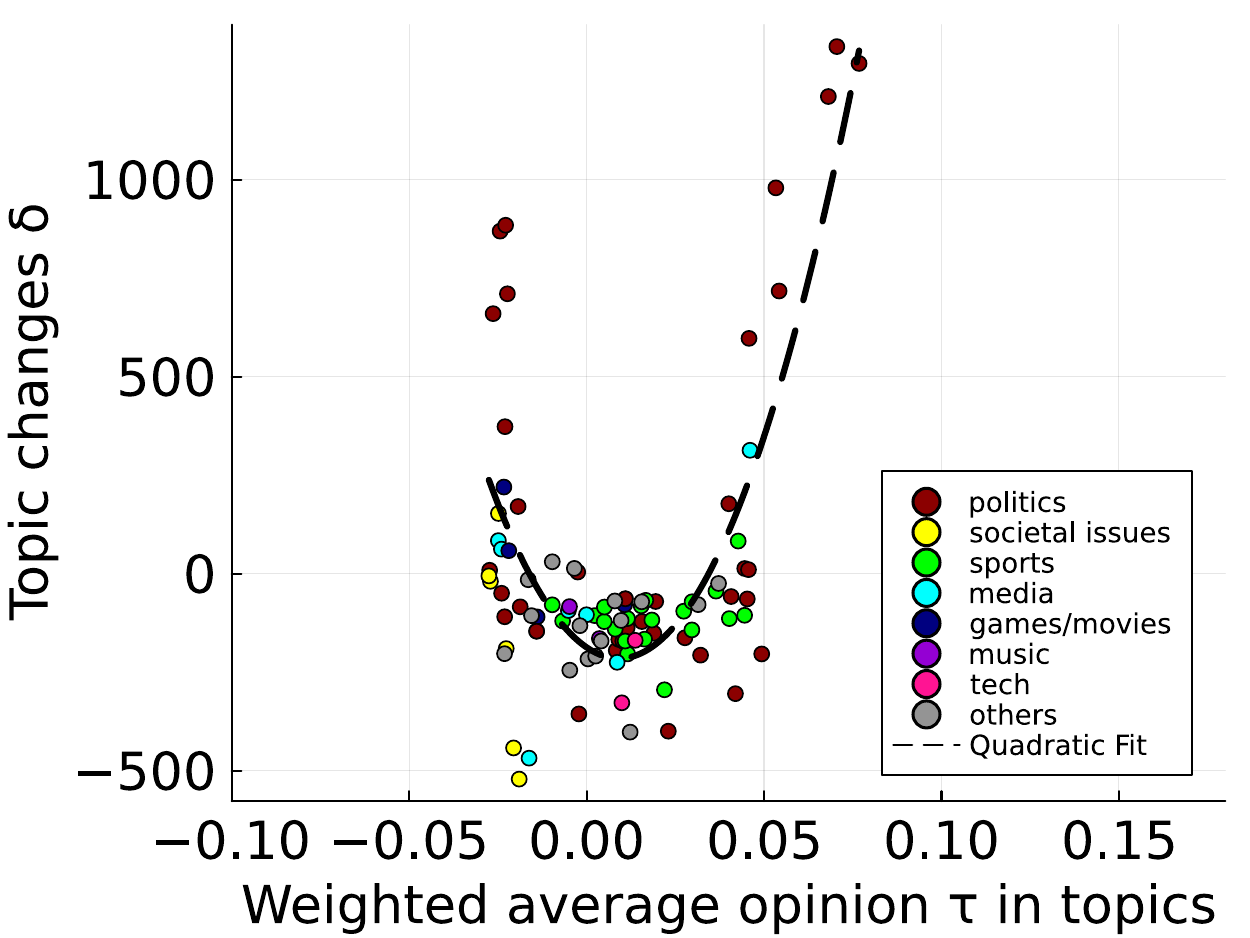} \\
  (a) & (b) & (c) \\
    \includegraphics[width=0.66\columnwidth]{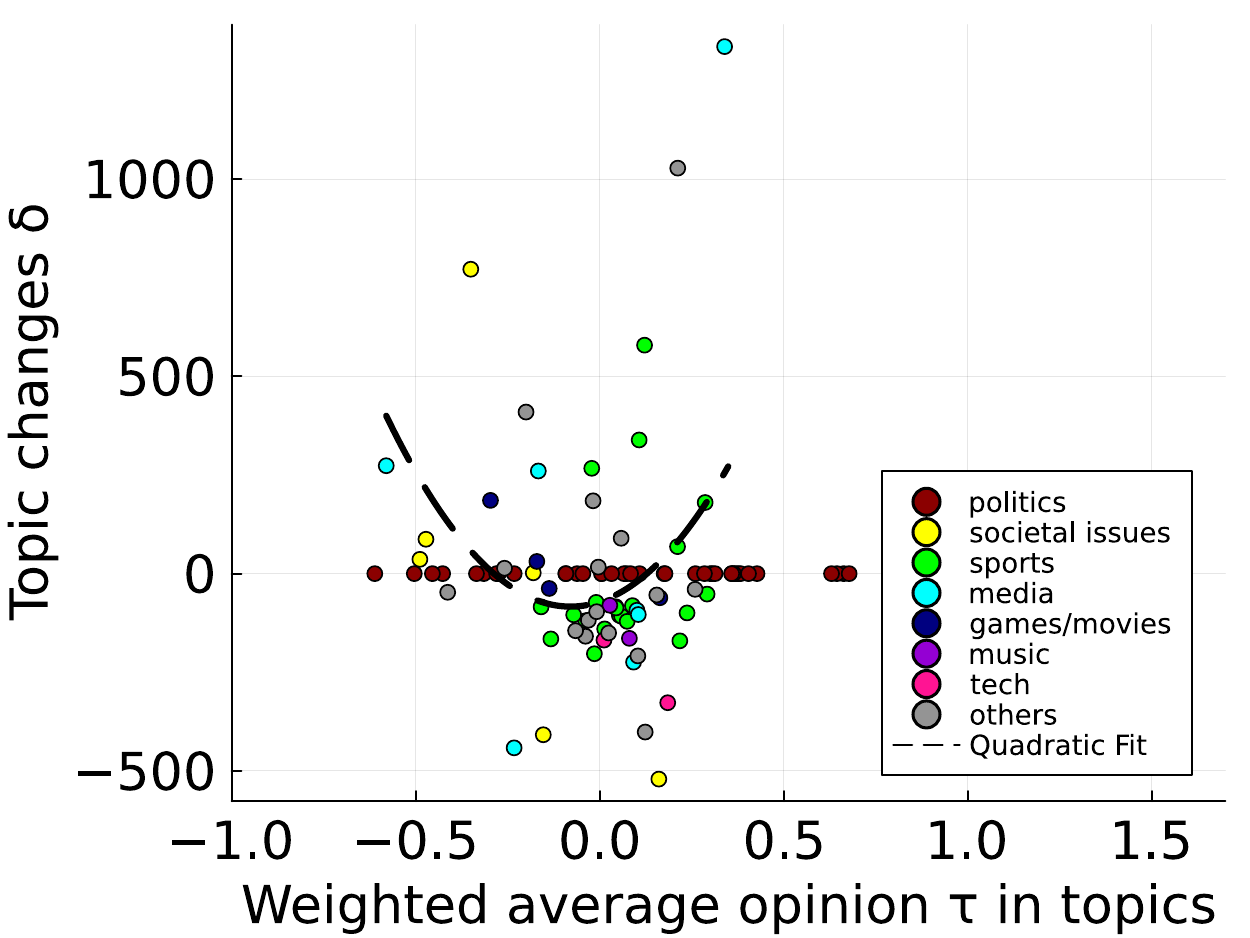} & 
  \includegraphics[width=0.66\columnwidth]{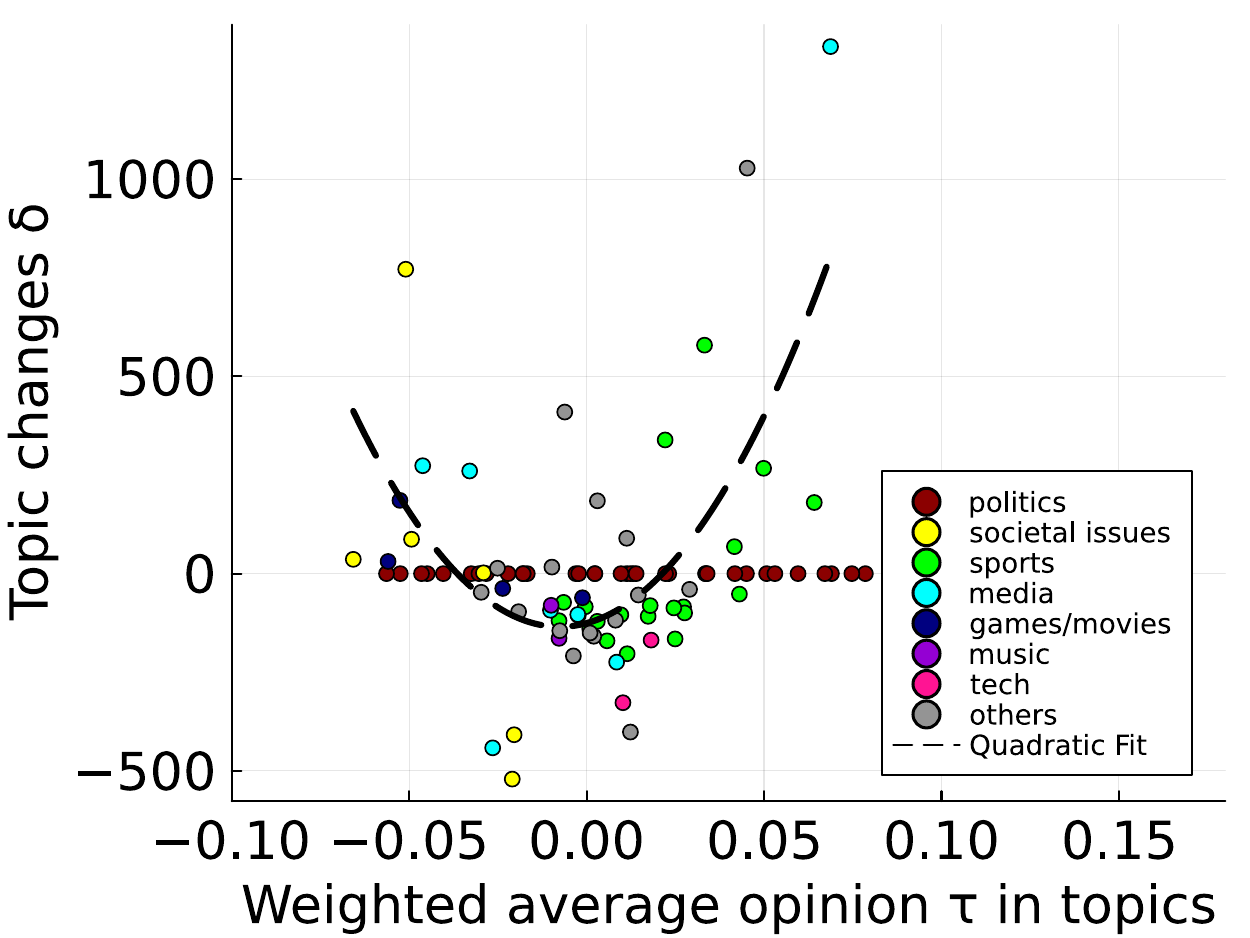} & 
  \includegraphics[width=0.66\columnwidth]{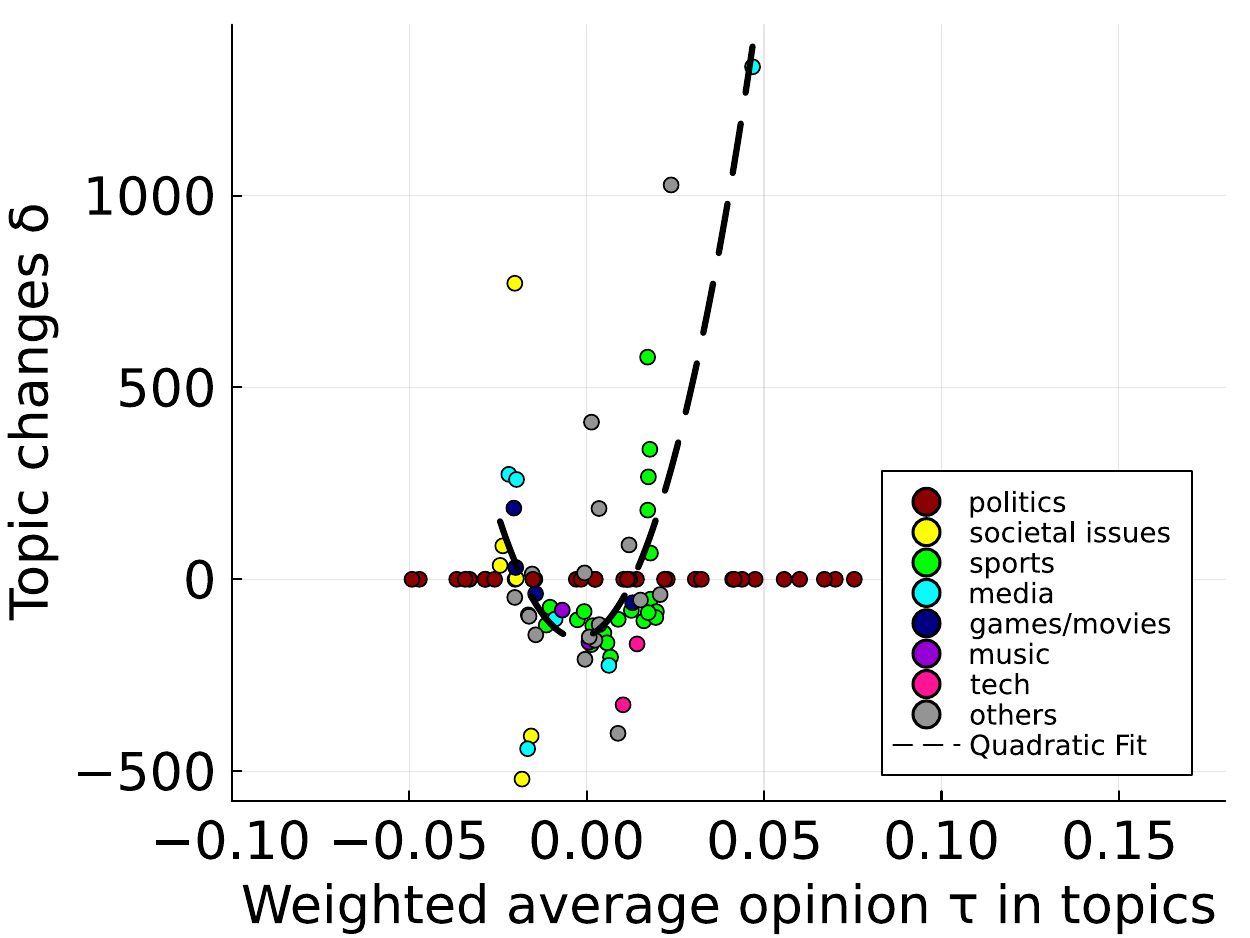} \\
  (d) & (e) & (f) \\
  \end{tabular}
  \caption{
  \label{fig:behavior-fifty}
	  Behavior of {\ouralgo} on \TwitterFifty
  	($\vartheta=0.1$, $C=0.1$, $L=10$).
	  We report the change of topic importance ($y$-axis) and the 
	  weighted average of the opinions of influential users for each topic ($x$-axis):
	  (a) weighted innate opinions~$\tau_{j,\v+s}$;
	  (b) weighted expressed opinions before optimization~$\tau_j$;
	  (c) weighted expressed opinions after optimization~$\tau_j$.
   (d)---(f) repeat the same plots when the algorithm must not change interest in political topics.
	  For reference, the results are fitted with a quadratic function.}
  \label{fig:behavior-innate-fifty}
\end{figure*}

To answer this question,
we consider the initial interest matrix~$\m+X$ and the 
matrix~$\m+X^{(T)}$ obtained after \ouralgo converged. 
To quantify the behavior of \ouralgo, 
we consider the column changes among $\m+X$ and $\m+X^{(T)}$. 
Specifically, for each topic~$j$, we measure the change of its weight 
given by $\delta_j = \sum_{i} \m+X^{(T)}_{ij} - \sum_{i} \m+X_{ij}$.
Note that $\delta_j>0$ indicates that topic~$j$ has more weight in
$\m+X^{(T)}_j$ than in~$\m+X$, i.e., \ouralgo ``favors'' it; 
similarly, $\delta_j<0$ indicates that topic~$j$ has less weight in
$\m+X^{(T)}_j$ than in~$\m+X$, i.e.,~\ouralgo ``penalizes'' it.

In Figure~\ref{fig:behavior-fifty}(a) we plot tuples
$(\tau_{j,\v+s}, \delta_j)$ for each topic~$j$, 
where $\delta_j$ is the change in importance for topic~$j$, 
as defined in the previous paragraph, 
and $\tau_{j,\v+s} = \sum_{u\in V} \m+Y_{ju} \v+s(u)$ is the weighted
average of the innate opinions of the influencers for topic~$j$. 
We also color-code the topics based on their content.  
We observe that \ouralgo clearly favors
topics with large absolute values~$\abs{\tau_{j,\v+s}}$ and it penalizes
non-controversial topics with $\abs{\tau_{j,\v+s}}$ close to~$0$. 
We explain this behavior as a consequence of the FJ model opinion dynamics: 
more controversial topics have a larger impact on the polarization, 
and to reduce the polarization one has to bring together people from opposing sides.

We note that in all plots, the most favored topics are political. 
This is surprising, as the algorithm is not aware of the topic labeling. 
However, we believe this is a consequence of the fact that 
political topics are among the most controversial (see also below).

In Figures~\ref{fig:behavior-fifty}(b)
and~\ref{fig:behavior-fifty}(c), 
we again show the $\delta_j$ values 
but this time plotted against $\tau_{j,\zXC}$ using the 
original expressed opinions~$\zXC$ (before optimization) and 
$\tau_{j,\zXCTT}$ using the final expressed opinions~$\zXCTT$ (after optimization). 
Qualitatively, we observe the same behavior as before, so that
more controversial topics are favored and non-controversial topics are penalized. 
Observe that now the $x$-axes have smaller scales, 
since the expressed opinions are contractions of the innate opinions.
Here, it is important to observe that before the optimization
(Figure~\ref{fig:behavior-fifty}(b)) the average topic opinions
were in $[-0.066,0.118]$ and after the optimization
(Figure~\ref{fig:behavior-fifty}(c)) they are  in $[-0.028,0.076]$.
Thus, the algorithm clearly brought all topics closer together.

Next, we study the behavior of \ouralgo when
we do not allow to make any changes on the accounts' interests in political topics, 
i.e., we set
 $\m+X^{(U)}_{ij}=\m+X_{ij}$ and $\m+X^{(L)}_{ij} = \m+X_{ij}$ for all political topics~$j$ and all accounts~$i$.
In Figures~\ref{fig:behavior-fifty}(d)--(f)
we show the same plots as in Figures~\ref{fig:behavior-fifty}(a)--(c),
when weight changes for political topics are not allowed.
We obtain the same qualitative outcome as before: 
controversial topics are favored and
non-controversial topics are~penalized.
We used these qualitative insights
to develop the second baseline algorithm \bltwo.

As expected, 
when weight changes for political topics are not allowed,
we obtain a restricted version of the problem
which limits the disagreement-polarization reduction.
For reference, in the setting of Figures~\ref{fig:behavior-fifty}(a)--(c), 
when weight changes for all topics are allowed, 
the disagreement-polarization index is reduced to $93.44\%$ of its original value. 
In contrast, in the setting of Figures~\ref{fig:behavior-fifty}(d)--(f),
with no changes on political topics,
the disagreement-polarization index is reduced to only $97.69\%$ of its original value. 

We stress that the above fine-grained analysis, that gives insights on which
should topics be penalized and favored to reduce the polarization and
disagreement in the FJ model, has only become possible due to the introduction
of our model from Section~\ref{sec:problem}. We believe that in the future
it is interesting to compare these insights with results from political science
and computational social sciences.

We include additional experiments, including a running-time analysis, in the
appendix.

\section{Conclusion}

We showed how to augment the popular FJ model to take into account aggregate
information of timeline algorithms. This allows us to bridge between
network-level opinion dynamics and user-level recommendations.
We then considered the problem of optimizing the timeline algorithm,
so as to minimize polarization and disagreement in the network,
and developed an efficient gradient-descent algorithm, \ouralgo, 
which computes an $(1+\varepsilon)$-approximate solution in
time~$\tO(m\sqrt{n}\log(1/\varepsilon))$ under realistic parameter settings.
We presented an algorithm that provably approximates our model, including the
measures of polarization and disagreement, in near-linear time.  
Our experiments confirm the efficiency and effectiveness of the proposed methods
and showed that our gradient-descent algorithm is orders of magnitude faster than
an off-the-shelf solver. %
We also release the largest graph datasets with ground-truth opinions.

We believe that our work provides several directions for future
research. 
First, extensions to directed graphs (and, hence, non-symmetric matrices) are
highly interesting. 
Second, inventing algorithms that are even more efficient than \ouralgo is
intriguing (for instance, in a setting with sublinear time/space).
Third, it will be valuable to 
consider other opinion-formation models, beyond the FJ model, and compare the results.
Fourth, it will be intriguing to design more complex models,
capturing real-world nuances,  
that allow us to bridge between opinion dynamics and properties
of present-day timeline algorithms.

\begin{acks}
This research has been funded by 
the ERC Advanced Grant REBOUND (834862), 
the EC H2020 RIA project SoBigData++ (871042), 
the Wallenberg AI, Autonomous Systems and Software Program (WASP) 
funded by the Knut and Alice Wallenberg Foundation, and
the Vienna Science and Technology Fund (WWTF) [Grant ID: 10.47379/VRG23013]. The computation was enabled by resources provided by the National Academic Infrastructure for Supercomputing in Sweden (NAISS) partially funded by the Swedish Research Council through grant agreement no. 2022-06725.
\end{acks}

\clearpage

\balance
\setcitestyle{numbers}
\bibliographystyle{plainnat}
\bibliography{main}

\appendix
\onecolumn

\section{Omitted pseudocode}
\label{sec:pseudocode}
In this section, we present the pseudocode of Algorithms~\ref{alg:opinions},~\ref{alg:optimization} and~\ref{alg:baseline}.

{\SetAlgoNoLine
 \LinesNumbered
 \DontPrintSemicolon
 \SetAlgoNoEnd
\begin{algorithm2e}[H]
\KwIn{Innate opinion $\v+s$, user--topic matrix $\m+X$, influence--topic matrix $\m+Y$, fraction of \emph{weight} $C$, error parameter $\varepsilon$}
\KwOut{Approximated expressed opinion $\tzXC$}
$\varepsilon_{\v+z_1} = \frac{\varepsilon}{4}
				\cdot \min\left\{
					1,
					\frac{2n}{200 \cdot CW \cdot \norm{\m+U}_2 \cdot \norm{\m+V}_2}
				\right\}$\;
$\varepsilon_{\m+R} =\frac{1}{2k}
			\min\left\{0.009 \frac{2n}{CW \cdot \norm{\m+V}_F},
			\frac{2n}{10^{5}\cdot CW \cdot \norm{\m+V}_2}
			\cdot \min\left\{ 100,
					\frac{2n}{CW} \cdot \frac{\varepsilon/4}
					{\norm{\m+U}_2 \cdot \norm{\m+V}_2
						\cdot \norm{\v+s}_2} \right\}
			\right\}$\;
\sninline{We should slightly change the constants such that they contain the
	Frobenius norm instead of the spectral norm (the Frobenius norm can be
	computed efficiently, while the spectral norm takes more time).}
$\varepsilon_{\v+z_2} = \frac{2n}{CW} \cdot \frac{\varepsilon}{4}$\;
$\m+M \gets \m+I + \m+L + \diag(\AXC\v+1)$\;
$\m+U \gets \begin{pmatrix} \m+X & \m+Y^\top \end{pmatrix}$,
	$\m+V \gets \begin{pmatrix} \m+Y \\ \m+X^\top \end{pmatrix}$\;
$\v+z_1 \gets \Solve(\m+M, \v+s, \varepsilon_{\v+z_1})$\;
$\v+y_1 \gets \m+V \v+z_1$\;
$\m+R \gets $ the $n\times (2k)$ matrix, where the $j$-th column is given by
	$\Solve(\m+M, \m+w_j,\varepsilon_{\m+R})$ with $\m+w_j$ denoting the $j$-th
	column of $\m+U$ for all $j$\;
$\m+S \gets \m+I - \frac{CW}{2n} \m+V \m+R$\;
$\m+T \gets \m+S^{-1}$\;
$\v+y_2 \gets \m+T \v+y_1$\;
$\v+y_3 \gets \m+U \v+y_2$\;
$\v+z_2 \gets \Solve(\m+M, \v+y_3, \varepsilon_{\v+z_2})$\;
\Return{$\tzXC \gets \v+z_1 + \frac{CW}{2n} \v+z_2$}\;
\caption{Compute an approximation $\tzXC$ of $\zXC$}
\label{alg:opinions}
\end{algorithm2e}
}

{\SetAlgoNoLine
 \LinesNumbered
 \DontPrintSemicolon
 \SetAlgoNoEnd
\begin{algorithm2e}[H]
\KwIn{Innate opinion $\v+s$, user--topic matrix $\m+X$, influence--topic matrix $\m+Y$, budget $\vartheta$, fraction of \emph{weight} $C$ }
\KwOut{User--topic matrix $\m+X^{(T)}$ after optimization }
$L \gets \frac{8CW}{\sqrt{n}} \cdot \norm{\v+s}_2 \cdot \norm{\m+Y}_2^2$\;
$\m+X^{(0)} \gets \m+X$\;
\For{$T = 1,\dots,O\left(\sqrt{\frac{CWkn}{\varepsilon}}\right)$}{
	Compute $\tzXCT$ using Algorithm~\ref{alg:opinions}\;
	$\widetilde{\nabla}_{\m+X} f(\m+X^{(T)}) \gets
		\frac{CW}{2n} (2\cdot \tzXCT \cdot \tzXCT^\top\cdot \m+Y^\top 
		- \tzXCT\odot \tzXCT\cdot \v+1_k^\top
		- \v+1_n \cdot (\tzXCT^\top \odot \tzXCT^\top) \cdot \m+Y^\top )$\;
	$\m+V^{(T)} \gets $ the matrix where the $i$-th row is given by
		$\ProjQ{\m+X^{(T)}_i-\frac{1}{L} (\widetilde{\nabla}_{\m+X} f(\m+X^{(T)}))_i}$\;
	$\alpha_{T} \gets \frac{T+1}{2}$\;
	$\m+W^{(T)} \gets $ the matrix where the $i$-th row is given by
		$\ProjQ{(\m+X^{(0)})_i
			- \frac{1}{2L}\sum_{t=1}^T \alpha_t (\widetilde{\nabla}_{\m+X} f(\m+X^{(t)}))_i}$\;
	$A_T \gets \sum_{i=0}^T \alpha_i$\;
	$\tau_{T} \gets \frac{\alpha_T}{A_T}$\;
	$\m+X^{(T+1)} \gets \tau_T \m+V^{(T)} + (1-\tau_T) \m+W^{(T)}$\;
}
\Return{$\m+X^{(T)}$}
\caption{\ouralgo}
\label{alg:optimization}
\end{algorithm2e}
}
\newpage

{\SetAlgoNoLine
 \LinesNumbered
 \DontPrintSemicolon
 \SetAlgoNoEnd
\begin{algorithm2e}[H]
\KwIn{Innate opinion $\v+s$, user--topic matrix $\m+X$, influence--topic matrix $\m+Y$, lower-bound matrix $\m+X^{(L)}$, upper-bound matrix $\m+X^{(U)}$, maximum iterations~$T_{max}$ }
\KwOut{User--topic matrix $\m+X^{(T)}$ after optimization }
$\m+X^{(0)} \gets \m+X$\;
\For{$T = 1,\dots,T_{\max}$}{
	$\m+X^{(T)} \gets \m+X^{(T-1)}$\;
	Compute $\tzXCit{T}$ using Algorithm~\ref{alg:opinions}\;
	$\bar{z}=\frac{1}{n}\sum_{u\in V} \tzXCit{T}(u)$\; %
	$\tau_j = \sum_{u\in V} \m+Y_{ju} \tzXCit{T}(u)$\;
	\For{ each row $i$}{
		\If{(\blone)}{
			$j \gets$ topic with $\m+X^{(T)}_{ij} < \m+X^{(U)}_{ij}$ 
				minimizing $\abs{\tau_j-\bar{z}}$\; 
			$j'\! \gets$\! topic with $\m+X^{(T)}_{ij'} > \m+X^{(L)}_{ij'}$
				minimizing $\abs{\tau_{j'}-\bar{z}}$\;
		}
		\If{(\bltwo)}{
			$j  \gets$ topic with $\m+X'_{ij} < \m+X^{(U)}_{ij}$ minimizing~$-\tzXCit{T}(i) \tau_j$\; 
			$j' \gets$ topic with $\m+X'_{ij'} > \m+X^{(L)}_{ij'}$ and
				$\tzXCit{T}(i)\tau_j>0$ minimizing $\abs{\tau_{j'}-\bar{z}}$
		}
		$\delta \gets \min\{\m+X^{(U)}_{ij} - \m+X^{(T)}_{ij}, \m+X^{(T)}_{ij'} - \m+X^{(L)}_{ij'}\}$\;
		$\m+X^{(T)}_{ij} \gets \m+X^{(T)}_{ij} + \delta$\;
	$\m+X^{(T)}_{ij'} \gets \m+X^{(T)}_{ij'} - \delta$\;
	}
}
\Return{$\m+X^{(T)}$}
\caption{Baselines \blone and \bltwo}
\label{alg:baseline}
\end{algorithm2e}
}

\section{Omitted experiments}
\subsection{Data Collection and Parameter Settings}
\label{sec:exp-setting}

\spara{Datasets.} We begin by describing our data-collection process.  Starting
from a list of \twitter accounts who actively engage in political discussions in
the US, which was compiled by Garimella and Weber~\cite{garimella2017long}, we
randomly sample two smaller subsets of $5\,000$ and $50\,000$ accounts,
respectively.  Since the dataset was more than 6 years old, only approximately
30-50\% of the accounts are still active or publicly accessible.  For these
accounts, we obtained the entire list of followers, except for users with more
than 100\,000 followers for whom we got only the 100\,000 most recent
followers (users with more than 100\,000 followers account for less than 2\% of
our dataset).  We also obtained the last 3\,200 tweets they posted on their own
timeline.  We use multiple \twitter-API keys and parallelize the data collection.
The data collection was started in March 2022 and took over a week to finish.

Based on this obtained information, we construct two graphs in which the nodes
correspond to \twitter accounts and the edges correspond to the accounts'
following relationships. Then we consider only the largest connected component
in each network and denote the resulting datasets \TwitterFive and \TwitterFifty
respectively.  In the end, \TwitterFive contains $1\,011$~nodes and
$1\,960$~edges.  \TwitterFifty contains $27\,058$~nodes and $268\,860$~edges.

To obtain the innate opinions of the nodes in the graphs, we proceed as follows.
First, we compute the political polarity score for each account using the method
proposed by Barber\'a~\citep{barbera2015birds}, which has been used widely in
the literature~\cite{brady2017emotion,boutyline2017social}.  The polarity scores
range from -2 to 2 and are computed based on following known political accounts.
To obtain the innate opinions~$\v+s$ of the retrieved accounts, 
we center the political scores to 0 and re\-scale them into the interval $[-1,1]$.  
We visualize the innate opinions of the accounts 
of \TwitterFive in Fig.~\ref{fig:opinions-twitter}(a) and 
of \TwitterFifty in Fig.~\ref{fig:opinions-twitter}(b). 
We observe that the distribution of the
opinions is relatively similar in both datasets, 
and that the opinion scores are significantly polarized.
A plausible explanation of this phenomenon is that 
our seed set consists of politically active accounts in the US, 
which are more likely to support one of the two extremes
of the political spectrum than having moderate~opinions.

\begin{figure}[t]
	\centering
	\begin{tabular}{cc}
	\includegraphics[width=0.4\columnwidth]{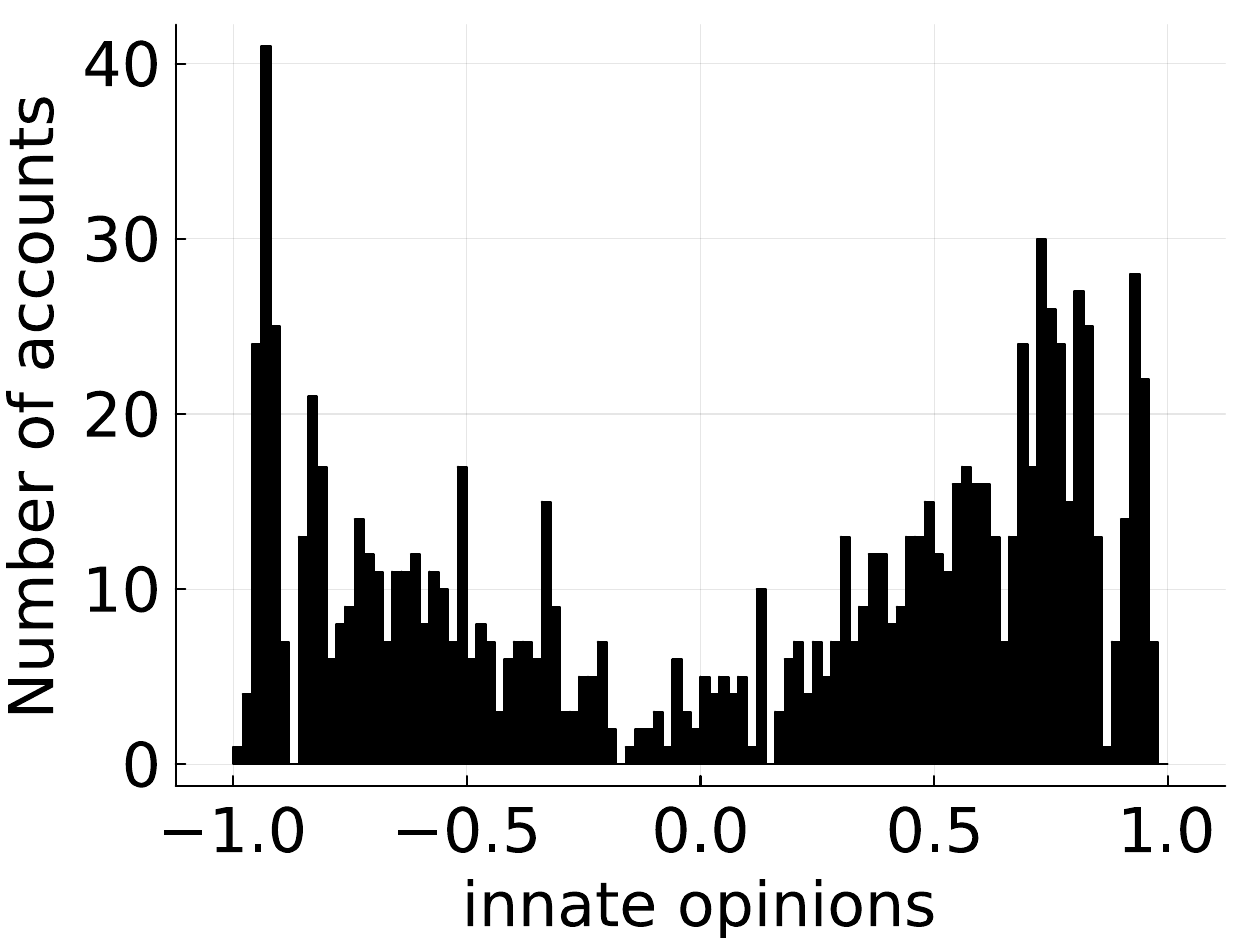} &
	\includegraphics[width=0.4\columnwidth]{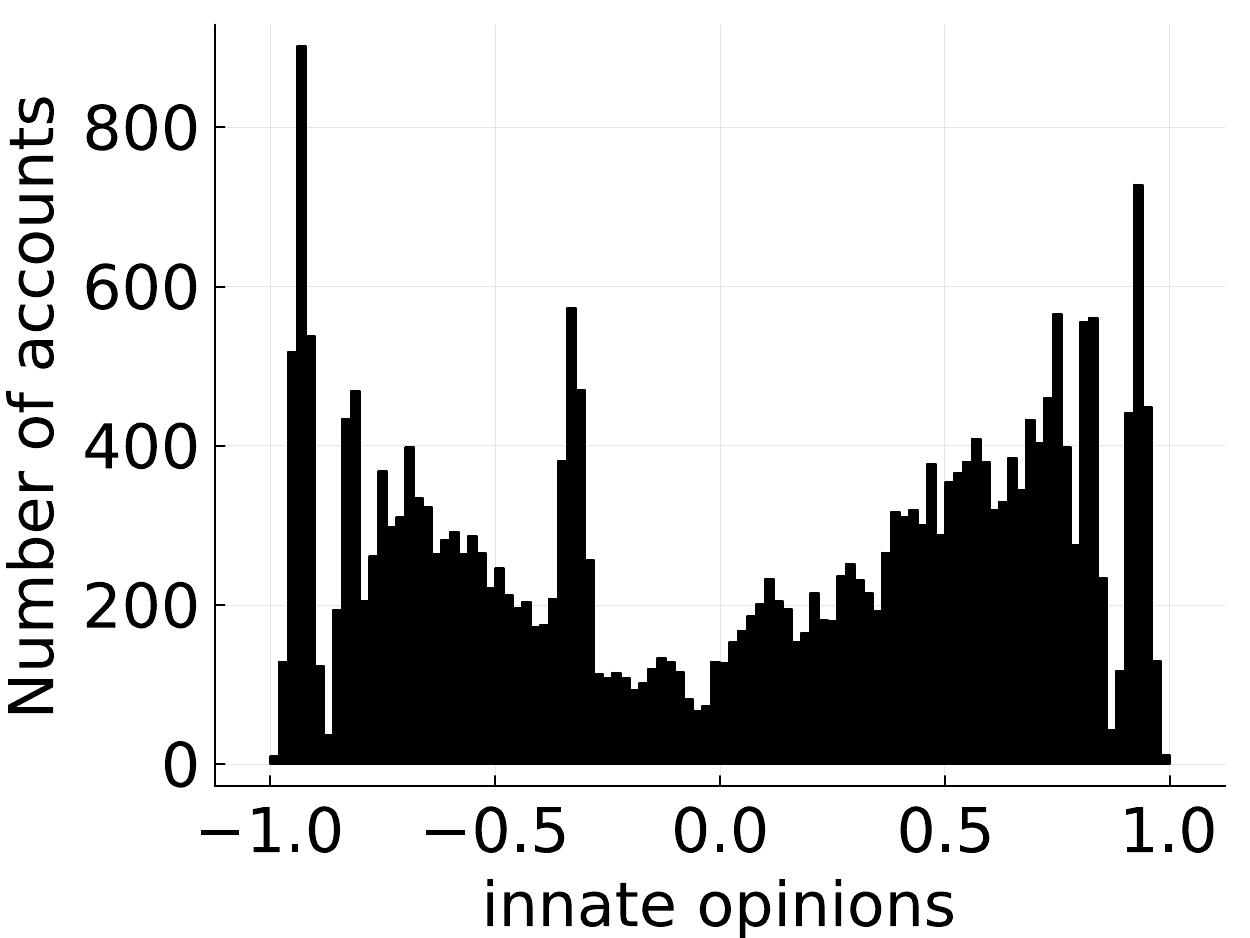} \\
	(a) \TwitterFive & (b) \TwitterFifty \\
	\end{tabular}
	\caption{
	Innate opinion distributions on our two real-world \twitter datasets.
	}
	\label{fig:opinions-twitter}
\end{figure}

\spara{User--topic and influence--topic matrices.} 
Next, we explain how we obtained the user--topic matrix~$\m+X$ and the
influence--topic matrix~$\m+Y$. We note that in an academic environment, it is
impossible to obtain these matrices exactly, since we cannot obtain data on how
the timelines of different users are composed and how the posts for each
topic are picked by the timeline algorithms that are deployed by online social
networks.  Therefore, we obtain~$\m+X$ and~$\m+Y$ by using retweet-data as a
surrogate, which indicates the users' interest and impact on different topics.
We now describe this process in detail.

We use textual information and hashtags in the tweets dataset to estimate the
interest of accounts and influential accounts in different topics.  More
concretely, we start by finding all hashtags that are used in the historical
tweets, and collect the hashtags used by each account.  We then apply {\tfidf}
on this data, where the documents correspond to accounts and the terms
correspond to hashtags.  The result gives a matrix~$\m+B$, in which each
entry~$\m+B_{uv}$ corresponds to the {\tfidf} score of account~$u$ for
hashtag~$v$.  Next, we apply non-negative matrix factorization~(NMF) on $\m+B$
to obtain topics from this matrix. NMF on a {\tfidf} matrix has been shown to
produce coherent topics in the past~\cite{kuang2015nonnegative}.
The NMF procedure produces two matrices: 
$\m+W \in \mathbb{R}^{n\times k}$ and $\m+H \in \mathbb{R}^{k\times h}$,
such that $\m+B \approx \m+W \m+H$. 
Here, $n$ is the number of accounts in the dataset,  
$h$ is the number of distinct hashtags, 
and the latent dimension~$k$ is the number of topics that we wish to find.
We systematically test different values of $k$ from 50--100 and find 
that for our data, $k=100$ produces the most reasonable topics.
Therefore, in our experiments we use $k=100$.

By the semantics of matrix factors in NMF, 
we interpret $\m+W_{ij}$ as an indicator of the interest of account~$i$ in
topic~$j$. Therefore, we set the $i$-th row of the interest matrix~$\m+X$ 
to $\m+X_i = {\m+W_i}/{\sum_{j=1}^k \m+W_{ij}}$, 
so as to satisfy the row-stochastic constraint, 
i.e., $\sum_j \m+X_{ij}=1$, for all~$i$.

Similarly, we interpret $\m+H_{jh}$ as the importance of hashtag~$h$ in topic~$j$. 
To avoid using hashtags that are too noisy, 
we consider only the most frequent hashtags
that make up for the 60\% of the volume of all hashtags.
We then set the influence--topic matrix~$\m+Y$ to the percentage of retweets that
an account receives for each topic. 
More concretely, we let $r_{ih}$ denote the number of retweets
for tweets posted by account~$i$ that contain hashtag~$h$. 
We set $\m+Y'$ to the matrix with 
$\m+Y_{ji}' = \sum_{h\in S_j} r_{ih}$, i.e., 
$\m+Y_{ji}'$ is the number of retweets for tweets posted by account~$i$ 
that contain hashtags assigned to topic~$j$. 
We then compute $\m+Y$ by normalizing the rows of $\m+Y'$, 
i.e., we set $\m+Y_{ji} = {\m+Y_{ji}'}/{\sum_{i=1}^n \m+Y_{ji}'}$ 
to ensure that $\m+Y$ is row stochastic.

\spara{Upper and lower bounds $\m+X^{(U)}$ and $\m+X^{(L)}\!$.} Given a matrix
$\m+X$ and a parameter $\vartheta \in [0,1]$, in our experiments (unless
mentioned otherwise) we construct the element-wise upper-bound matrix $\m+X^{(U)}$
and the lower-bound matrix $\m+X^{(L)}\!$ by setting 
$\m+X_{ij}^{(U)}\! = \min\{1, \m+X_{ij} + \vartheta\}$ and
$\m+X_{ij}^{(L)}\! = \max\{0, \m+X_{ij}-\vartheta\}$.
Intuitively, we can consider $\vartheta$ as a budget that the algorithm 
has to redistribute for each entry of $\m+X$. 
Note, however, that in the presence of topic label information, 
we can set topic-specific bounds.  
For example, if we set $\m+X_{ij}^{(U)}\! = \m+X_{ij}$, 
we then forbid the algorithm to increase account~$i$'s interest in topic~$j$.
We apply this idea in some of our experiments, 
by setting different bounds for political topics.
See Figures ~\ref{fig:behavior-fifty} (d)--(f) for details.

\spara{Additional datasets.}
To compare our algorithms across more datasets, we also consider several
real-world graphs for which we synthetically generate the innate opinions, the
user--topic matrix~$\m+X$, and the influenc--topic matrix~$\m+Y$.

The real-world graphs that we consider are publicly available from the Network
Repository~\cite{RossiA15}. Our experiments were conducted on the largest
connected component of each dataset.  Table~\ref{tab:running-time-error} lists
the networks that we consider in increasing order of the number of nodes. The
largest network has more than two million nodes, while the smallest one has
4\,991 nodes.

\begin{table*}[t]
\centering
\caption{Results of Algorithm~\ref{alg:opinions} for computing~$\tzXC$ on
	real-world graphs. We report graph statistics, comparison of running times
	(in seconds) and approximation errors for computing $\tzXC$ \emph{exactly}
	and computing~$\tzXC$ \emph{approximately} using
	Algorithm~\ref{alg:opinions}. We use four innate opinion distributions
	(uniform, power-law, exponential, and a custom ``polarized'' distribution).
	The synthetic user--topic and influence--topic matrices~$\m+X$ and $\m+Y$
	were drawn from the distributions described in the text. We set $C=0.1$.
	This experiment was conducted in a Linux server with E5-2630 V4 processor
	(2.2\,GHz) and 128\,GB memory. }
\label{tab:running-time-error}
\scalebox{0.7}{
\begin{tabular}{lccrrrrrrrrrrrr}
\toprule
\multirow{3}{*}{\textbf{Graph}} & \multirow{3}{*}{$n$} &
\multirow{3}{*}{$m$} & \multicolumn{12}{c}{\textbf{Running time\,(s)
	of evaluating $\zXC$ (\Exact) and $\tzXC$ (\Approx) with
		Algorithm~\ref{alg:opinions}, and approximation error\,(\Error, $\times
				10^{-8}$)}\smallskip} \\
 &  &  & \multicolumn{3}{c}{\Uniform} & \multicolumn{3}{c}{\Powerlaw} & \multicolumn{3}{c}{\Exponential} & \multicolumn{3}{c}{\Polarized} \\ \cmidrule(lr){4-6}  \cmidrule(lr){7-9} \cmidrule(lr){10-12} \cmidrule(lr){13-15}
 &  &  & \multicolumn{1}{c}{\Exact} & \multicolumn{1}{c}{\Approx} & \multicolumn{1}{c}{\Error} & \multicolumn{1}{c}{\Exact} & \multicolumn{1}{c}{\Approx} & \multicolumn{1}{c}{\Error} & \multicolumn{1}{c}{\Exact} & \multicolumn{1}{c}{\Approx} & \multicolumn{1}{c}{\Error} & \multicolumn{1}{c}{\Exact} & \multicolumn{1}{c}{\Approx} & \multicolumn{1}{c}{\Error} \\ \hline
\Erdos & 4,991 & 7,428 & 10.08 & 0.58 & 0.0108 & 9.57 & 0.49 & 0.0794 & 9.71 & 0.51 & 0.0794 & 9.71 & 0.51 & 0.0876 \\
\Advogato & 5,054 & 39,374 & 10.26 & 1.02 & 0.0180 & 10.26 & 1.12 & 0.1134 & 10.16 & 1.15 & 0.1134 & 10.16 & 1.15 & 0.2462 \\
\PagesGovernment & 7,057 & 89,429 & 26.17 & 1.76 & 0.0040 & 25.72 & 1.77 & 0.1553 & 25.94 & 1.71 & 0.1553 & 25.94 & 1.71 & 0.0597 \\
\WikiElec & 7,066 & 100,727 & 26.41 & 1.78 & 0.0057 & 26.12 & 1.53 & 0.0064 & 26.12 & 1.62 & 0.0064 & 26.12 & 1.62 & 1.0133 \\
\HepPh & 11,204 & 117,619 & 98.00 & 2.56 & 0.0095 & 97.89 & 2.73 & 0.1245 & 98.04 & 2.61 & 0.1245 & 98.04 & 2.61 & 0.0138 \\
\Anybeat & 12,645 & 49,132 & 140.91 & 1.82 & 0.0010 & 140.96 & 1.87 & 0.0132 & 141.75 & 2.07 & 0.0132 & 141.75 & 2.07 & 0.0073 \\
\PagesCompany & 14,113 & 52,126 & 195.51 & 2.45 & 0.0099 & 195.93 & 2.52 & 0.0154 & 194.27 & 2.46 & 0.0154 & 194.27 & 2.46 & 0.0473 \\
\AstroPh & 17,903 & 196,972 & 400.09 & 4.53 & 0.0282 & 398.54 & 4.86 & 0.0017 & 402.21 & 4.78 & 0.0017 & 402.21 & 4.78 & 0.0075 \\
\CondMat & 21,363 & 91,286 & 674.03 & 4.04 & 0.0011 & 669.62 & 4.05 & 0.1119 & 672.78 & 4.28 & 0.1119 & 672.78 & 4.28 & 0.0189 \\
\Gplus & 23,613 & 39,182 & 902.86 & 3.06 & 0.0002 & 919.68 & 2.62 & 0.0047 & 905.09 & 2.59 & 0.0047 & 905.09 & 2.59 & 0.0102 \\
\Brightkite & 56,739 & 212,945 & 13864.33 & 14.71 & 0.0007 & 13366.60 & 14.04 & 0.0119 & 14012.49 & 16.03 & 0.0119 & 14012.49 & 16.03 & 0.0720 \\
\Themarker & 69,317 & 1,644,794 & --- & 37.06 & --- & --- & 36.34 & --- & --- & 37.81 & --- & --- & 36.84 & --- \\
\Slashdot & 70,068 & 358,647 & --- & 16.01 & --- & --- & 14.76 & --- & --- & 14.45 & --- & --- & 14.47 & --- \\
\BlogCatalog & 88,784 & 2,093,195 & --- & 43.20 & --- & --- & 41.67 & --- & --- & 43.90 & --- & --- & 43.39 & --- \\
\WikiTalk & 92,117 & 360,767 & --- & 16.53 & --- & --- & 16.23 & --- & --- & 15.89 & --- & --- & 16.78 & --- \\
\Gowalla & 196,591 & 950,327 & --- & 56.87 & --- & --- & 51.58 & --- & --- & 53.04 & --- & --- & 53.13 & --- \\
\Academia & 200,167 & 1,022,440 & --- & 63.00 & --- & --- & 63.95 & --- & --- & 60.47 & --- & --- & 61.90 & --- \\
\GooglePlus & 201,949 & 1,133,956 & --- & 47.89 & --- & --- & 48.57 & --- & --- & 47.38 & --- & --- & 48.10 & --- \\
\Citeseer & 227,320 & 814,134 & --- & 46.57 & --- & --- & 47.23 & --- & --- & 46.45 & --- & --- & 46.76 & --- \\
\MathSciNet & 332,689 & 820,644 & --- & 68.91 & --- & --- & 62.26 & --- & --- & 67.23 & --- & --- & 61.37 & --- \\
\TwitterFollows & 404,719 & 713,319 & --- & 44.10 & --- & --- & 41.91 & --- & --- & 42.19 & --- & --- & 43.16 & --- \\
\Delicious & 536,108 & 1,365,961 & --- & 108.95 & --- & --- & 112.19 & --- & --- & 115.72 & --- & --- & 126.50 & --- \\
\YoutubeSnap & 1,134,890 & 2,987,624 & --- & 273.09 & --- & --- & 271.67 & --- & --- & 262.28 & --- & --- & 262.05 & --- \\
\Flickr & 1,624,992 & 15,476,835 & --- & 858.09 & --- & --- & 857.98 & --- & --- & 858.73 & --- & --- & 904.29 & --- \\
\Flixster & 2,523,386 & 7,918,801 & --- & 663.40 & --- & --- & 674.12 & --- & ---  & 644.32 & --- & --- & 653.05 & --- \\ 
\bottomrule
\end{tabular}}
\label{tab:optimization_simulation}
\end{table*}

Next, we consider four distributions to generate the innate opinions: uniform,
power-law, exponential, and a custom ``polarized'' distribution.  For the first
three distributions, we use the same parameter setting as
\citet{xu2021fast}. Note that they compute innate opinion $s\in[0,1]$ and here
we rescale the innate opinions to $[-1,1]$.  In the ``polarized'' distribution,
we mimic the opinion distribution from \TwitterFive and \TwitterFifty in
Figure~\ref{fig:opinions-twitter}, where the innate opinions tend to be
concentrated at the two opposite extremes, while sparsely distributed around
the middle. Thus, here we generate ``polarized'' opinions as follows.
For each node~$i$, we generate a value~$x_i$ based on the exponential opinion
distribution from above. 
Now for the first $n/2$ nodes we set their innate opinion to~$\v+s_i = x_i$ and for
the remaining $n/2$ nodes we set their opinion to $\v+s_i = 1-x_i$. Then we
rescale $\v+s$ such that all opinion are in $[-1,1]$.

We also compute synthetic user--topic matrices~$\m+X$ and influence--topic
matrices~$\m+Y$ by simulating properties of \TwitterFive and \TwitterFifty. 
More concretely, for \TwitterFifty we visualize the distribution of elements in
$\m+X$ and $\m+Y$ in Fig.~\ref{fig:X-Y-distribution}. It shows that the entries
in $\m+X$ and $\m+Y$ follow a power-law distribution.

\begin{figure}[t]
	\centering
	\begin{tabular}{cc}
	\includegraphics[width=0.4\columnwidth]{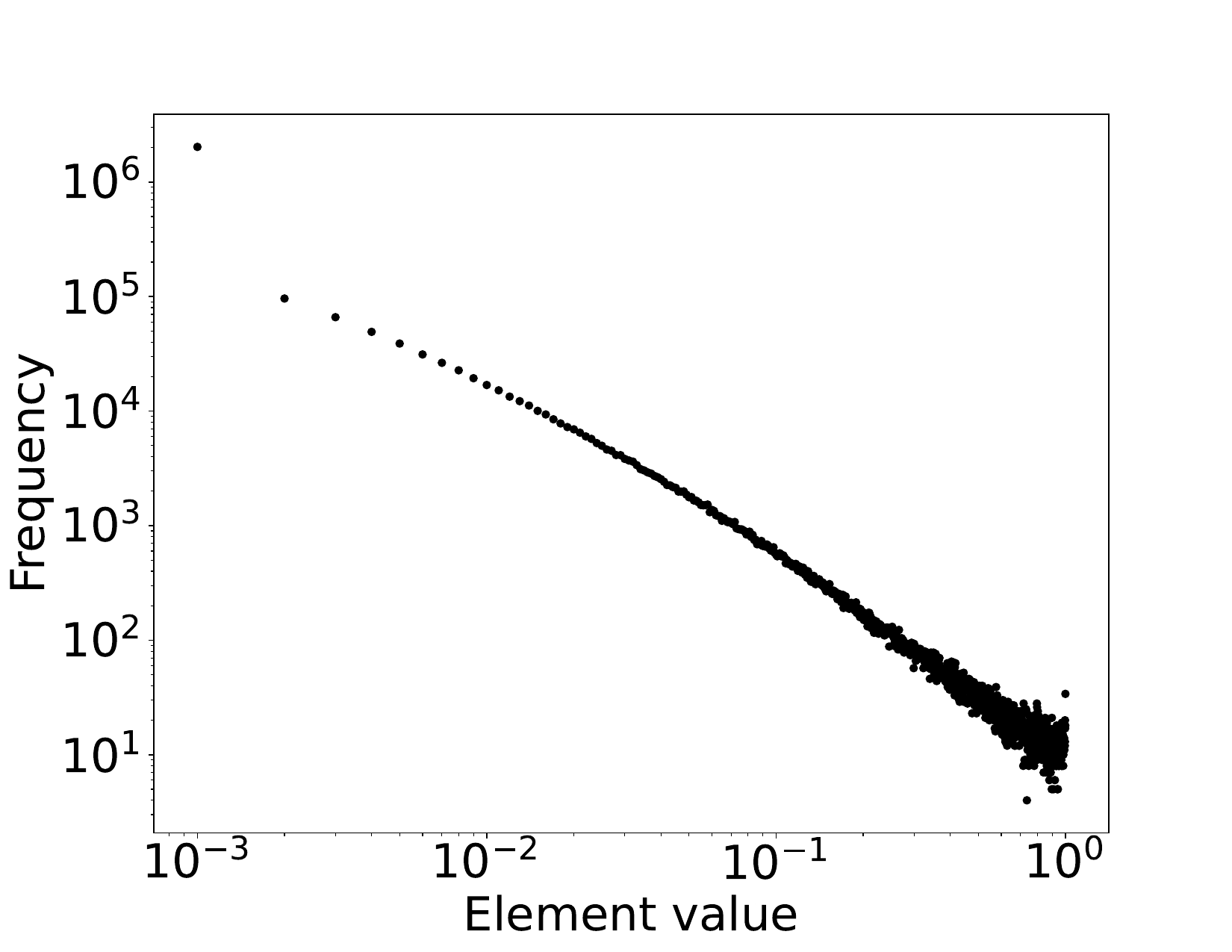} &
	\includegraphics[width=0.4\columnwidth]{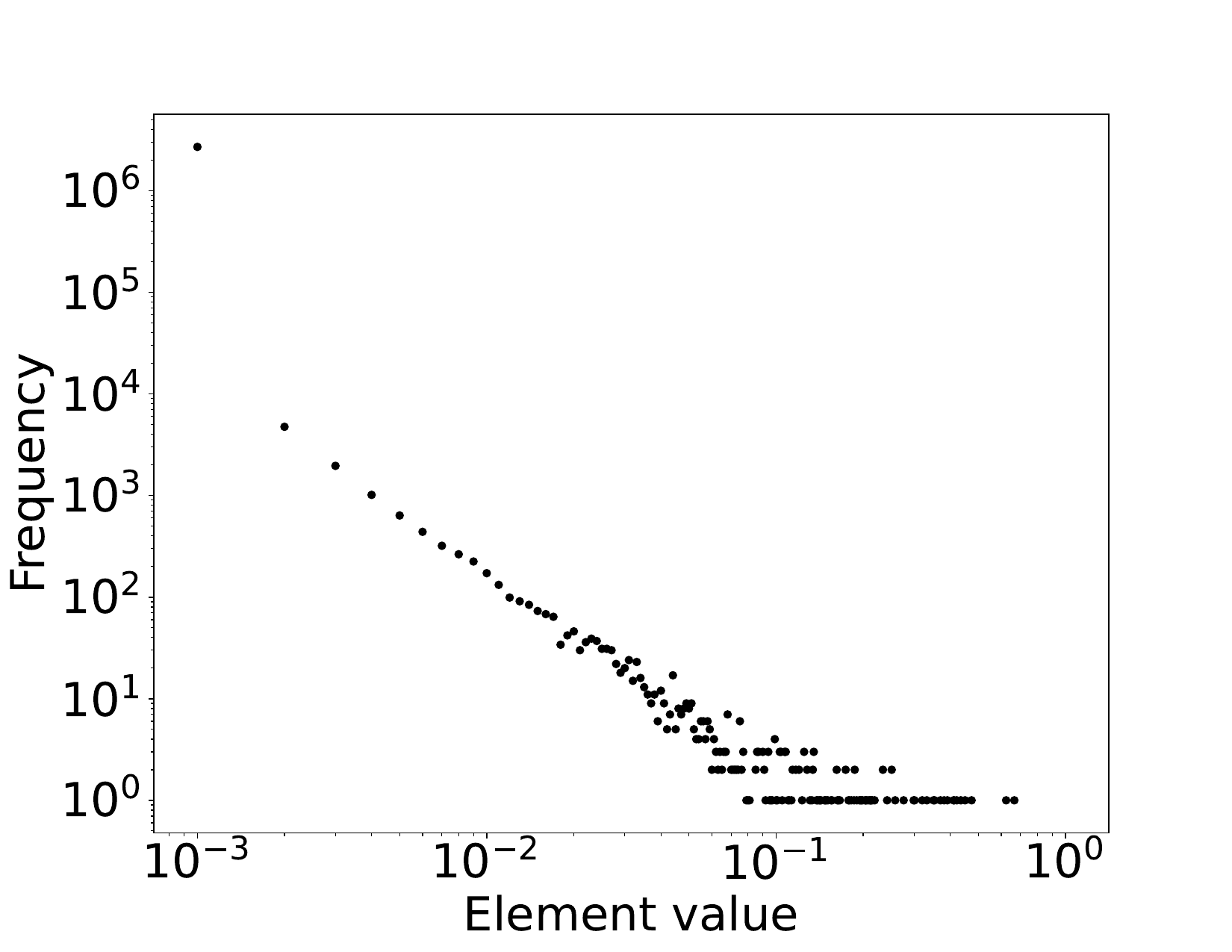} \\
	(a) User--topic matrix $\m+X$ & (b) Influence--topic matrix $\m+Y$ \\
	\end{tabular}
	\caption{
	Distribution of elements in $\m+X$ and $\m+Y$ in \TwitterFifty. 
	}
	\label{fig:X-Y-distribution}
\end{figure}

To generate $\m+X$ we proceed as follows. For each row~$\m+X_i$ that we generate
synthetically, we sample the entries~$\m+X_{ij}$ from a power-law distribution
with $\alpha=2.5$ (this value of $\alpha$ was also used in~\cite{xu2021fast}).
We control the sparsity of the matrix by removing elements with a value smaller
than 0.25 and rescaling the result such that~$\m+X_i$ is row-stochastic; we
chose the value~0.25 to match the sparsity of of our real-world matrices from
\TwitterFifty.

To generate $\m+Y$, we first observe from Fig.~\ref{fig:behavior-fifty}
that~$\tau_{j,\v+s}$, the weighted average of the innate opinions for
each topic~$j$, ranges from -0.65 to 0.65 and the majority of topics are located
around 0.  Inspired by this fact, we construct the influence--topic
matrix~$\m+Y$ such that the value $\tau_{j,\v+s}$ are spread across the opinion
spectrum (similar to the real-world behavior). More concretely, we equally
divide the opinion spectrum $[-1,1]$ into $d$~chunks and we assign a
weight~$w_i$ to each chunk~$i$.  Now, for each topic~$j$ we first sample its
\emph{bias}. That is, we sample a chunk~$i$ with probability proportional to the
weight~$w_i$ and then all users of topic~$j$ have their innate opinion
from chunk~$i$.  If $V_i$ denotes the set of users with innate opinion in
chunk~$i$ and $n$ is the number of all users, then we pick $0.02n$~users from
$V_i$ uniformly at random and for each $u\in V_i$, we set $Y_{ju}$ using a
power-law distribution with $\alpha=2.5$.  Finally, we rescale the result such
that $\m+Y_i$ is row-stochastic.  In our synthetic experiments we used $d=3$ and
$w=[0.3,0.4,0.3]$.

We note that this way of $\m+Y$ was crucial to obtain our experimental results
on synthetic data.  Initially, we simply picked $0.02n$~users for each
topic uniformly at random. However, this resulted in all $\tau_{j,\v+s}$ being
very close to~$0$, which is not the behavior that we saw in our real-world
datasets.  This also had the side-effect hat our optimization algorithms could
not reduce the polarization and disagreement significantly. 

\subsection{Additional experiment results}
\label{app:additional-exp}
Now we report additional experimental results, including a running time
analysis.

\spara{Impact of the learning rate.}
We first study the impact of the learning rate on the
convergence of \ouralgo.
Our theoretical analysis suggests using  learning rate
$L=({8CW}/{\sqrt{n}})  \norm{\v+s}_2  \norm{\m+Y}_2^2$, 
which is very large in practice and will result in slow convergence.
Thus, we study the convergence of \ouralgo for different learning rates, in particular, we test $L=10,10^2,10^3,10^4$.
The results for \TwitterFive and \TwitterFifty are shown 
in Figures~\ref{fig:learning_rates}(a) and \ref{fig:learning_rates}(b), respectively.
We observe that even with $L=10$, \ouralgo converges to the
same objective function value as for much larger values of~$L$, 
and it converges much faster.
Therefore, for the rest of our experiments we will use $L=10$.

\begin{figure}[t]
  \centering
    \begin{tabular}{cc}
         \includegraphics[width=0.45\columnwidth] {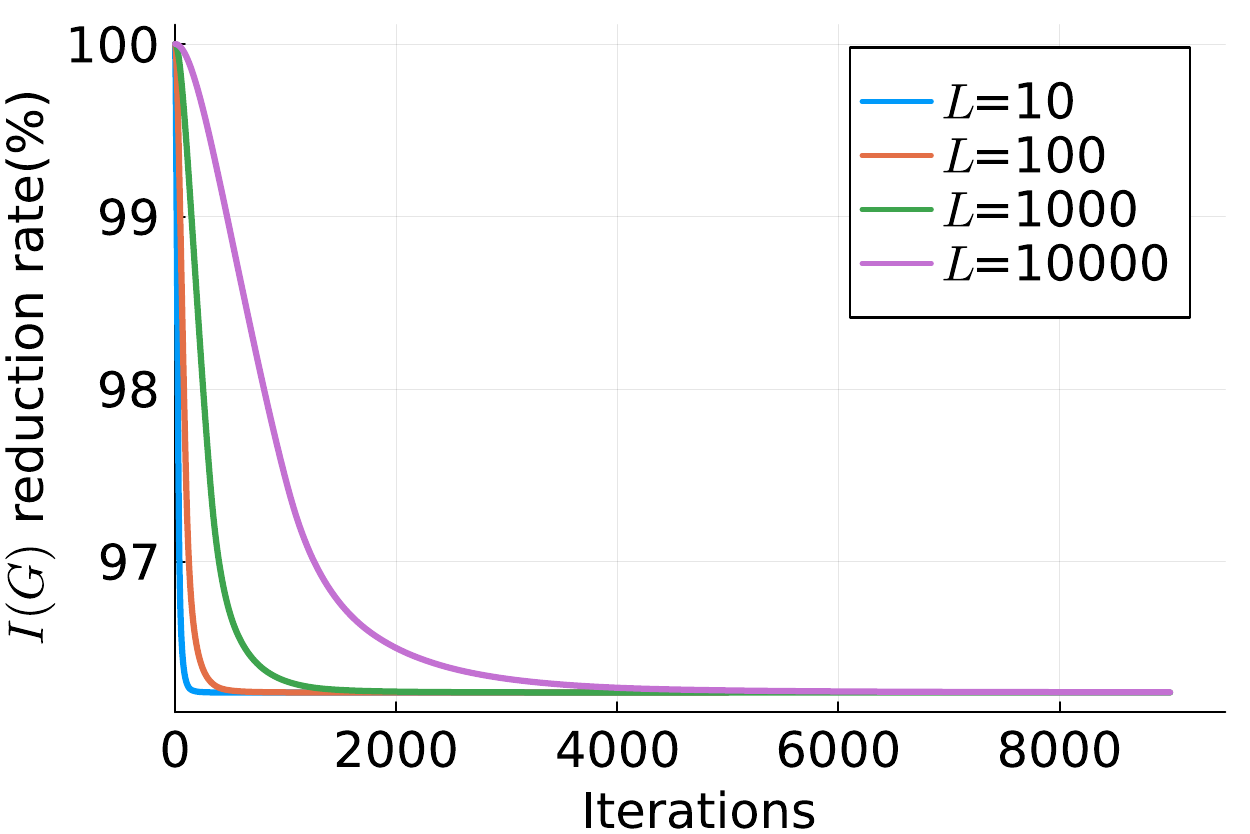} & \includegraphics[width=0.45\columnwidth]{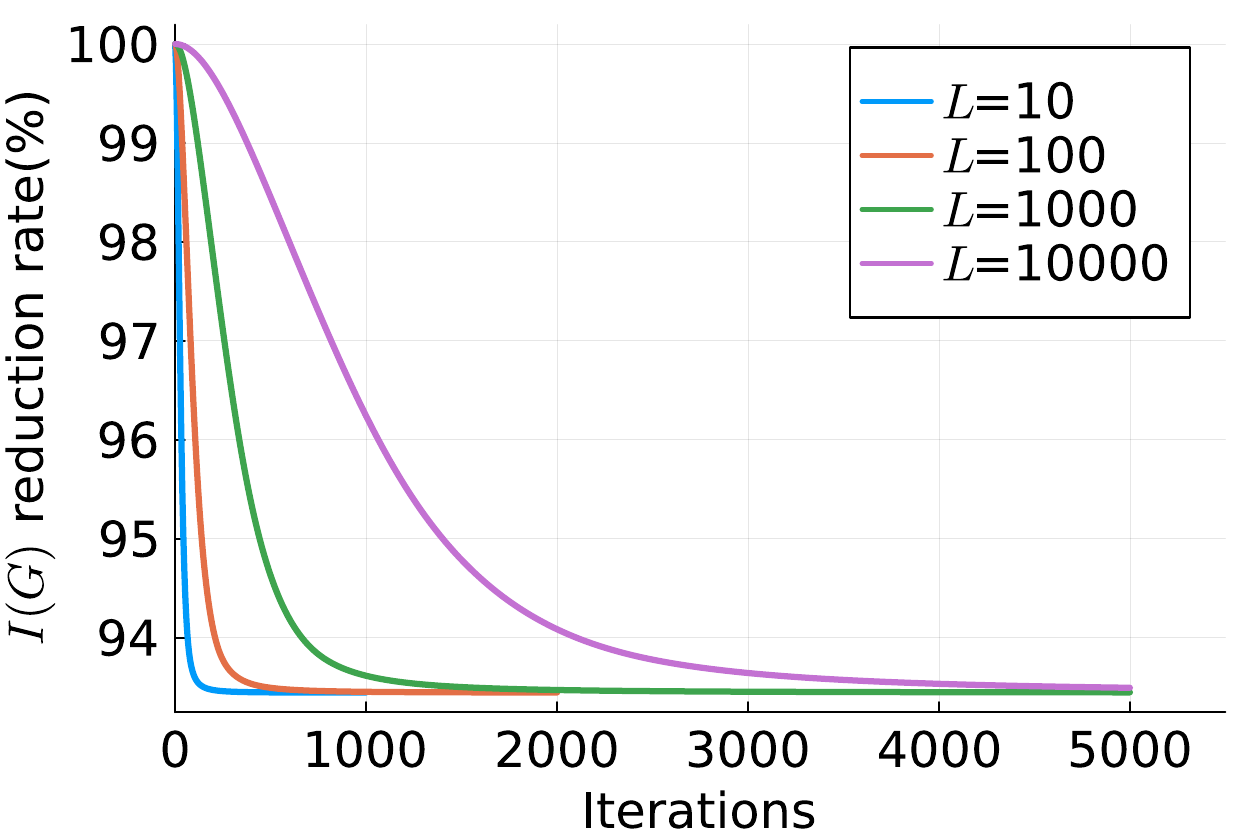} \\ 
         (a) \TwitterFive  & (b) \TwitterFifty \\
    \end{tabular}
    \caption{\label{fig:learning_rates}
    Convergence of \ouralgo for different learning rates on two \twitter datasets ($\vartheta=0.1$, $C=0.1$). The $y$-axis shows the reduction ratio ${f(\m+X_{\ALG})}/{f(\m+X)}$.}
    \label{fig:learning_rates}
\end{figure}

\spara{Approximating Expressed Opinions~$\tzXC$.}
Table~\ref{tab:running-time-error} reports running time and approximation error
of Algorithm~\ref{alg:opinions} for computing~$\tzXC$ on different real-world
graphs. We compare against the exact solution~$\zXC$ and note that we cannot
compute $\zXC$ for the 14~largest graphs due to the high running time of
computing the exact solution. We observe that Algorithm~\ref{alg:opinions} is
orders of magnitude faster than the na\"ive computation of $\zXC$ and its error
is negliglible in practice (note that errors are typically less than $10^{-8}$).

We also visualize the running times from Table~\ref{tab:running-time-error} for
uniformly distributed innate opinions in Fig.~\ref{fig:running-time}(a). We
observe that the running time grows linearly with the number of nodes.

\begin{figure}[t]
	\centering
	\begin{tabular}{ccc}
	\includegraphics[width=0.33\columnwidth]{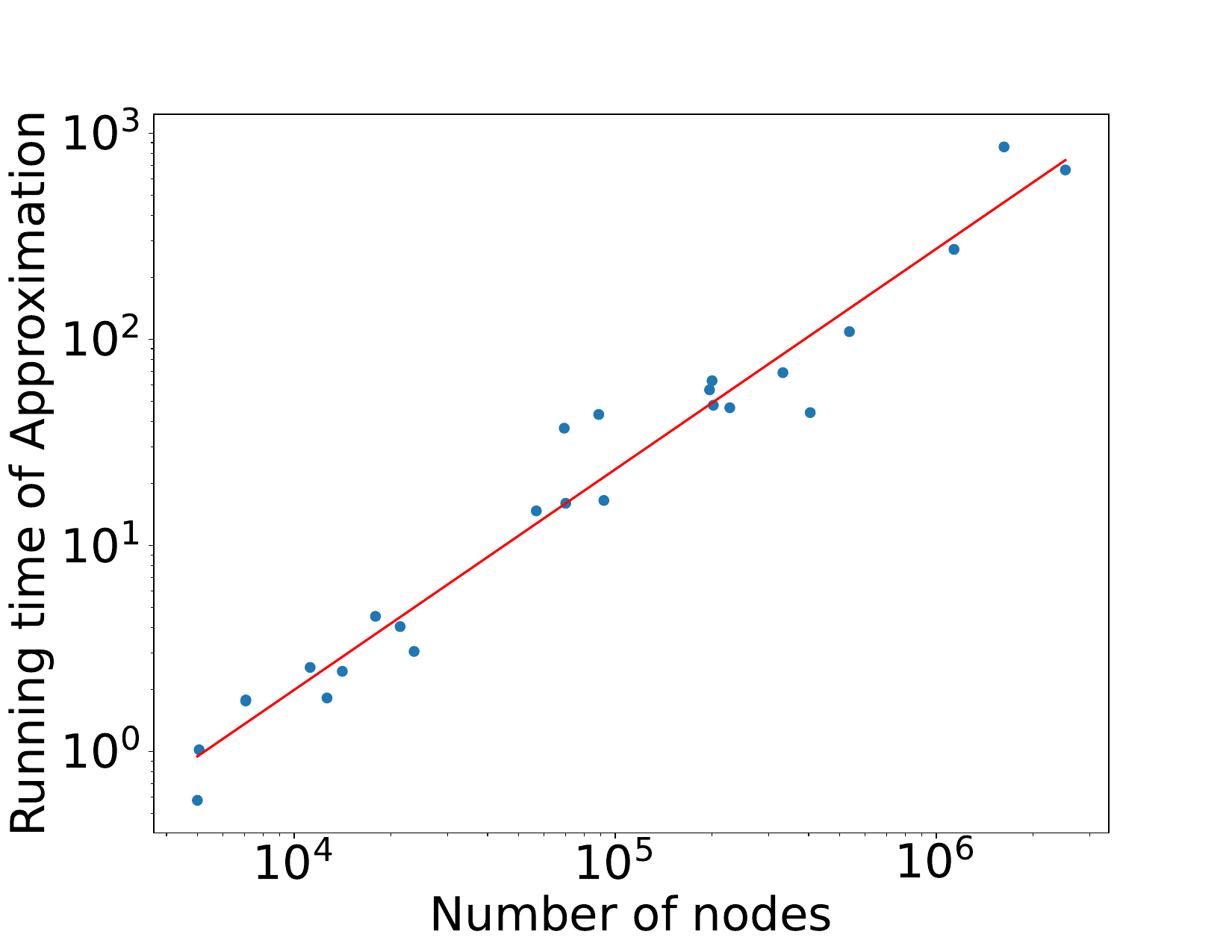} &
	\includegraphics[width=0.33\columnwidth]{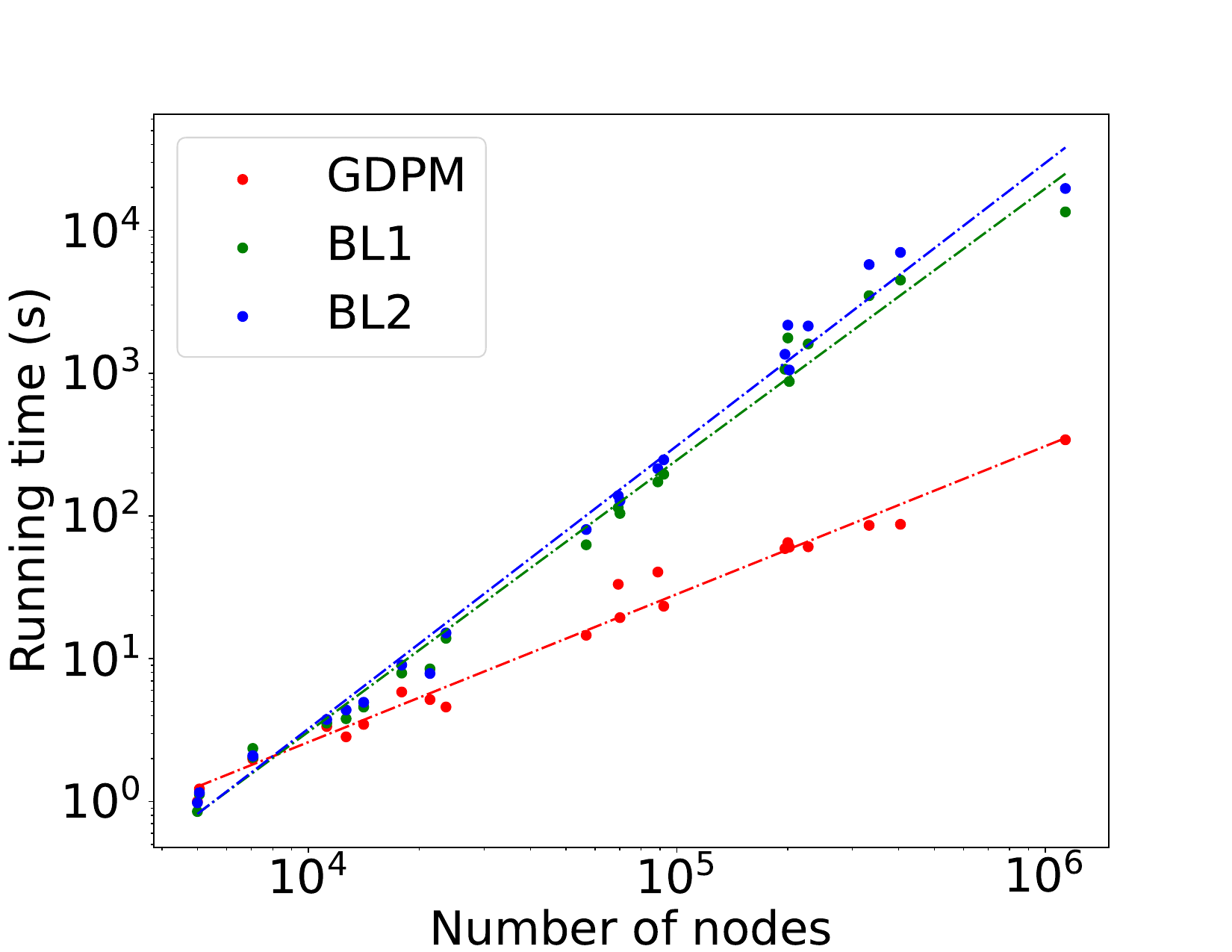} &
    \includegraphics[width=0.33\columnwidth]{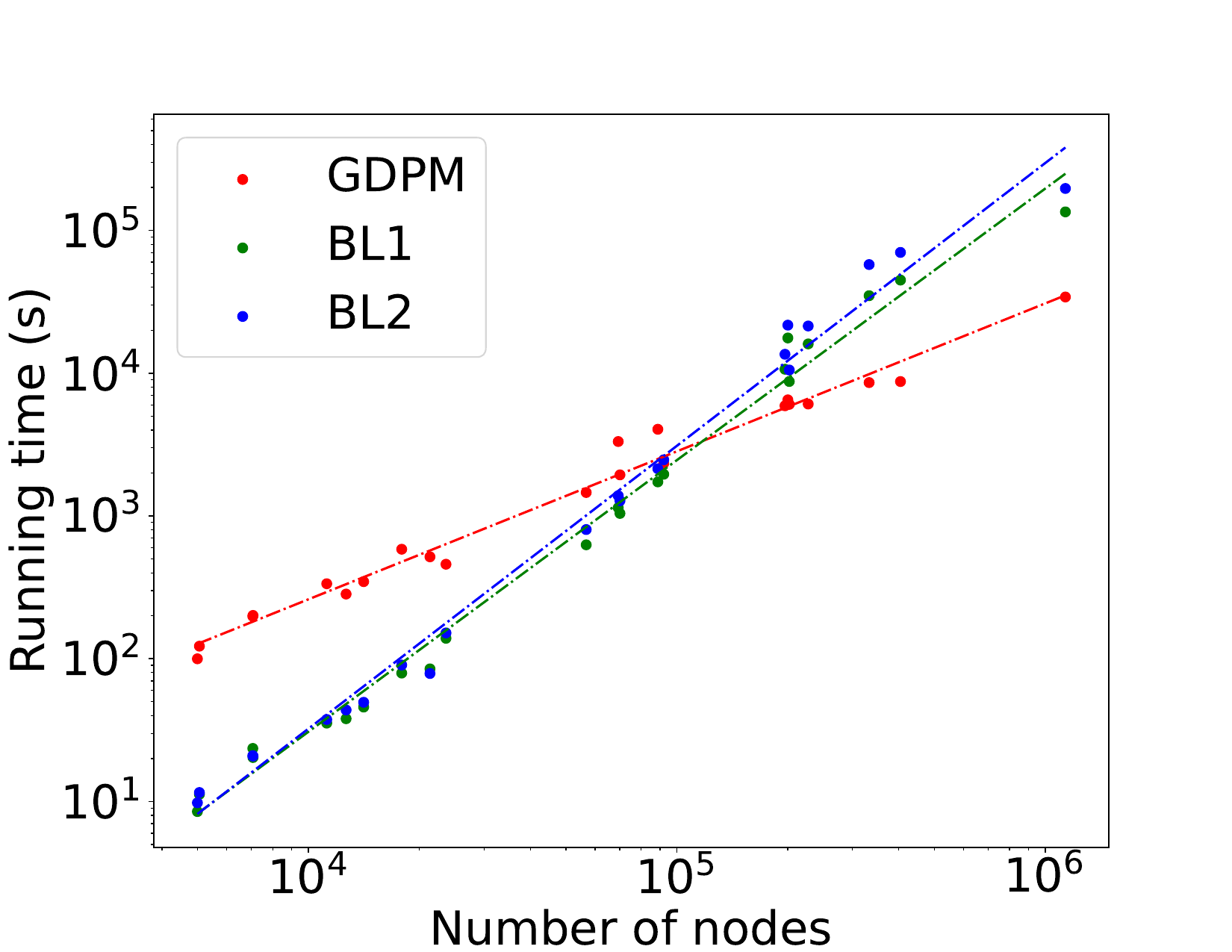} \\
	(a)  &
	(b)  & (c)   \\
	\end{tabular}
	\caption{
	Running time of algorithms in seconds. Each marker corresponds to one of the
	datasets from Table~\ref{tab:running-time-error}. We fit a linear
	regression to indicate trends. Plot~(a) shows the running time of
	Algorithm~\ref{alg:opinions} for computing~$\tzXC$.
	Plot~(b) shows the running time for a single iteration of \ouralgo, \blone
	and \bltwo. Plot~(c) shows the total running times for 100 iterations of
	\ouralgo and 10 iterations of \blone and \bltwo.
	}
	\label{fig:running-time}
\end{figure}

We also note that in our experiments the
error incurred on our objective function by the approximate opinions was very
small, with typically $\abs{\widetilde{f} - f(\m+X)}/f(\m+X)<10^{-8}$, where
$\widetilde{f}$ is as in Corollary~\ref{cor:fast-objective-function}.

\spara{Running time analysis of the optimization algorithms.}
We start by comparing the running times of \ouralgo, \blone, and \bltwo, which
use Algorithm~\ref{alg:opinions} as a subroutine to compute approximate
opinions~$\tzXC$, with an implementation that computes exact opinions $\zXC$. We
report our results in Table~\ref{tab:running-time}. While on \TwitterFive, the
exact methods are still relatively fast, on \TwitterFifty we observe that the
algorithms with approximate opinions are faster by a factor of~300. In other
words, running \emph{all} 150~iterations of \ouralgo with approximate opinions
is faster than running \emph{a single} iteration with exact opinions.

\begin{table}[t]
\small
  \centering
\caption{Running times of the algorithms. We report the time (in seconds)
	required for a single iteration with \emph{approximate} expressed
	opinions~$\tzXC$ and with \emph{exact} expressed opinions~$\zXC$. We also
	report the \emph{total} running time over 150~iterations when using the
	approximate solver. Here, we set $L=10$, $C=0.1$, $\theta=0.1$ and $T=150$.}
\label{tab:running-time}
\begin{tabular}{lrrcrrr}
\toprule
\multirow{2}{*}{\textbf{Algorithm}} & \multicolumn{3}{c}{\TwitterFive} & \multicolumn{3}{c}{\TwitterFifty} \\
\cmidrule(lr){2-4} \cmidrule(lr){5-7}
 & \textbf{Approx} & \textbf{Exact} & \textbf{Total} & \textbf{Approx} & \textbf{Exact} & \textbf{Total} \\
 & (1 iter.) & (1 iter.) & (150 iter.) & (1 iter.) & (1 iter.) & (150 iter.) \\ \hline
\textbf{GDPM} & 0.21 & 0.15 & 31.95&7.67 & 1530.02 & 1150.06 \\
\textbf{BL 1} & 0.09 & 0.12 & 13.51 & \textbf{4.76} & 1501.21 & \textbf{713.97} \\
\textbf{BL 2} & \textbf{0.08} & 0.13 & \textbf{12.67} & 4.98 & 1485.77 &747.07 \\ 
\bottomrule
\end{tabular}
\end{table}

Furthermore, in Fig.~\ref{fig:running-time}(b) we visualize the running time of
a single iteration of \ouralgo, \blone, and \bltwo and in
Fig.~\ref{fig:running-time}(c) we plot the total time for 100~iterations of
\ouralgo and 10~iterations of \blone and \bltwo.  The figures show that for all
algorithms their running time grows linearly in the number of nodes. However,
note that a single iteration of \ouralgo is faster than the baselines,
particularly on large graphs. The reason is that for each row, \blone and \bltwo
need to compute the topic indices~$j$ and~$j'$ which shall be favored and
penalized (see Algorithm~\ref{alg:baseline}), which is costly; on the other
hand, \ouralgo computes the gradient only once and the projection operation in
\ouralgo for updating each row is highly efficient. This has the effect that as
the size of the graphs increases, \ouralgo becomes more efficient than the
baselines in terms of total running time, even though it performs 10~times more
iterations.

\spara{Comparison with off-the-shelf convex solver.}
Figure~\ref{fig:convex_GDPM_running_time} reports the comparison of the running
time between \ouralgo and \cjl to solve Problem~\ref{eq:problem}. \cjl is a
popular off-the-shelf convex optimization tool written in Julia and we use it
with the SCS solver. It shows that
\ouralgo outperforms the off-the-shelf solver in every experiment.

Our experiments are conducted on random graphs and the probability of creating
an edge in the random graph is set to 0.5. We set the available memory resource
for the experiments to be 102 GB. Running \cjl on a graph with more than 500~nodes
exceeds the memory constraint. Therefore, we only report the results of graphs
with 50 to 500 nodes.  We generate synthetic opinions and user-to-topic matrix
$\m+X$ and influence--topic matrix $\m+Y$ following the same method described in
section \ref{sec:exp-setting}. The number of
topics is set to 10 for all matrices. We set the learning rate $L=100$ and run
$2000$ iterations for \ouralgo. In the \cjl experiment, the algorithm runs
until the problem is solved or reaches the maximum iteration limit. In all
experiments, \cjl and \ouralgo return the same objective value on the same
setting after optimization with a precision of $10^{-4}$. 

The reason for the stark contrast between the two algorithms is as follows: Any
non-tailored algorithm that computes our objective function or the gradients of
our problem, must compute the matrix $(\m+I + \m+L + \LXC)^{-1}$, which is a
dense matrix with $\Omega(n^2)$ non-zero entries. In fact, even writing down
$\LXC$ might take space $\Omega(n^2)$ based on the structure of the matrices
$\m+X$ and $\m+Y$. This is the reason why the baseline \cjl uses so much memory
in our experiments. Our algorithms bypass all of these issues, because our
efficient routine for estimating the opinions from
Proposition~\ref{prop:fast-opinions} allows us to efficiently estimate the
objective function (Corollary~\ref{cor:fast-objective-function}) and the
gradient (Proposition~\ref{prop:gradient}). Because of this, all of our
algorithms use near-linear space in the size of the input graph. This is the
reason why any non-tailored optimization approach is going to fail, unless it
provides us an explicit way for efficiently computing the gradient and the
objective function.

\begin{figure}
    \centering
    \includegraphics[width=0.4\columnwidth]{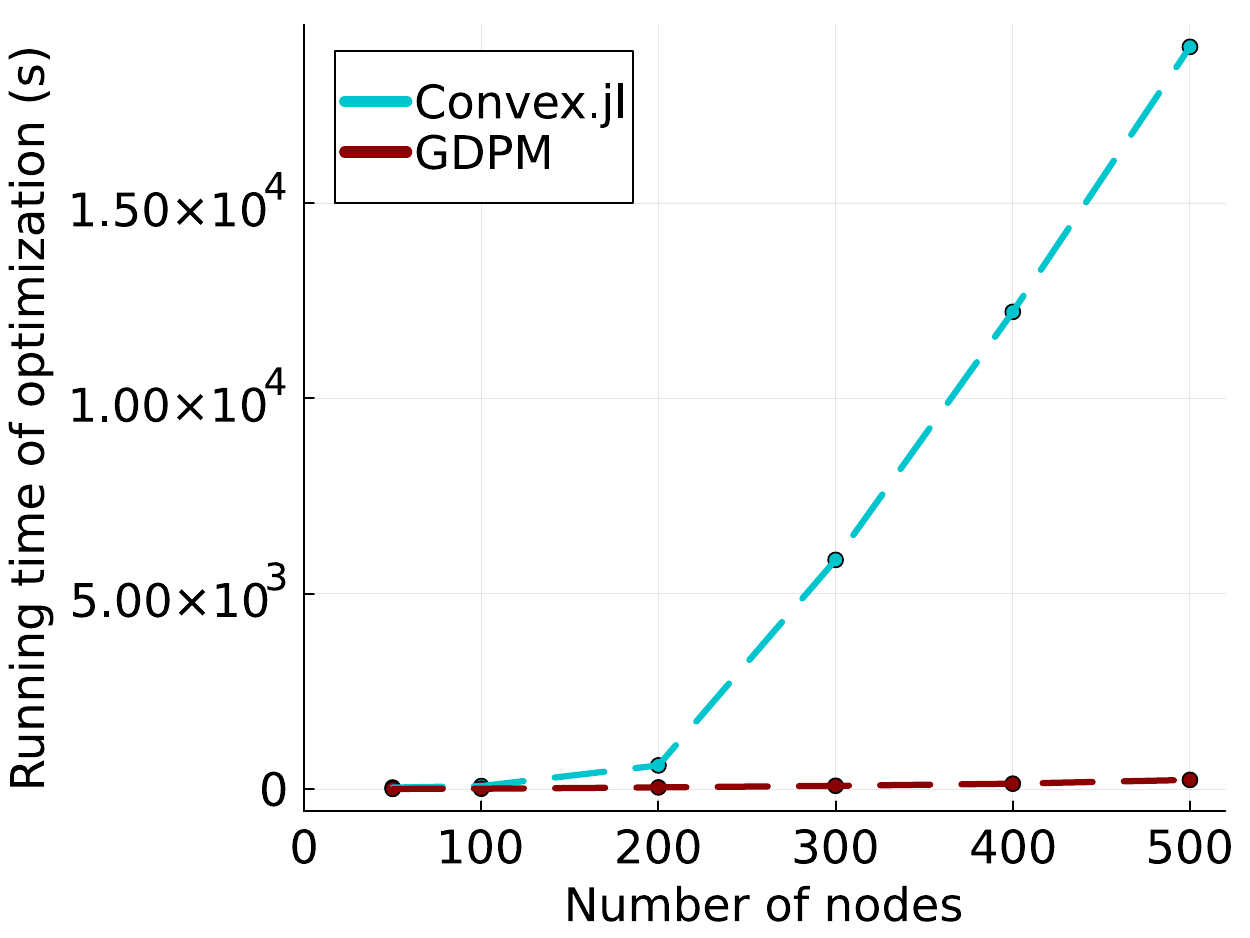}
    \caption{Running time comparison of \cjl and \ouralgo on random graphs with
		synthetic opinions. We set \ouralgo to run 2,000 iterations and \cjl to run until the problem is solved. The objective value returned by both methods coincides with a precision of $10^{-4}$. We set $C=0.1$, $\theta=0.1$ for all experiments.}
    \label{fig:convex_GDPM_running_time}
\end{figure}

\spara{Dependency of the convergence speed on $C$.}
Next, in Figure~\ref{fig:C_convergence} we study how the covergence speed of
\ouralgo depends on the parameter~$C$.  As we can see from the figure, the
number of iterations until convergence increases very slowly (if at all). In
particular, consider as a convergence condition the criterion that in two
consecutive iterations $i-1$ and $i$, the disagreement--polarization index changes
satisfy $I^{(i)}(G)/I^{(i-1)}(G) > 0.99999$. Then on \TwitterFifty, the number
of iterations needed to converge for the corresponding
$C=0.1,0.2,0.3,0.4$-values is given by 91, 94, 95, 99, 102.

\begin{figure}[t]
  \centering
    \begin{tabular}{cc}
         \includegraphics[width=0.45\columnwidth] {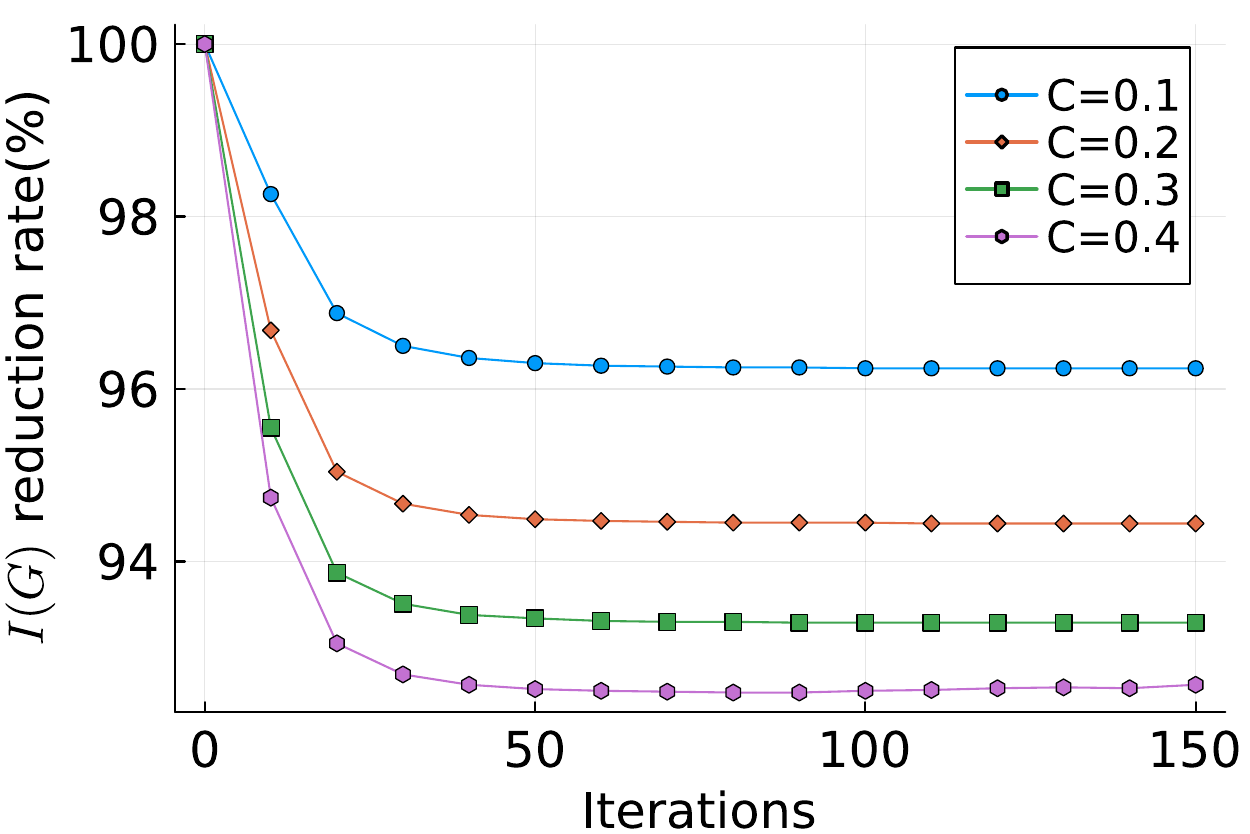} & \includegraphics[width=0.45\columnwidth]{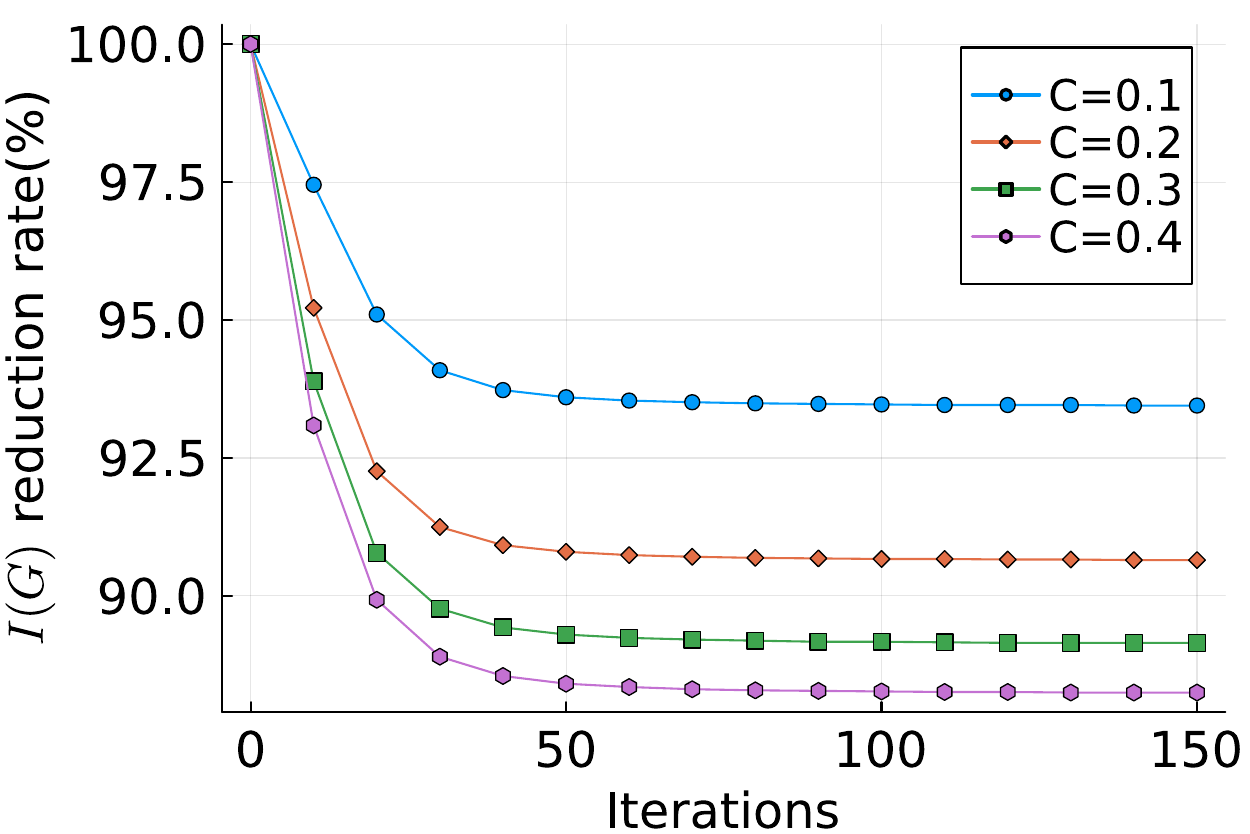} \\ 
         (a) \TwitterFive  & (b) \TwitterFifty \\
    \end{tabular}
    \caption{
	Convergence of \ouralgo for different values of $C$ on the two \twitter datasets
	($\vartheta=0.1$). The $y$-axis shows the reduction ratio
	${f(\m+X_{\ALG})}/{f(\m+X)}$. We vary $C \in \{0.1, 0.2, 0.3, 0.4 \}$.}
    \label{fig:C_convergence}
\end{figure}

\spara{Simultaneously changing $C$ and $\theta$.}
Next, we run experiments on \TwitterFive and \TwitterFifty, in which we
simultaneously change the parameters $C$ and $\theta$.  We report the results in
Figure~\ref{fig:vary_C_theta}.  We observe that the disagreement--polarization
index decreases both as a function of $C$, as well as for~$\theta$. Furthermore,
across all $C$-values that we consider, the decrease for larger values of
$\theta$ is comparable.

\begin{figure}[t]
  \centering
    \begin{tabular}{cc}
         \includegraphics[width=0.45\columnwidth]{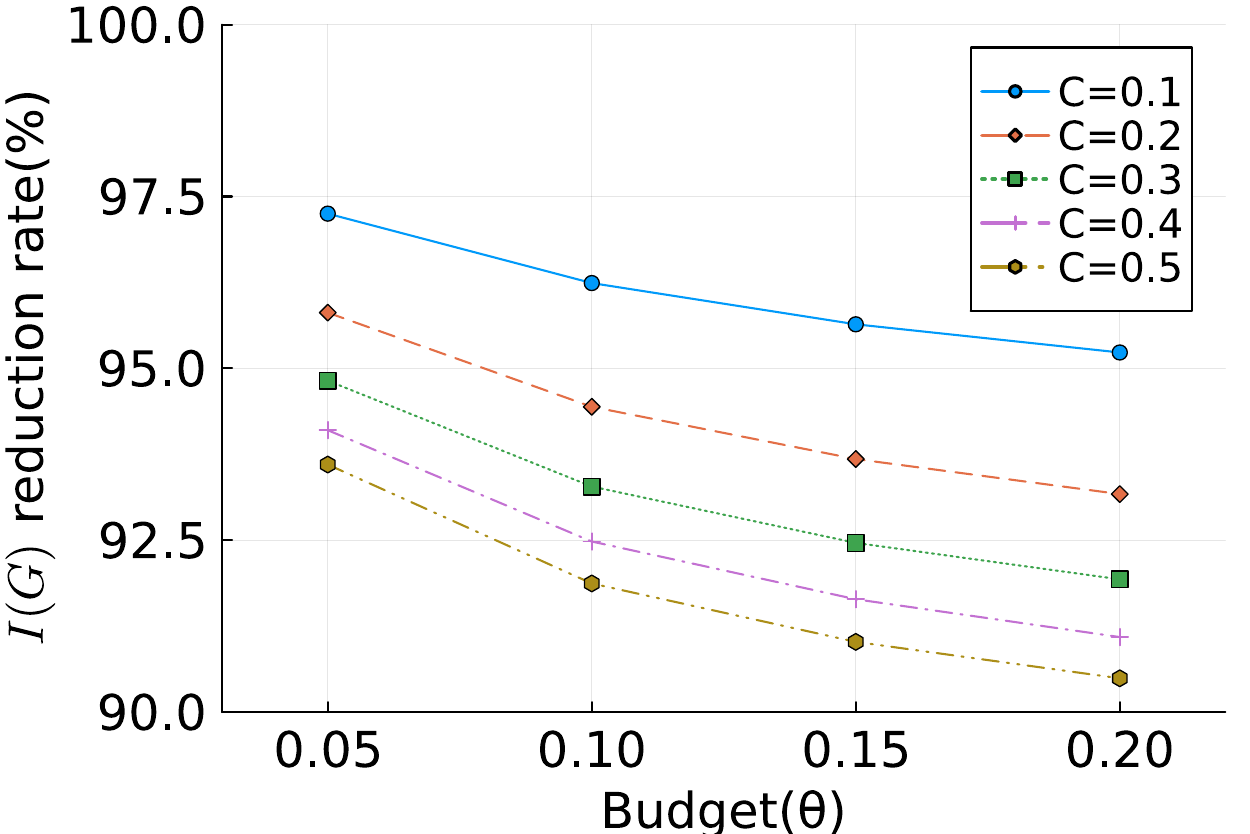} & \includegraphics[width=0.45\columnwidth]{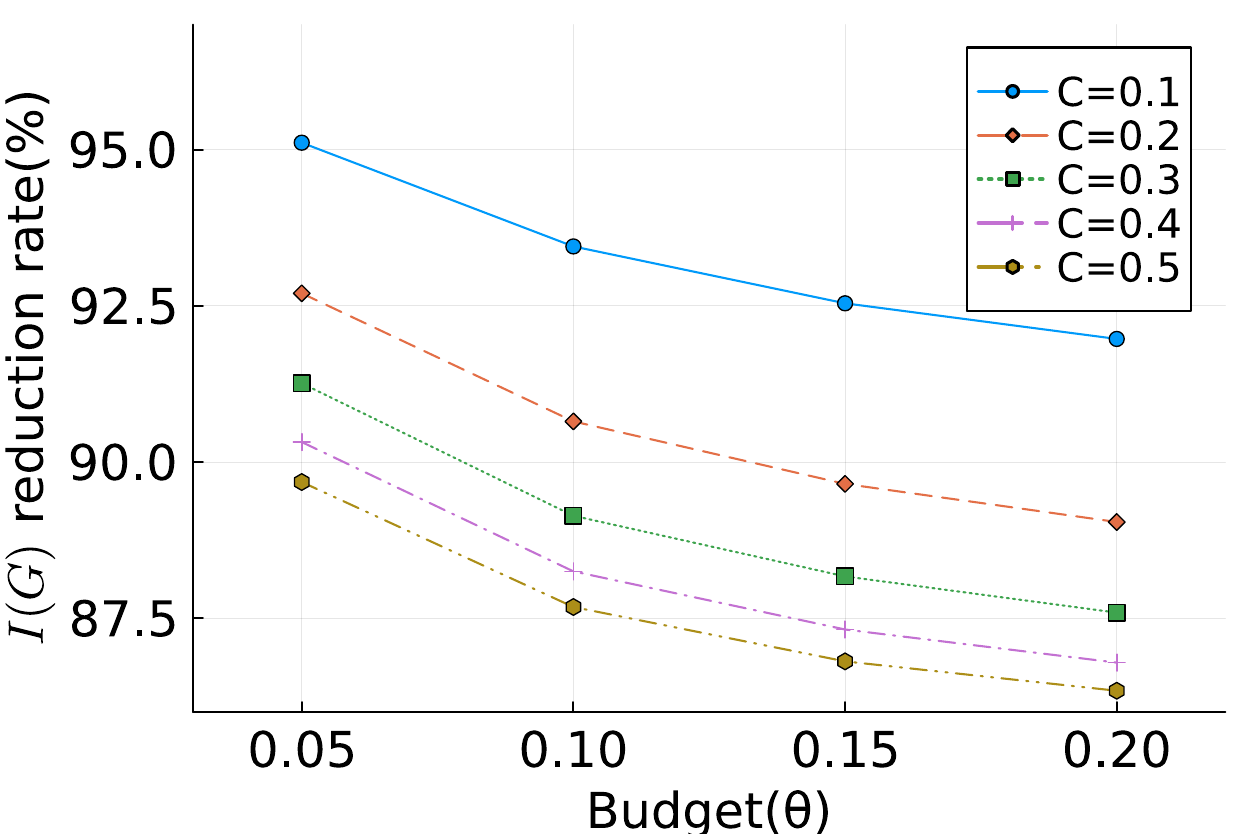} \\ 
         (a) \TwitterFive  & (b) \TwitterFifty \\
    \end{tabular}
    \caption{
    Reduction of the disagreement--polarization index on the two \twitter datasets
	for \ouralgo. The $y$-axis shows the reduction ratio ${f(\m+X_{\ALG})}/{f(\m+X)}$.
	We vary $C \in\{0.1,0.2,0.3,0.4\}$ and $\vartheta \in \{0.05,0.1,0.15,0.2\} $. }
    \label{fig:vary_C_theta}
\end{figure}

\spara{Node degree changes after optimization.}
To understand how the node degrees are affected by our model, we consider
the node degree increase rate after optimization in \TwitterFive and
\TwitterFifty. We report the results in Figure~\ref{fig:node-degree-increase}
for $C=10\%$.
In Figures~\ref{fig:node-degree-increase}~(a)--(b), users are ranked in
descending order by their corresponding influence score among all topics.
Formally, the influence score of a node~$u$ is given by
$\sum_{j=1}^k \m+Y_{ju}$.
The figures show that the user groups with the highest influence scores have the
largest standard deviations and higher means. In
Figures~\ref{fig:node-degree-increase}~(c)--(d), users are ranked in descending order by
their node degree in the original graph. The figures show that the mean of the increase rate
in groups with large node degrees is less than 10\% (until group 12 in
\TwitterFive and group 9 in \TwitterFifty).

\begin{figure}[t]
	\centering
	\begin{tabular}{cc}
	\includegraphics[width=0.4\columnwidth]{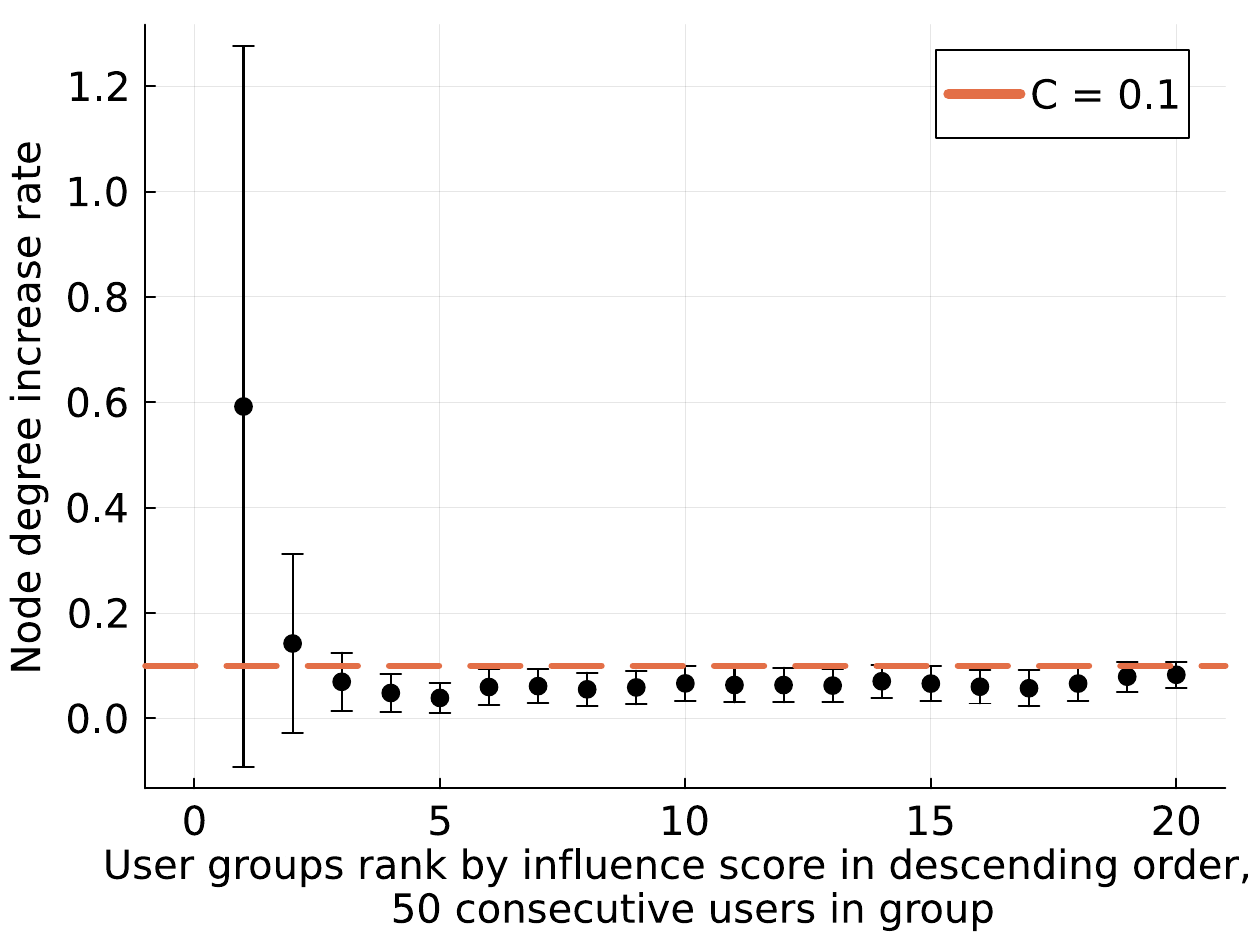} &
	\includegraphics[width=0.4\columnwidth]{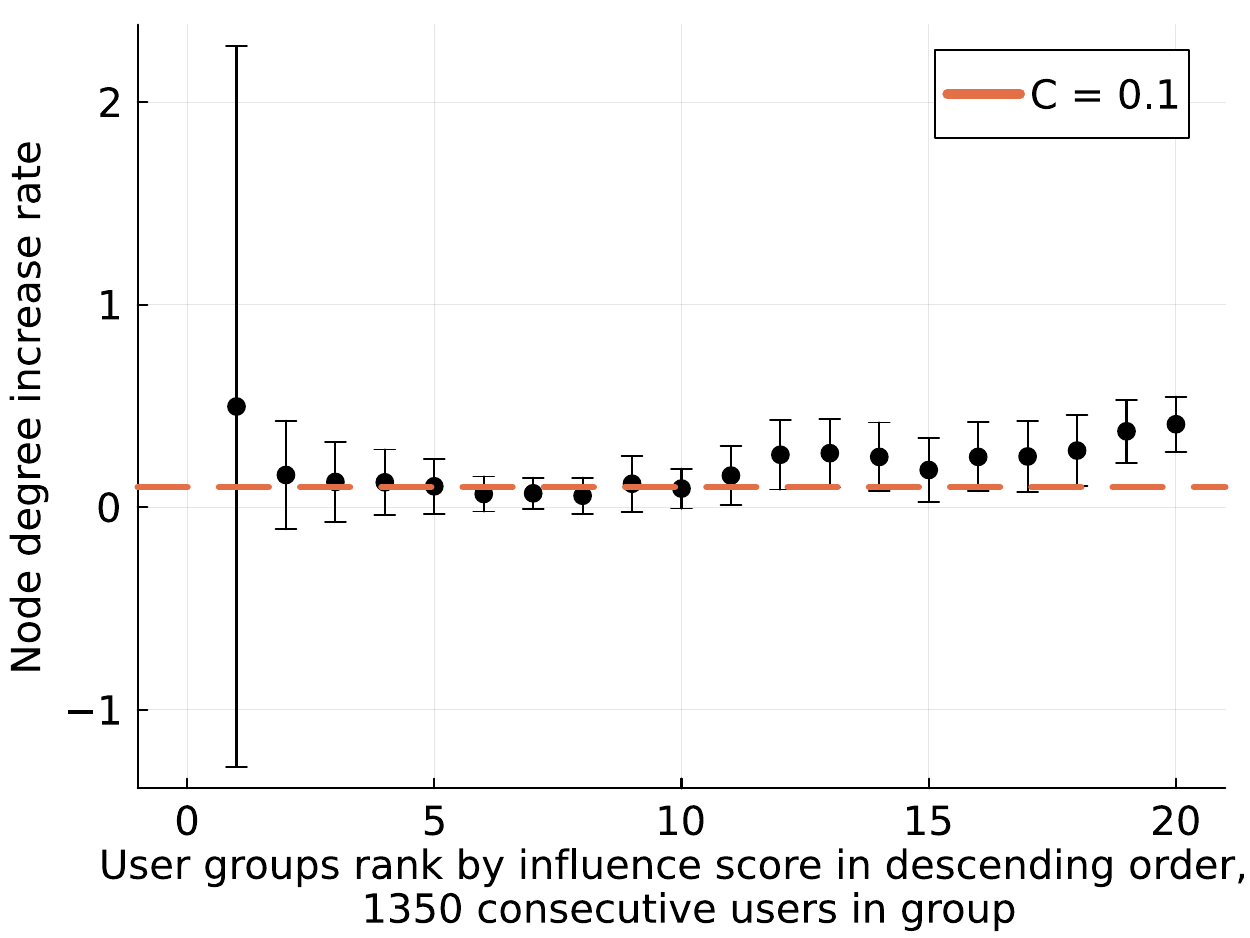} \\
	(a) \TwitterFive & (b) \TwitterFifty \\
 	\includegraphics[width=0.4\columnwidth]{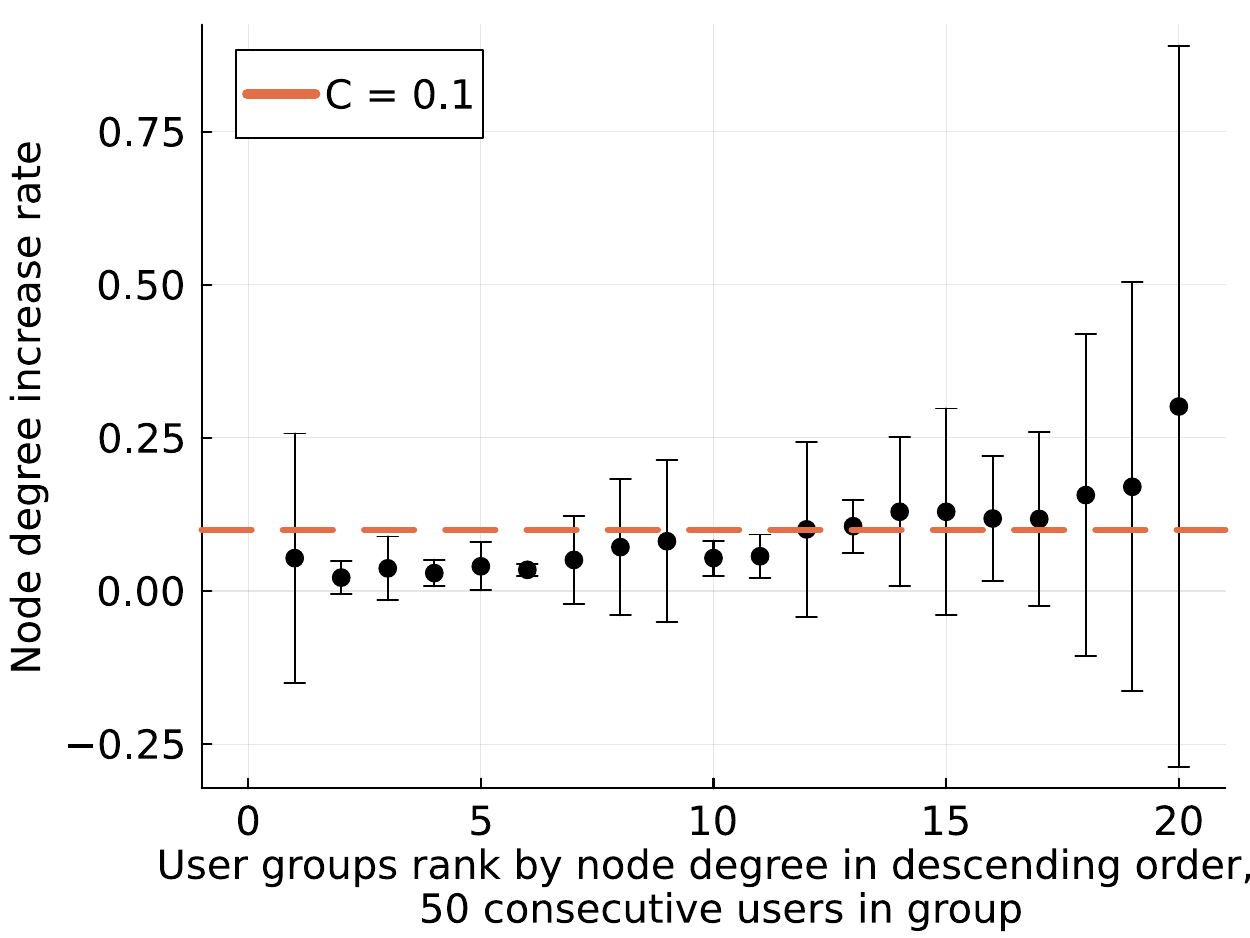} &
	\includegraphics[width=0.4\columnwidth]{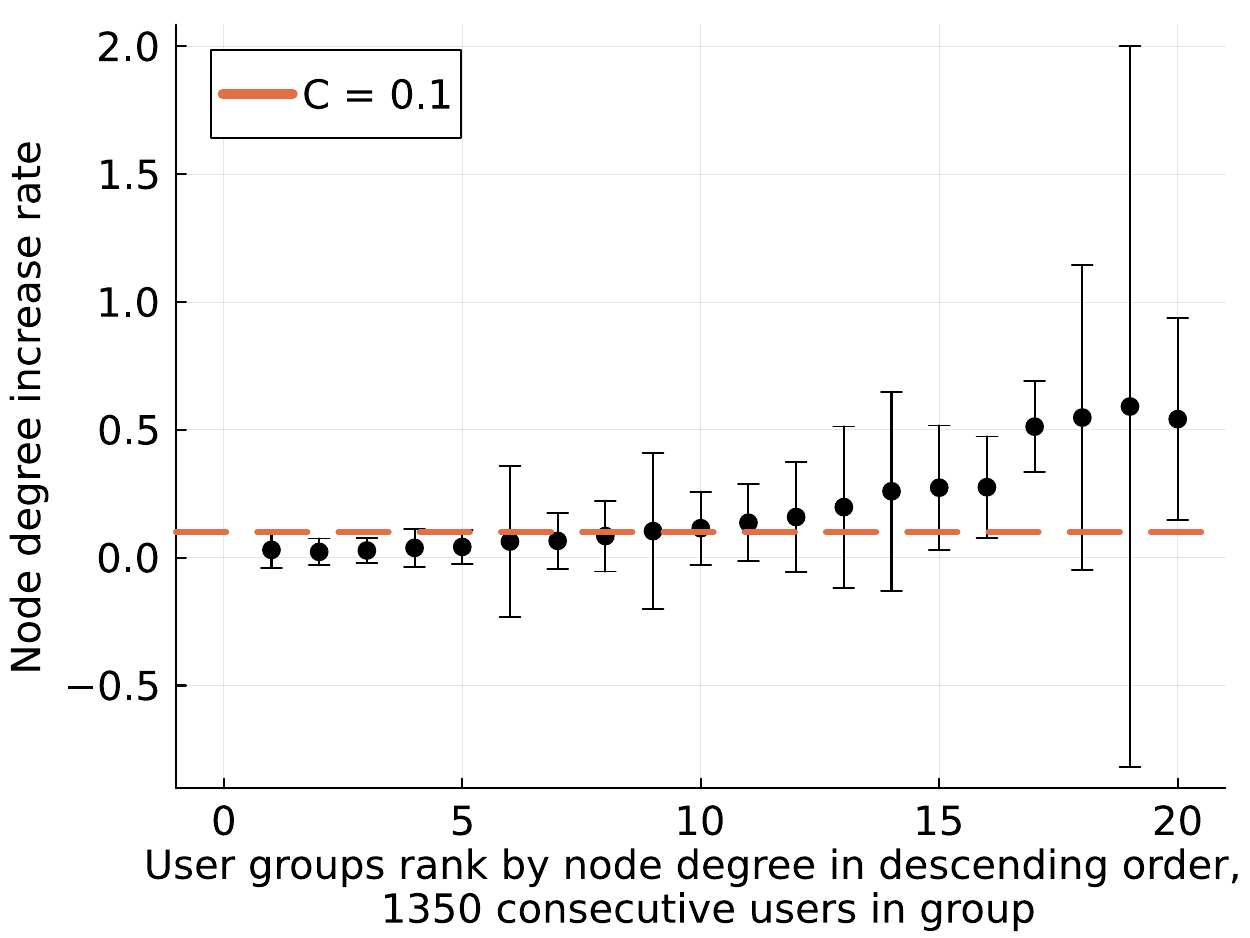} \\
	(c) \TwitterFive & (d) \TwitterFifty \\
	\end{tabular}
	\caption{Mean and standard deviation of node degree increase rate after
		optimization. The x-axis is user groups consisting of consecutive users
			in ranked lists in descending order, the y-axis is the node degree
			increase rate after adding the recommendation $\AXC$ with optimal
			$\m+X$. We set $C=0.1$ in all experiments (dashed line). In (a)--(b)
			users are ranked by influence scores among all topics. In (c)--(d)
			users are ranked by node degree in the original graph. 
	}
	\label{fig:node-degree-increase}
\end{figure}

\section{Omitted proofs}

\subsection{Preliminaries on linear algebra and optimization}
We start by defining additional notation and recalling some basic facts from
linear algebra and optimization.

We write $\lambda_i(\m+X)$ to denote the $i$-th eigenvalue of $\m+A$. Similarly,
$\sigma_i(\m+X)$ denotes the $i$-th singular value of $\m+X$. We will sometimes
also write $\lambda_{\min}(\m+X)$, $\lambda_{\max}(\m+X)$,
$\sigma_{\min}(\m+X)$ and $\sigma_{\max}(\m+X)$ to denote the smallest and
largest eigenvalues and singular values of $\m+X$, respectively.

Next, let us recall basic facts about matrix norms, where we let
$\m+X\in\mathbb{R}^{m\times k}$, $\m+Y\in\mathbb{R}^{k\times n}$ and
$\v+v\in\mathbb{R}^k$. Then we have that $\norm{\m+X}_2 \leq \norm{\m+X}_F$.
Furthermore, it holds that
$\norm{\m+X \m+Y}_2 \leq \norm{\m+X}_2 \cdot \norm{\m+Y}_2$, as well as 
$\norm{\m+X \m+Y}_F \leq \norm{\m+X}_2 \cdot \norm{\m+Y}_F$.
We also have $\norm{\m+X \v+v}_2 \leq \norm{\m+X}_2 \cdot \norm{\v+v}_2$.
Furthermore, we denote the Frobenius scalar product by
$\langle\m+X, \m+Y\rangle_F = \sum_{ij} \m+X_{ij} \m+Y_{ij}$.

If $\m+X,\m+Y\in\mathbb{R}^{n\times n}$ are invertible then observe that
$\m+X^{-1} - \m+Y^{-1} = 
\m+X^{-1}(\m+Y - \m+X)\m+Y^{-1}$, which based on the previous matrix inequalities
implies that 
$\norm{\m+X^{-1} - \m+Y^{-1}}_F
				\leq \norm{\m+X^{-1}}_2 \cdot
				\norm{\m+Y^{-1}}_2 \cdot
				\norm{\m+X - \m+Y}_F$.
Next, the Neumann series states that if $\norm{\m+X}_2 < 1$ then
$(\m+I - \m+X)^{-1} = \sum_{i=0}^\infty \m+X^i$.

The \emph{prox}-operator is given by
\begin{align}
\label{eq:prox}
	\prox_f(x)
	= \argmin_u \left\{ f(u) + \frac{1}{2}\norm{u - x}^2 \right\}.
\end{align}

Given a convex set~$Q$, we write $\delta_Q(x)\in\{0,\infty\}$ to denote its
indicator function, i.e.,
\begin{align*}
	\delta_Q(x) =
	\begin{cases}
		0, & x \in Q,\\
		\infty, & x\not\in Q.
	\end{cases}
\end{align*}
	
\subsection{Useful facts}

\begin{lemma}
\label{lem:eigenvalues-M}
	Let $\m+M = \m+D + \m+L$, where $\m+D\in\mathbb{R}^{n\times n}$ is a
	diagonal matrix with diagonal entries $\m+D_{ii}\geq 1$ and $\m+L$ is the
	Laplacian of a connected undirected weighted graph.
	Then all eigenvalues of $\m+M$ are at least $1$. Furthermore, for all
	$k\in\mathbb{N}$, the eigenvalues of $\m+M^{-k}$ are most 1 and
	$\sum_{i=1}^n \lambda_i(\m+M^{-k})\leq n$.
\end{lemma}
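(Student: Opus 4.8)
The plan is to exploit the positive semidefiniteness of weighted graph Laplacians together with the assumption $\m+D_{ii}\ge 1$. First I would observe that for any $\v+x\in\reals^n$ one has $\v+x^\top \m+L \v+x = \sum_{(i,j)\in E} w_{ij}(\v+x_i-\v+x_j)^2 \ge 0$, so $\m+L \succeq 0$; and that $\m+D - \m+I$ is a diagonal matrix with nonnegative diagonal, so $\m+D \succeq \m+I$. Adding the two Loewner inequalities gives $\m+M = \m+D + \m+L \succeq \m+I$. Since $\m+M$ is symmetric (a sum of symmetric matrices), this is equivalent to the statement that every eigenvalue of $\m+M$ is at least $1$, which is the first claim; in particular $\m+M$ is invertible.

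For the second part, I would note that $\m+M^{-k}$ is again symmetric and that its eigenvalues are exactly $\lambda_i(\m+M)^{-k}$. Since each $\lambda_i(\m+M)\ge 1$, every such eigenvalue lies in $(0,1]$, which proves the bound on the eigenvalues of $\m+M^{-k}$. Finally, the trace identity $\sum_{i=1}^n \lambda_i(\m+M^{-k}) = \tr{\m+M^{-k}}$ together with the fact that this is a sum of $n$ numbers, each at most $1$, yields $\sum_{i=1}^n \lambda_i(\m+M^{-k}) \le n$.

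I do not expect any real difficulty here; the statement is elementary (and note that connectedness of the graph is not actually needed). The only two points worth stating explicitly are that weighted Laplacians (with nonnegative edge weights) are positive semidefinite, and that the symmetry of $\m+M$ allows the Loewner-order inequality $\m+M \succeq \m+I$ to be read off directly as the eigenvalue inequality $\lambda_i(\m+M)\ge 1$.
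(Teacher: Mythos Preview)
Your proposal is correct and essentially matches the paper's argument: the paper observes that $\m+D$ has all eigenvalues at least~$1$, that $\m+L$ is positive semidefinite, and then invokes Weyl's inequality to conclude $\lambda_i(\m+M)\ge 1$, which is just another way of phrasing your Loewner-order step $\m+M=\m+D+\m+L\succeq \m+I$. The second part is identical in both. Your remark that connectedness is unnecessary is also correct.
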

\begin{proof}
	Observe that all eigenvalues of $\m+D$ are at least $1$ since $\m+D$ is
	diagonal and $\m+D_{ii} \geq 1$ for all $i$. Furthermore, $\m+L$ is positive
	semidefinite and thus all eigenvalues of $\m+L$ are non-negative. Thus,
	Weyl's inequality implies that all eigenvalues of $\m+D$ are at least $1$.

	The claim about the eigenvalues of $\m+M^{-k}$ follows from the fact that
	the eigenvalues of $\m+M^{-k}$ are given by
	$\lambda_1(\m+M)^{-k},\dots,\lambda_n(\m+M)^{-k}$ and the above argument
	implies that all of these numbers are at most $1$. Summing over these
	eigenvalues gives their third claim of the lemma.
\end{proof}

\begin{lemma}
\label{lem:difference-hadamard}
	Suppose $\v+y, \v+z \in [-1,1]^n$ then
	$\norm{(\v+y \odot \v+y) - (\v+z \odot \v+z)}_2
		\leq 2 \norm{\v+y - \v+z}_2$.
\end{lemma}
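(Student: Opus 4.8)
The statement is an entrywise inequality in disguise, so the plan is to reduce it to a single-coordinate estimate and then sum. First I would factor each coordinate of the difference: for every $i$ we have $(\v+y\odot\v+y)_i - (\v+z\odot\v+z)_i = \v+y_i^2 - \v+z_i^2 = (\v+y_i - \v+z_i)(\v+y_i + \v+z_i)$.

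Next I would use the hypothesis $\v+y,\v+z\in[-1,1]^n$, which gives $\abs{\v+y_i}\le 1$ and $\abs{\v+z_i}\le 1$, hence $\abs{\v+y_i + \v+z_i}\le 2$ by the triangle inequality. Combining with the factorization yields $\abs{(\v+y\odot\v+y)_i - (\v+z\odot\v+z)_i} \le 2\,\abs{\v+y_i - \v+z_i}$ for every coordinate $i$.

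Finally I would square both sides, sum over $i=1,\dots,n$, and take square roots:
\[
	\norm{(\v+y\odot\v+y) - (\v+z\odot\v+z)}_2^2
	= \sum_{i=1}^n \abs{(\v+y\odot\v+y)_i - (\v+z\odot\v+z)_i}^2
	\le 4\sum_{i=1}^n \abs{\v+y_i - \v+z_i}^2
	= 4\,\norm{\v+y - \v+z}_2^2,
\]
which gives the claimed bound after taking the (monotone) square root. There is no real obstacle here; the only point to be careful about is invoking the box constraint $[-1,1]^n$ to bound the factor $\v+y_i+\v+z_i$, since without it the map $t\mapsto t^2$ is not globally $2$-Lipschitz.
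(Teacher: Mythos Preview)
Your proof is correct and follows essentially the same approach as the paper: both arguments reduce to the coordinatewise inequality $(a^2-b^2)^2\le 4(a-b)^2$ via the factorization $a^2-b^2=(a-b)(a+b)$ and the bound $\abs{a+b}\le 2$ from the box constraint, then sum and take square roots. Your version is in fact marginally cleaner since you avoid the paper's case split on $a=b$ versus $a\neq b$.
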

\begin{proof}
	First, we note that for $a,b\in[-1,1]$ it holds that
	$(a^2-b^2)^2\leq 4(a-b)^2$ since if $a=b$ the inequality clearly holds and
	if $a\neq b$ then
	\begin{align*}
		\frac{(a^2-b^2)^2}{(a-b)^2}
   		= \left(\frac{a^2-b^2}{a-b}\right)^2
		= \left(\frac{(a+b)(a-b)}{a-b}\right)^2
		= (a+b)^2 \leq 4.
	\end{align*}
		
	Using this inequality we get that
	\begin{align*}
		\norm{(\v+y \odot \v+y) - (\v+z \odot \v+z)}_2^2
		&= \sum_{i=1}^n (\v+y_i^2 - \v+z_i^2)^2 \\
		&\leq 4 \sum_{i=1}^n (\v+y_i - \v+z_i)^2 \\
		&= 4 \norm{\v+y - \v+z}_2^2.
	\end{align*}

	We obtain the result by taking square roots.
\end{proof}

\begin{lemma}
\label{lem:difference-rank-one}
	Suppose $\v+y, \v+z \in [-1,1]^n$ then
	$\norm{\v+y \cdot \v+y^\top - \v+z \cdot \v+z^\top}_F
		\leq 2\sqrt{n} \norm{\v+y - \v+z}_2$.
\end{lemma}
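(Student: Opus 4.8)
The plan is to use the standard ``add and subtract'' decomposition that linearizes the quadratic expression $\v+y\v+y^\top - \v+z\v+z^\top$. First I would write
\begin{align*}
	\v+y\,\v+y^\top - \v+z\,\v+z^\top
	= \v+y\,\v+y^\top - \v+y\,\v+z^\top + \v+y\,\v+z^\top - \v+z\,\v+z^\top
	= \v+y\,(\v+y - \v+z)^\top + (\v+y - \v+z)\,\v+z^\top,
\end{align*}
so that the difference is expressed as a sum of two rank-one matrices, each of which contains the factor $\v+y - \v+z$. Applying the triangle inequality for the Frobenius norm then reduces the problem to bounding $\norm{\v+y(\v+y-\v+z)^\top}_F$ and $\norm{(\v+y-\v+z)\v+z^\top}_F$ separately.

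Next I would recall the elementary identity $\norm{\v+a\,\v+b^\top}_F = \norm{\v+a}_2 \norm{\v+b}_2$ for vectors $\v+a,\v+b$, which follows from $\norm{\v+a\,\v+b^\top}_F^2 = \sum_{i,j} \v+a_i^2 \v+b_j^2 = \left(\sum_i \v+a_i^2\right)\left(\sum_j \v+b_j^2\right)$. This gives $\norm{\v+y(\v+y-\v+z)^\top}_F = \norm{\v+y}_2 \norm{\v+y-\v+z}_2$ and $\norm{(\v+y-\v+z)\v+z^\top}_F = \norm{\v+y-\v+z}_2 \norm{\v+z}_2$. Finally I would use the hypothesis $\v+y,\v+z\in[-1,1]^n$, which implies $\norm{\v+y}_2 \leq \sqrt{n}$ and $\norm{\v+z}_2 \leq \sqrt{n}$, to conclude
\begin{align*}
	\norm{\v+y\,\v+y^\top - \v+z\,\v+z^\top}_F
	\leq \norm{\v+y}_2 \norm{\v+y-\v+z}_2 + \norm{\v+y-\v+z}_2 \norm{\v+z}_2
	\leq 2\sqrt{n}\,\norm{\v+y-\v+z}_2.
\end{align*}

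Honestly, there is no real obstacle here: the lemma is a routine estimate, and the only ``idea'' is the linearizing decomposition, which mirrors exactly the trick already used in the proof of Lemma~\ref{lem:difference-hadamard} (and which parallels the proof of the matrix-norm analogue $\norm{\m+X^{-1}-\m+Y^{-1}}_F$ bound recorded in the preliminaries). If anything, the mild care needed is to make sure the bounds $\norm{\v+y}_2,\norm{\v+z}_2\le\sqrt n$ are actually used on the \emph{outer} vector of each rank-one term and not on the $\v+y-\v+z$ factor, since it is the latter that we want to keep on the right-hand side.
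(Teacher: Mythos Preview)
Your proof is correct and follows essentially the same approach as the paper: the paper uses the decomposition $(\v+y-\v+z)\v+y^\top + \v+z(\v+y-\v+z)^\top$ rather than your $\v+y(\v+y-\v+z)^\top + (\v+y-\v+z)\v+z^\top$, but this is an immaterial variation of the same add-and-subtract trick, followed by the triangle inequality and the bound $\norm{\v+y}_2+\norm{\v+z}_2\leq 2\sqrt{n}$.
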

\begin{proof}
	We have that
	\begin{align*}
		\norm{\v+y \cdot \v+y^\top - \v+z \cdot \v+z^\top}_F
		&= \norm{(\v+y - \v+z) \cdot \v+y^\top + \v+z \cdot (\v+y^\top - \v+z^\top)}_F \\
		&\leq \norm{(\v+y - \v+z) \cdot \v+y^\top}_F
				+ \norm{\v+z \cdot (\v+y^\top - \v+z^\top)}_F \\
		&\leq (\norm{\v+y}_2 + \norm{\v+z}_2) \cdot
				\norm{\v+y - \v+z}_2.
	\end{align*}
	Since the entries of $\v+y$ and $\v+z$ are in $[-1,1]$, we get
	that $\norm{\v+y}_2 + \norm{\v+z}_2 \leq 2\sqrt{n}$, which implies the lemma.
\end{proof}

The following lemma is a corollary of the Laplacian-solver technique by Koutis, Miller and
Peng~\cite{koutis2014approaching} and allows us to efficiently solve linear
systems approximately.
\begin{lemma}
\label{lem:laplacian-solver}
	Let $\m+D\in\mathbb{R}^{n\times n}$ be a diagonal matrix with entries
	$\m+D_{ii}\geq 1$, let $\m+L$ be the Laplacian of an undirected, connected,
	weighted graph, let $\v+b\in\mathbb{R}^n$ with $\norm{\v+b}_2\leq\poly(n)$ and let
	$\varepsilon > 0$.
	Then there exists a function $\Solve(\m+D + \m+L,\v+b,\varepsilon)$ that
	returns a vector $\widetilde{\v+x}\in\mathbb{R}^n$ such that
	$\norm{\widetilde{\v+x} - (\m+D + \m+L)^{-1} \v+b}_2 \leq \varepsilon$ in time
	$\tO(m\log(1/\varepsilon))$.
\end{lemma}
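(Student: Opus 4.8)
The plan is to reduce the statement to the nearly-linear-time SDD solver of Koutis, Miller and Peng~\cite{koutis2014approaching} and then to translate its guarantee, which is stated in the energy norm and is \emph{relative}, into the \emph{absolute} Euclidean-norm bound required here. First I would observe that $\m+M := \m+D + \m+L$ is a symmetric diagonally dominant M-matrix: its off-diagonal entries are $-w_{ij}\le 0$, its $i$-th diagonal entry is $\m+D_{ii} + \sum_{j}w_{ij}$, which is at least $\sum_j\abs{\m+M_{ij}}$, and the strictly positive shift $\m+D$ (with $\m+D_{ii}\ge 1$) makes $\m+M$ invertible. Such a linear system is handled directly by the KMP solver; alternatively one adds a single auxiliary vertex joined to every vertex~$i$ with weight $\m+D_{ii}$ to obtain a graph Laplacian $\m+L'$ on $n+1$ vertices whose Schur complement onto the original vertices is exactly $\m+M$, and solves that Laplacian system instead. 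In either case, for any target accuracy $\delta\in(0,1)$ the solver runs in time $\tO(m\log(1/\delta))$ and returns a vector $\widetilde{\v+x}$ with the relative energy-norm guarantee
\[
  \norm{\widetilde{\v+x} - \m+M^{-1}\v+b}_{\m+M}
  \;\le\; \delta \cdot \norm{\m+M^{-1}\v+b}_{\m+M},
\]
where $\norm{\v+v}_{\m+M}^2 = \v+v^\top\m+M\,\v+v$.

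Next I would convert this to a Euclidean bound using the spectral facts from Lemma~\ref{lem:eigenvalues-M}. Since every eigenvalue of $\m+M$ is at least~$1$, we have $\v+v^\top\m+M\,\v+v \ge \norm{\v+v}_2^2$, so the energy norm dominates the Euclidean norm and therefore $\norm{\widetilde{\v+x} - \m+M^{-1}\v+b}_2 \le \norm{\widetilde{\v+x} - \m+M^{-1}\v+b}_{\m+M}$. For the right-hand side, $\norm{\m+M^{-1}\v+b}_{\m+M}^2 = \v+b^\top\m+M^{-1}\v+b \le \lambda_{\max}(\m+M^{-1})\norm{\v+b}_2^2 \le \norm{\v+b}_2^2$, because $\lambda_{\max}(\m+M^{-1}) = 1/\lambda_{\min}(\m+M) \le 1$. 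Combining the two estimates gives
\[
  \norm{\widetilde{\v+x} - \m+M^{-1}\v+b}_2 \;\le\; \delta\cdot\norm{\v+b}_2 \;\le\; \delta\cdot\poly(n),
\]
using the hypothesis $\norm{\v+b}_2 \le \poly(n)$.

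Finally I would pick $\delta = \varepsilon/\poly(n)$ with the polynomial being exactly the one bounding $\norm{\v+b}_2$; then the returned vector satisfies $\norm{\widetilde{\v+x} - (\m+D+\m+L)^{-1}\v+b}_2 \le \varepsilon$, and the running time is
$\tO\!\left(m\log(\poly(n)/\varepsilon)\right) = \tO\!\left(m(\log n + \log(1/\varepsilon))\right) = \tO(m\log(1/\varepsilon))$,
since the additive $\log n$ factor is absorbed by the $\tO$ notation. Defining $\Solve(\m+D+\m+L,\v+b,\varepsilon)$ to be this procedure (invoking KMP with accuracy parameter $\delta$) proves the lemma.

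The step requiring the most care is the norm conversion: the KMP guarantee is both relative and in the $\m+M$-norm, whereas we need an absolute guarantee in $\norm{\cdot}_2$. This conversion hinges on $\lambda_{\min}(\m+M)\ge 1$ (precisely Lemma~\ref{lem:eigenvalues-M}) and on the standing assumption $\norm{\v+b}_2\le\poly(n)$, and one has to verify that the resulting $\poly(n)$ blow-up in the accuracy parameter costs only an additive $O(\log n)$ in the exponent and therefore disappears into $\tO(\cdot)$. Everything else — the SDDM-to-Laplacian reduction and the quotation of the KMP running time — is standard.
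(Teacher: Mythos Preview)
Your proposal is correct and follows essentially the same approach as the paper: invoke the Koutis--Miller--Peng SDD solver, then convert its relative $\m+M$-norm guarantee into an absolute $\ell_2$ bound using $\lambda_{\min}(\m+M)\ge 1$ (Lemma~\ref{lem:eigenvalues-M}) together with $\norm{\v+b}_2\le\poly(n)$, and finally absorb the resulting $\log(\poly(n))$ term into the $\tO(\cdot)$ notation. If anything, your write-up of the norm conversion is slightly cleaner than the paper's, and your mention of the auxiliary-vertex reduction to a pure Laplacian is a harmless extra detail.
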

\begin{proof}
For a symmetric matrix $\m+A\in\mathbb{R}^{n\times n}$, we write $\m+A^+$ to
denote the Moore--Penrose pseudoinverse.  We say that $\m+A$ is
\emph{diagonally dominant} if for all $i$,
$\m+A_{ii}\geq\sum_{j\neq i} \abs{\m+A_{ij}}$.
For a vector $\v+x\in\mathbb{R}^n$ we set
$\norm{\v+x}_{\m+A} = \sqrt{ \v+x^\top \m+A \v+x }$.
We will use the following result by 
Koutis, Miller and Peng~\cite{koutis2014approaching}.

\begin{lemma}[Koutis, Miller, Peng~\cite{koutis2014approaching}]
\label{lem:laplacian-solver-original}
	Let $\m+A\in\mathbb{R}^{n\times n}$ be a symmetric, diagonally dominant
	matrix with $m$ non-zero entries, let $\v+b\in\mathbb{R}^n$ and let
	$\varepsilon > 0$.
	Then there exists a function $\Solve(\m+A,\v+b,\varepsilon)$, which
	returns a vector $\widetilde{\v+x}\in\mathbb{R}^{n\times n}$ such that
	$\norm{\widetilde{\v+x} - \m+A^+ \v+b}_{\m+A} \leq \varepsilon \norm{\m+A^+ \v+b}_{\m+A}$
	in expected time $\tO(m \log(1/\varepsilon))$.
\end{lemma}

We start with an observation about the norm $\norm{\v+v}_{\m+A}$, where we use
that diagonally dominant matrices are positive semidefinite:
\begin{align*}
	\norm{\v+v}_{\m+A}
	&= \sqrt{\v+v^\top \m+A \v+v} \\
	&= \sqrt{\v+v^\top \m+A^{1/2} \m+A^{1/2} \v+v} \\
	&= \norm{\m+A^{1/2} \v+v}_2 \\
	&\geq \sigma_{\min}(\m+A^{1/2}) \norm{\v+v}_2.
\end{align*}
By rearranging terms we get that
$\norm{\v+v}_2 \leq \frac{1}{\sigma_{\min}(\m+A^{1/2})} \norm{\v+v}_{\m+A}$
if $\sigma_{\min}(\m+A^{1/2})>0$.

We use the algorithm from Lemma~\ref{lem:laplacian-solver-original}
with $\m+A = \m+D + \m+L$, $\v+b=\v+b$ and
$\varepsilon'=\frac{\varepsilon}{\norm{\v+b}_2}$ to obtain a vector
$\widetilde{\v+x}$. Note that since we assume that $\norm{\v+b}_2 \leq \poly(n)$, we
get that this algorithm runs in time
$\tO(m \log(1/\varepsilon')) = \tO(m \log(1/\varepsilon))$, since this only adds
additional $\bigO(\lg n)$-term, which is hidden in the $\tO(\cdot)$-notation.

Now we observe that by Lemma~\ref{lem:eigenvalues-M}, $\m+D + \m+L$ is positive
definite with all eigenvalues at least~$1$. Thus
$(\m+D + \m+L)^+ = (\m+D + \m+L)^{-1}$. Furthermore, the lemma implies that
$\sigma_{\min}((\m+D + \m+L)^{1/2}) = \lambda_{\min}(\m+D + \m+L) \geq 1$
and that all eigenvalues of $(\m+D + \m+L)^{-1}$ are in the interval $(0,1]$.

Now we use our previous result about $\norm{\v+v}_{\m+A}$ to get that
\begin{align*}
	&\norm{\widetilde{\v+x} - (\m+D + \m+L)^{-1} \v+b}_2 \\
	&\leq \frac{1}{\sigma_{\min}( (\m+D + \m+L)^{1/2})}
			\cdot \norm{\widetilde{\v+x} - (\m+D + \m+L)^{-1} \v+b}_{\m+D + \m+L} \\
	&\leq \frac{1}{\sigma_{\min}( (\m+D + \m+L)^{1/2})}
			\cdot \varepsilon' \norm{(\m+D + \m+L)^{-1} \v+b}_{\m+D + \m+L} \\
	&= \frac{1}{\sigma_{\min}( (\m+D + \m+L)^{1/2})}
			\cdot \varepsilon' \sqrt{\v+b^\top (\m+D + \m+L)^{-1} \v+b} \\
	&\leq \frac{1}{\sigma_{\min}( (\m+D + \m+L)^{1/2})}
			\cdot \varepsilon' \lambda_{\max}( (\m+D + \m+L)^{-1} ) \norm{\v+b}_2 \\
	&\leq \varepsilon' \cdot \norm{\v+b}_2 \\
	&= \varepsilon.
\end{align*}
\end{proof}

\subsection{Proof of Lemma~\ref{lem:weight-edges}}
First, recall that since $\m+X$ and $\m+Y$ are row-stochastic matrices with
non-negative entries. Thus, we get that
$\norm{\m+X\m+Y + \m+Y^T \m+X^T}_{1,1} =  \norm{\m+X\m+Y}_{1,1} + \norm{\m+Y^T \m+X^T}_{1,1}$.

Next, we have that
\begin{align*}
	\norm{\m+X \m+Y \v+1}_{1,1}
	&= \v+1^\top \m+X \m+Y \v+1 \\
	&= \v+1^\top \m+X \v+1 \\
	&= \v+1^\top \v+1 \\
	&= n.
\end{align*}

Similarly, since $\m+Y^\top$ and $\m+X^\top$ are column-stochastic, we get that
\begin{align*}
	\norm{\m+Y^\top \m+X^\top}_{1,1}
	&= \v+1^\top \m+Y^\top \m+X^\top \v+1 \\
	&= \v+1^\top \m+X^\top \v+1 \\
	&= \v+1^\top \v+1 \\
	&= n.
\end{align*}

Summing over these two quantities proves the lemma.

\subsection{Proof of Proposition~\ref{prop:fast-opinions}}
\label{sec:proof-prop-fast-opinions}

Since this proof is rather involved, we first present a short proof sketch before giving the full details.

\subsubsection{Proof sketch}
	Algorithm~\ref{alg:opinions} is based on the observation that
	using the Woodbury matrix identity with
	$\m+M = \m+I + \m+L + \diag(\AXC\v+1)$,
	and $\m+U$ and $\m+V$ as before, we get that
	\begin{align*}
		\zXC &= \m+M^{-1} \v+s
			 	+ \frac{CW}{2n}
					\m+M^{-1}
					\m+U
			 		\left(
						\m+I
						- \frac{CW}{2n} \m+V \m+M^{-1} \m+U
					\right)^{-1}
					\m+V
					\m+M^{-1} \v+s.
	\end{align*}
	Now Algorithm~\ref{alg:opinions} basically computes this quantity from
	right to left.
    Our main insight here is that we can compute the quantities $\m+M^{-1}\v+s$ and $\m+M^{-1}\m+U$ using the Laplacian solver from Lemma~\ref{lem:laplacian-solver}. Here, we approximate $\m+M^{-1}\m+U$ column-by-column using the call $\Solve(\m+M,\v+w_j,\varepsilon_{\m+R})$, where $\v+w_j$ is the $j$-th column of $\m+U$ and $\varepsilon_{\m+R}$ is a suitable error parameter. The remaining matrix multiplications are efficient since $\m+U$ has only $2k$~columns and since $\m+V$ has only $2k$~rows.
 
    To obtain our guarantees for the approximation error, we have to perform an intricate error analysis to ensure that errors do not compound too much. This is a challenge since we solve $\m+I - \frac{CW}{2n} \m+V\m+M^{-1}\m+U$ only approximately but then we have to compute an inverse of this approximate quantity. 
In the proposition we use the assumptions that $\m+V \m+M^{-1} \m+U$ exists and
that $\norm{\m+V \m+M^{-1} \m+U}_2 \leq 0.99 \frac{2n}{CW}$, to ensure that this can be done without obtaining too much error.
\sninline{While in general
this may not be true, we show experimentally that this condition is satisfied in
practice.} \sninline{We have to run these experiments.} %
In the proof we will also show that these assumptions imply that the inverse
$\m+S^{-1}$ used in the algorithm exists.

\subsubsection{Formal proof}
    We state the pseudocode of the algorithm in Algorithm~\ref{alg:opinions}.

	Set $\m+U = \begin{pmatrix} \m+X & \m+Y^\top \end{pmatrix}$ and
	$\m+V = \begin{pmatrix} \m+Y \\ \m+X^\top \end{pmatrix}$.
	Now observe that $\frac{CW}{2n} \m+U \m+V = \AXC$.

	Recall that the Woodbury matrix identity states that
	\begin{align*}
		(\m+M + \m+U \m+C \m+V)^{-1}
		= \m+M^{-1}
			- \m+M^{-1} \m+U (\m+C^{-1} + \m+V \m+M^{-1} \m+U)^{-1} \m+V \m+M^{-1}.
	\end{align*}

	Using the Woodbury matrix identity with
	$\m+M = \m+I + \m+L + \diag(\AXC\v+1)$,
	$\m+C = -\frac{CW}{2n} \m+I$,
	and $\m+U$ and $\m+V$ as before,
	we obtain that
	\begin{align*}
		\zXC &= (\m+I + \m+L + \LXC)^{-1} \v+s \\
			 &= (\m+I + \m+L + \diag(\AXC\v+1) - \AXC)^{-1} \v+s \\
			 &= \left(\m+M -
					\m+U \cdot
					 \frac{CW}{2n} \m+I
					\cdot \m+V
				\right)^{-1} \v+s \\
			 &= \m+M^{-1} \v+s
			 	- \m+M^{-1}
					\m+U
			 		\left(
						- \frac{2n}{CW} \m+I
						+ \m+V \m+M^{-1} \m+U
					\right)^{-1}
					\m+V
					\m+M^{-1} \v+s \\
			 &= \m+M^{-1} \v+s
			 	+ \frac{CW}{2n}
					\m+M^{-1}
					\m+U
			 		\left(
						\m+I
						- \frac{CW}{2n} \m+V \m+M^{-1} \m+U
					\right)^{-1}
					\m+V
					\m+M^{-1} \v+s,
	\end{align*}
	where we use that $\m+C^{-1} = - \frac{2n}{CW} \m+I$.

	We start by giving some intuition why Algorithm~\ref{alg:opinions} computes
	an approximate version of $\zXC$. We present the running time analysis and
	the formal error analysis below.
	
	First, observe that $\m+M$ is a symmetric, diagonally dominant matrix with
	$\bigO(m)$ entries and diagonal entries $\m+M_{ii} \geq 1$. Thus, whenever we
	wish to solve $\m+M^{-1} \v+v$, we can use the Laplacian solver from
	Lemma~\ref{lem:laplacian-solver}. 

	Next, note that $\v+z_1$ is an approximation of $\m+M^{-1} \v+s$.  Then
	$\v+y_1$ becomes an approximation of $\m+V \m+M^{-1} \v+s$.  Now $\m+R$ is
	an efficient approximation of $\m+M^{-1} \m+U$ by using
	Lemma~\ref{lem:laplacian-solver}.  Hence, $\m+S$ is an approximation of
	$\m+I - \frac{CW}{2n} \m+V \m+M^{-1} \m+U$ and $\m+T$ approximates
	$(\m+I - \frac{CW}{2n} \m+V \m+M^{-1} \m+U)^{-1}$; we note that in the proof
	of Claim~\ref{claim:norm_E_T} we also point out why this inverse exists
	under the assumptions of the lemma.
	Continuing this approach, we obtain that $\v+y_2$ approximates
	$\left( \m+I
			- \frac{CW}{2n}\m+V \m+M^{-1} \m+U \right)^{-1}
					\m+V \m+M^{-1} \v+s$,
	$\v+y_3$ approximates
	$\m+U\left( \m+I
			- \frac{CW}{2n} \m+V \m+M^{-1} \m+U \right)^{-1}
					\m+V \m+M^{-1} \v+s$,
	$\v+z_2$ approximates
	$\m+M^{-1} \m+U^\top \left( \m+I
			- \frac{CW}{2n} \m+V \m+M^{-1} \m+U \right)^{-1}
					\m+V \m+M^{-1} \v+s$,
	and hence $\tzXC = \v+z_1 + \frac{CW}{2n} \v+z_2$ approximates $\zXC$.
	
	\textbf{Running time analysis.} Now let us consider the running time.
	Let us start by observing that to obtain our running time bounds we can only
	apply the Laplacian solvers on vectors $\v+b$ with
	$\norm{\v+b}_2\leq\poly(n)$. Note that for $\Solve(\m+M,\v+s,\varepsilon_{\v+z_1})$
	and $\Solve(\m+M,\v+w_i,\varepsilon_{\m+R})$, $i=1,\dots,2k$, this is clear since
	the entries in $\v+s$ and $\v+w_i$ are bounded by $[-1,1]$ and $[0,1]$,
	respectively. For $\Solve(\m+M,\v+y_3,\varepsilon_{\v+z_2})$, we note that one
	can show that $\norm{\v+y_3} \leq \poly(n)$ similar to the proofs of
	Claims~\ref{claim:error-third-term} and~\ref{claim:error-second-term}
	below.

	First, note that $\diag(\AXC \v+1)$ can be computed in time $\bigO(nk)$.
	Given $\diag(\AXC \v+1)$, we can compute $\m+M$ in time $\bigO(m)$ because it is
	the sum of $\m+L$ and two diagonal matrices.
	Next, $\v+z_1$ can be computed in time
	$\tO(m \lg(1/\varepsilon_{\v+z_1})) = \tO(m \lg(W/\varepsilon))$ using the guarantee
	from Lemma~\ref{lem:laplacian-solver}.
	In the next step, $\v+y_1 = \m+V \v+z_1 \in \mathbb{R}^{2k}$ can be computed
	in time $\bigO(nk)$ since $\m+V$ is a $(2k)\times n$ matrix.
	Then by Lemma~\ref{lem:laplacian-solver}$, \m+R$ can be computed in time
	$\tO(mk \lg(1/\varepsilon_{\m+R})) = \tO(mk \lg(W/\varepsilon))$ since we
	need $2k$~calls $\Solve(\m+M,\v+w_j,\varepsilon_{\m+R})$.
	As $\m+I$ has only $\bigO(k)$ non-zero entries and
	$\m+V$ and $\m+R$ are matrices of sizes $(2k)\times n$ and $n \times (2k)$,
	respectively, we can compute $\m+S\in\mathbb{R}^{(2k)\times(2k)}$ in time
	$\bigO(nk^2)$.  Due to size of $\m+S$, we can compute $\m+T = \m+S^{-1}$ in time
	$\bigO(k^3)$ using Gaussian elimination and thus we obtain
	$\v+y_2\in\mathbb{R}^{2k}$ in time $\bigO(k^2)$. Then we can compute $\v+y_3$ in
	time $\bigO(nk)$ since $\m+U\in\mathbb{R}^{n\times(2k)}$ and we need time
	$\tO(m \lg(1/\varepsilon_{\v+z_2})) = \tO(m \lg(W/\varepsilon))$ for the call to
	$\Solve(\m+M,\v+y_3,\varepsilon_{\v+z_2})$ by
	Lemma~\ref{lem:laplacian-solver}.
	Summing over all terms above, we obtain our desired running time bound.

	\textbf{Analysis of approximation error.}
	Let $\v+z^*_2$ denote the error-free version of $\v+z_2$, i.e.,
	\begin{align*}
		\v+z_2^* = \m+M^{-1} \m+U
			 		\left(
						\m+I - \frac{CW}{2n} \m+V \m+M^{-1} \m+U
					\right)^{-1}
					\m+V \m+M^{-1} \v+s.
	\end{align*}
	Observe that the difference between $\v+z_2$ and $\v+z_2^*$ will be
	our main source of error when approximating $\zXC$ with $\tzXC$. Hence, next
	we write down $\v+z_2$ to understand where we get inaccuracies compared to
	$\v+z_2^*$.

	First, we set $\v+z_1^*=\m+M^{-1} s$, i.e., it is the exact solution of
	$\m+M \v+z_1^* = \v+s$. Hence, we get that
	$\v+z_1 = \v+z_1^* + \v+e_{\v+z_1}$ where
	$\v+e_{\v+z_1}$ is the error vector introduced by the Laplacian solver.
	Thus, we get that
	$\v+y_1 = \m+V \v+z_1 = \m+V \m+M^{-1} \v+s + \m+V \v+e_{\v+z_1}$.
	Next, let $\m+E_{\m+T}$ be the error matrix such that
	$\m+T = \left(\m+I - \frac{CW}{2n} \m+V \m+M^{-1} \m+U \right)^{-1}
			+ \m+E_{\m+T}$.
	Then we have that
	\begin{align*}
		\v+y_2 &= \m+T \v+y_1
		= \left(\m+I - \frac{CW}{2n} \m+V \m+M^{-1} \m+U \right)^{-1}
				\m+V \m+M^{-1} \v+s
			+ \left(\left(\m+I - \frac{CW}{2n} \m+V \m+M^{-1} \m+U \right)^{-1} +
					\m+E_{\m+T}\right) \m+V \v+e_{\v+z_1}
			+ \m+E_{\m+T} \m+V \m+M^{-1} \v+s.
	\end{align*}
	Next,
	\begin{align*}
		\v+y_3 &= \m+U \v+y_2
		= \m+U \left(\m+I - \frac{CW}{2n} \m+V \m+M^{-1} \m+U \right)^{-1}
				\m+V \m+M^{-1} \v+s
			+ \m+U \left(\left(\m+I - \frac{CW}{2n} \m+V \m+M^{-1} \m+U \right)^{-1} +
					\m+E_{\m+T}\right) \m+V \v+e_{\v+z_1}
			+ \m+U \m+E_{\m+T} \m+V \m+M^{-1} \v+s.
	\end{align*}
	Finally, let $\v+e_{\v+z_2}$ be such that
	$\v+z_2 = \m+M^{-1} \v+y_3 + \v+e_{\v+z_2}$.
	Then we get that
	\begin{align*}
		\v+z_2
		&= \m+M^{-1} \v+y_3 + \v+e_{\v+z_2} \\
		&= \m+M^{-1} \m+U \left(\m+I - \frac{CW}{2n} \m+V \m+M^{-1} \m+U \right)^{-1}
				\m+V \m+M^{-1} \v+s
			+ \m+M^{-1} \m+U \left(\left(\m+I - \frac{CW}{2n} \m+V \m+M^{-1} \m+U \right)^{-1} +
					\m+E_{\m+T}\right) \m+V \v+e_{\v+z_1} \\
		&\quad + \m+M^{-1} \m+U \m+E_{\m+T} \m+V \m+M^{-1} \v+s
			+ \v+e_{\v+z_2} \\
		&= \v+z_2^*
			+ \m+M^{-1} \m+U \left(\left(\m+I - \frac{CW}{2n} \m+V \m+M^{-1} \m+U \right)^{-1} + \m+E_{\m+T}\right) \m+V \v+e_{\v+z_1}
			+ \m+M^{-1} \m+U \m+E_{\m+T} \m+V \m+M^{-1} \v+s
			+ \v+e_{\v+z_2}.
	\end{align*}

	The above implies that we return an approximation $\tzXC$ such that
	\begin{equation}
	\label{eq:error-terms}
	\begin{aligned}
		&\norm{\tzXC - \zXC}_2 \\
		&= \norm{\v+z_1 + \frac{CW}{2n} \v+z_2 - \v+z_1^* - \frac{CW}{2n} \v+z_2^*}_2 \\
		&= \norm{\v+e_{\v+z_1}
			+ \frac{CW}{2n} \left(
					\m+M^{-1} \m+U \left(\left(\m+I - \frac{CW}{2n} \m+V \m+M^{-1} \m+U \right)^{-1} + \m+E_{\m+T}\right) \m+V \v+e_{\v+z_1}
					+ \m+M^{-1} \m+U \m+E_{\m+T} \m+V \m+M^{-1} \v+s
					+ \v+e_{\v+z_2} }_2
				\right) \\
		&\leq \norm{\v+e_{\v+z_1}}_2
			+ \frac{CW}{2n}\left(
					\norm{ \m+M^{-1} \m+U \left(\left(\m+I - \frac{CW}{2n} \m+V \m+M^{-1} \m+U \right)^{-1} + \m+E_{\m+T}\right) \m+V \v+e_{\v+z_1}}_2
					+ \norm{\m+M^{-1} \m+U \m+E_{\m+T} \m+V \m+M^{-1} \v+s}_2
					+ \norm{\v+e_{\v+z_2} }_2
				\right).
	\end{aligned}
	\end{equation}
	Thus, for the remainder of the proof, we bound these four error terms.

	\begin{claim}
	\label{claim:norm-inverse}
		$\norm{\left( \m+I - \frac{CW}{2n} \m+V \m+M^{-1} \m+U \right)^{-1}}_2
			\leq 100$.
	\end{claim}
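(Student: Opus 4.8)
\textbf{Proof proposal for Claim~\ref{claim:norm-inverse}.}
The plan is to invoke the Neumann series directly. By the standing assumption of Proposition~\ref{prop:fast-opinions} we have $\norm{\m+V \m+M^{-1} \m+U}_2 \leq 0.99 \cdot \frac{2n}{CW}$, hence
\[
	\Bigl\lVert \tfrac{CW}{2n}\, \m+V \m+M^{-1} \m+U \Bigr\rVert_2
	\;=\; \tfrac{CW}{2n}\, \norm{\m+V \m+M^{-1} \m+U}_2
	\;\leq\; 0.99 \;<\; 1 .
\]
Since this spectral norm is strictly below $1$, the Neumann series applies: the matrix $\m+I - \frac{CW}{2n}\,\m+V \m+M^{-1} \m+U$ is invertible (which, as noted in the surrounding text, is part of what we need to justify) and
\[
	\Bigl( \m+I - \tfrac{CW}{2n}\, \m+V \m+M^{-1} \m+U \Bigr)^{-1}
	\;=\; \sum_{i=0}^{\infty} \Bigl( \tfrac{CW}{2n}\, \m+V \m+M^{-1} \m+U \Bigr)^{i}.
\]

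Taking spectral norms, using the triangle inequality and submultiplicativity ($\norm{\m+A^i}_2 \le \norm{\m+A}_2^i$), I would then bound
\[
	\Bigl\lVert \Bigl( \m+I - \tfrac{CW}{2n}\, \m+V \m+M^{-1} \m+U \Bigr)^{-1} \Bigr\rVert_2
	\;\leq\; \sum_{i=0}^{\infty} \Bigl\lVert \tfrac{CW}{2n}\, \m+V \m+M^{-1} \m+U \Bigr\rVert_2^{\,i}
	\;\leq\; \sum_{i=0}^{\infty} 0.99^{\,i}
	\;=\; \frac{1}{1 - 0.99}
	\;=\; 100,
\]
which is exactly the stated bound.

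There is really no serious obstacle here: the entire argument is a one-line application of the Neumann-series fact recalled in the linear-algebra preliminaries, and the only point worth stating explicitly is that the hypothesis gives a contraction factor of $0.99$ strictly less than $1$, so the geometric series converges and the constant $100$ comes out of $1/(1-0.99)$. I would also remark that this simultaneously discharges the existence claim for the inverse, which is reused later (e.g., when defining $\m+E_{\m+T}$ in the error analysis of Algorithm~\ref{alg:opinions}).
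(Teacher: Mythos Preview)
Your proof is correct and essentially identical to the paper's own argument: both invoke the assumption $\norm{\m+V \m+M^{-1} \m+U}_2 \leq 0.99\cdot\frac{2n}{CW}$, apply the Neumann series, and bound the result by the geometric series $\sum_{i\ge 0} 0.99^i = 100$. Your added remark that this also yields existence of the inverse is apt and matches what the paper uses later.
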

	\begin{proof}
		Using our assumption that
		$\norm{\m+V \m+M^{-1} \m+U}_2 \leq 0.99 \frac{2n}{CW}$,
	   	the Neumann series, triangle inequality and the geometric
		series, we obtain that
		\begin{align*}
			\norm{\left( \m+I - \frac{CW}{2n} \m+V \m+M^{-1} \m+U \right)^{-1}}_2
			&= \norm{
					\sum_{i=0}^\infty \left(\frac{CW}{2n} \m+V \m+M^{-1} \m+U\right)^i
				}_2 \\
			&\leq
				\sum_{i=0}^\infty 
					\left(\frac{CW}{2n}\right)^{i}
					\norm{ \m+V \m+M^{-1} \m+U }_2^i \\
			&\leq \sum_{i=0}^\infty 0.99^i \\
			&= \frac{1}{1-0.99} \\
			&= 100.
			\qedhere
		\end{align*}
	\end{proof}

	\begin{claim}
	\label{claim:norm_E_T}
		$\norm{\m+E_{\m+T}}_2 \leq
			\min\left\{ 100,
				\frac{2n}{CW} \cdot
				\frac{\varepsilon/4}{\norm{\m+U}_2 \cdot \norm{\m+V}_2
					\cdot \norm{\v+s}_2} \right\}$.
	\end{claim}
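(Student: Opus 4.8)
The plan is to view $\m+E_{\m+T}$ as a perturbation of $\left(\m+I - \frac{CW}{2n}\m+V\m+M^{-1}\m+U\right)^{-1}$ and to bound it with the resolvent identity $\m+X^{-1}-\m+Y^{-1} = \m+X^{-1}(\m+Y-\m+X)\m+Y^{-1}$ recalled in the preliminaries. Throughout, write $\m+S^{*} = \m+I - \frac{CW}{2n}\m+V\m+M^{-1}\m+U$ for the error-free matrix, so the goal is to bound $\norm{\m+S^{-1} - (\m+S^{*})^{-1}}_2$; recall from Claim~\ref{claim:norm-inverse} that $\norm{(\m+S^{*})^{-1}}_2 \le 100$. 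First I would write $\m+R = \m+M^{-1}\m+U + \m+E_{\m+R}$, where the $j$-th column of $\m+E_{\m+R}$ is the error vector of the call $\Solve(\m+M,\m+w_j,\varepsilon_{\m+R})$; by Lemma~\ref{lem:laplacian-solver} each of these $2k$ columns has Euclidean norm at most $\varepsilon_{\m+R}$, hence $\norm{\m+E_{\m+R}}_2 \le \norm{\m+E_{\m+R}}_F \le \sqrt{2k}\,\varepsilon_{\m+R}$. Therefore $\m+S = \m+S^{*} + \m+E_{\m+S}$, where $\m+E_{\m+S} = -\frac{CW}{2n}\m+V\m+E_{\m+R}$ and $\norm{\m+E_{\m+S}}_2 \le \frac{CW}{2n}\,\norm{\m+V}_2\,\sqrt{2k}\,\varepsilon_{\m+R}$.

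Next I would establish that $\m+S$ is invertible (so that $\m+T=\m+S^{-1}$ and thus $\m+E_{\m+T}$ are well-defined, as promised in the proof of Proposition~\ref{prop:fast-opinions}) and that $\norm{\m+S^{-1}}_2 \le 1000$. To do this, I plug the \emph{first} term of the minimum in the definition of $\varepsilon_{\m+R}$, namely $\varepsilon_{\m+R} \le \frac{1}{2k}\cdot 0.009\,\frac{2n}{CW\,\norm{\m+V}_F}$, into the bound on $\norm{\m+E_{\m+S}}_2$, and combine it with Claim~\ref{claim:norm-inverse}, the inequality $\norm{\m+V}_2 \le \norm{\m+V}_F$, and $\sqrt{2k}/(2k)\le 1$, to obtain $\norm{(\m+S^{*})^{-1}\m+E_{\m+S}}_2 \le 100\cdot 0.009 = 0.9 < 1$. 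Writing $\m+S = \m+S^{*}\bigl(\m+I + (\m+S^{*})^{-1}\m+E_{\m+S}\bigr)$ and applying the Neumann series then shows $\m+S^{-1}$ exists and $\norm{\m+S^{-1}}_2 \le \norm{(\m+S^{*})^{-1}}_2/(1-0.9) \le 1000$.

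Then I would invoke the resolvent identity $\m+S^{-1} - (\m+S^{*})^{-1} = -\m+S^{-1}\m+E_{\m+S}(\m+S^{*})^{-1}$ together with submultiplicativity of the spectral norm to get $\norm{\m+E_{\m+T}}_2 \le \norm{\m+S^{-1}}_2\cdot\norm{\m+E_{\m+S}}_2\cdot\norm{(\m+S^{*})^{-1}}_2 \le 1000\cdot 100\cdot\norm{\m+E_{\m+S}}_2 = 10^{5}\,\norm{\m+E_{\m+S}}_2$. Finally, substituting the \emph{second} term of the minimum in the definition of $\varepsilon_{\m+R}$, namely $\varepsilon_{\m+R}\le \frac{1}{2k}\cdot\frac{2n}{10^{5}\,CW\,\norm{\m+V}_2}\cdot\min\{100,\ \frac{2n}{CW}\cdot\frac{\varepsilon/4}{\norm{\m+U}_2\,\norm{\m+V}_2\,\norm{\v+s}_2}\}$, into $\norm{\m+E_{\m+S}}_2 \le \frac{CW}{2n}\,\norm{\m+V}_2\,\sqrt{2k}\,\varepsilon_{\m+R}$ and again using $\sqrt{2k}/(2k)\le 1$, I get $\norm{\m+E_{\m+S}}_2 \le 10^{-5}\cdot\min\{100,\ \frac{2n}{CW}\cdot\frac{\varepsilon/4}{\norm{\m+U}_2\,\norm{\m+V}_2\,\norm{\v+s}_2}\}$. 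Multiplying by $10^{5}$ yields exactly the claimed bound.

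The main obstacle is bookkeeping rather than any conceptual difficulty: the constants baked into $\varepsilon_{\m+R}$ (the $\frac{1}{2k}$, the $0.009$, the $10^{5}$) are tuned so that $\varepsilon_{\m+R}$ simultaneously plays two roles — keeping $\norm{(\m+S^{*})^{-1}\m+E_{\m+S}}_2$ bounded away from $1$ (needed both for $\m+S$ to be invertible and for the $\norm{\m+S^{-1}}_2\le 1000$ bound) and making $\norm{\m+E_{\m+S}}_2$ small enough to absorb the $\norm{\m+S^{-1}}_2\,\norm{(\m+S^{*})^{-1}}_2 = 10^{5}$ blow-up. I would need to make sure the replacement $\norm{\m+V}_2 \le \norm{\m+V}_F$ is applied consistently (the algorithm writes $\varepsilon_{\m+R}$ in terms of $\norm{\m+V}_F$) and that no $k$-dependence survives, which it does not since the $2k$ columns of $\m+E_{\m+R}$ contribute only a factor $\sqrt{2k}$ that is dominated by the $\frac{1}{2k}$ appearing in $\varepsilon_{\m+R}$.
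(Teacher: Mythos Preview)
Your proposal is correct and follows essentially the same route as the paper: bound $\norm{\m+E_{\m+R}}_F$ from the column-wise Laplacian-solver guarantees, use a Neumann-series argument to show $\m+S$ is invertible with $\norm{\m+S^{-1}}_2 \le 10^3$, and then apply the resolvent identity together with Claim~\ref{claim:norm-inverse} and the second term in $\varepsilon_{\m+R}$. The only cosmetic differences are that you use the sharper $\norm{\m+E_{\m+R}}_F \le \sqrt{2k}\,\varepsilon_{\m+R}$ (the paper uses the looser $2k\,\varepsilon_{\m+R}$) and you obtain $\norm{\m+S^{-1}}_2 \le 10^3$ by factoring $\m+S = \m+S^{*}(\m+I + (\m+S^{*})^{-1}\m+E_{\m+S})$ rather than applying the Neumann series directly to $\m+S$; both variants yield the same constants and the same conclusion.
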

	\begin{proof}
		Let $\m+E_{\m+R}$ be the error matrix such that
		$\m+R = \m+M^{-1} \m+U + \m+E_{\m+R}$
		and recall that
		$\m+T = \left(\m+I - \frac{CW}{2n} \m+V \m+M^{-1} \m+U \right)^{-1}
			+ \m+E_{\m+T}$.
		Observe that $\m+S = \m+I - \frac{CW}{2n}\m+V \m+R
			= \m+I - \frac{CW}{2n}\m+V (\m+M^{-1} \m+U + \m+E_{\m+R})$.
		Then we have that
		\begin{align*}
			\norm{\m+E_{\m+T}}_2
			&= \norm{
					\m+T - \left(
						\m+I - \frac{CW}{2n} \m+V \m+M^{-1} \m+U
				\right)^{-1}
				}_2 \\
			&= \norm{
					\m+S^{-1} - \left(
						\m+I + \frac{CW}{2n} \m+V \m+M^{-1} \m+U
				\right)^{-1}
				}_2 \\
			&\leq
				\norm{\m+S^{-1}}_2
				\cdot \norm{
					\left(
						\m+I - \frac{CW}{2n} \m+V \m+M^{-1} \m+U
					\right)^{-1}
				}_2
				\cdot \norm{
					\m+S - \left(
						\m+I
						- \frac{CW}{2n} \m+V \m+M^{-1} \m+U
					\right)
				}_2 \\
			&=
				\norm{\m+S^{-1}}_2
				\cdot \norm{
					\left(
						\m+I - \frac{CW}{2n} \m+V \m+M^{-1} \m+U
					\right)^{-1}
				}_2
				\cdot \norm{ \frac{CW}{2n}\m+V \m+E_{\m+R}}_2 \\
			&\leq \frac{CW}{2n}
				\norm{\m+S^{-1}}_2
				\cdot \norm{
					\left(
						\m+I - \frac{CW}{2n} \m+V \m+M^{-1} \m+U
					\right)^{-1}
				}_2
				\cdot \norm{\m+V}_2
			   	\cdot \norm{\m+E_{\m+R}}_2.
		\end{align*}

		Next, let $\v+e_{\v+w_j}$ denote the error in the $j$-th column of
		$\m+R$, i.e., $\v+e_{\v+w_j}$ is the $j$-th column of $\m+E_{\m+R}$.
		Observe that by Lemma~\ref{lem:laplacian-solver} and our choice of
		$\varepsilon_{\m+R} =\frac{1}{2k}
			\min\left\{0.009 \frac{2n}{CW \cdot \norm{\m+V}_F},
			\frac{2n}{10^{5}\cdot CW \cdot \norm{\m+V}_2}
			\cdot \min\left\{ 100,
					\frac{2n}{CW} \cdot \frac{\varepsilon/4}
					{\norm{\m+U}_2 \cdot \norm{\m+V}_2
						\cdot \norm{\v+s}_2} \right\}
			\right\}$,
		we get that $\norm{\v+e_{\v+w_j}}_2 \leq \varepsilon_{\m+R}$ for all $j$.
		Then we get that
		\begin{align*}
			\norm{\m+E_{\m+R}}_2
			&\leq \norm{\m+E_{\m+R}}_F \\
			&= \sqrt{ \sum_j \sum_i (\m+E_{\m+R})_{ij}^2 } \\
			&= \sqrt{ \sum_j \norm{\v+e_{\v+w_j}}_2^2 } \\
			&\leq \sum_j \norm{\v+e_{\v+w_j}}_2 \\
			&\leq 2 k \cdot \varepsilon_{\m+R} \\
			&\leq 
				\min\left\{0.009 \frac{2n}{CW \cdot \norm{\m+V}_F},
				\frac{2n}{10^{5}\cdot CW \cdot \norm{\m+V}_2}
				\cdot \min\left\{ 100,
						\frac{2n}{CW} \cdot \frac{\varepsilon/4}
						{\norm{\m+U}_2 \cdot \norm{\m+V}_2
							\cdot \norm{\v+s}_2} \right\}
				\right\},
		\end{align*}
		where the fourth step holds since $\norm{\v+v}_2\leq\norm{\v+v}_1$
		for any vector $\v+v\in\mathbb{R}^n$.

		Thus, similar to the proof of Claim~\ref{claim:norm-inverse} we obtain
		that
		\begin{align*}
			\norm{\m+S^{-1}}_2
			&= \norm{\left( \m+I - \frac{CW}{2n} \m+V (\m+M^{-1} \m+U + \m+E_{\m+R})
					\right)^{-1}}_2 \\
			&\leq
				\sum_{i=0}^\infty 
					\left(\frac{CW}{2n}\right)^i
					\norm{ \m+V \m+M^{-1} \m+U
							+ \m+V \m+E_{\m+R}
					}_2^i \\
			&\leq 
				\sum_{i=0}^\infty 
					\left(\frac{CW}{2n}\right)^i
					\left(
						\norm{ \m+V \m+M^{-1} \m+U }_2
						+ \norm{\m+V}_F \norm{\m+E_{\m+R}}_F
					\right)^i \\
			&\leq \sum_{i=0}^\infty 
					\left(\frac{CW}{2n}\right)^i
					\left(
						0.999 \frac{2n}{CW}
					\right)^i \\
			&= \sum_{i=0}^\infty 0.999^i \\
			&= 10^3.
		\end{align*}
		Note that this implies that $\m+S^{-1}$ exists.

		Now combining the above results with Claim~\ref{claim:norm-inverse}, we
		get that
		\begin{align*}
			\norm{\m+E_{\m+T}}_2
			&\leq \frac{CW}{2n}
				\norm{\m+S^{-1}}_2
				\cdot \norm{
					\left(
						\m+I - \frac{CW}{2n} \m+V \m+M^{-1} \m+U
					\right)^{-1}
				}_2
				\cdot \norm{\m+V}_2
			   	\cdot \norm{\m+E_{\m+R}}_2 \\
			&\leq \frac{CW}{2n}
				\cdot 10^3
				\cdot 10^2
				\cdot \norm{\m+V}_2
				\cdot \frac{2n}{10^{5}\cdot CW \cdot \norm{\m+V}_2}
					\cdot \min\left\{ 100,
					\frac{2n}{CW} \cdot \frac{\varepsilon/4}
						{\norm{\m+U}_2 \cdot \norm{\m+V}_2
							\cdot \norm{\v+s}_2} \right\} \\
			&\leq \min\left\{ 100,
					\frac{2n}{CW} \cdot \frac{\varepsilon/4}
						{\norm{\m+U}_2 \cdot \norm{\m+V}_2
							\cdot \norm{\v+s}_2} \right\}.
			\qedhere
		\end{align*}
	\end{proof}

	\begin{claim}
	\label{claim:error-third-term}
		$\frac{CW}{2n} \norm{ \m+M^{-1} \m+U \m+E_{\m+T} \m+V \m+M^{-1} \v+s}_2
			\leq \varepsilon/4$.
	\end{claim}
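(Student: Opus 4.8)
\textbf{Proof plan for Claim~\ref{claim:error-third-term}.}
The plan is to bound the norm of the product $\m+M^{-1}\m+U\m+E_{\m+T}\m+V\m+M^{-1}\v+s$ by repeatedly applying submultiplicativity of the spectral norm (and the inequality $\norm{\m+X\v+v}_2\leq\norm{\m+X}_2\norm{\v+v}_2$ at the last factor), and then to plug in the bound on $\norm{\m+E_{\m+T}}_2$ from Claim~\ref{claim:norm_E_T}. Concretely, I would first write
\[
	\norm{\m+M^{-1}\m+U\m+E_{\m+T}\m+V\m+M^{-1}\v+s}_2
	\leq \norm{\m+M^{-1}}_2^2\cdot\norm{\m+U}_2\cdot\norm{\m+E_{\m+T}}_2\cdot\norm{\m+V}_2\cdot\norm{\v+s}_2 .
\]

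Next I would use Lemma~\ref{lem:eigenvalues-M}: since $\m+M=\m+I+\m+L+\diag(\AXC\v+1)$ is of the form $\m+D+\m+L$ with $\m+D_{ii}\geq 1$, all eigenvalues of $\m+M$ are at least $1$, hence all eigenvalues of the symmetric positive definite matrix $\m+M^{-1}$ lie in $(0,1]$, so $\norm{\m+M^{-1}}_2\leq 1$. Thus the bound above simplifies to $\norm{\m+U}_2\cdot\norm{\m+E_{\m+T}}_2\cdot\norm{\m+V}_2\cdot\norm{\v+s}_2$.

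Finally I would invoke Claim~\ref{claim:norm_E_T}, which gives
$\norm{\m+E_{\m+T}}_2 \leq \frac{2n}{CW}\cdot\frac{\varepsilon/4}{\norm{\m+U}_2\cdot\norm{\m+V}_2\cdot\norm{\v+s}_2}$, so that
\[
	\frac{CW}{2n}\norm{\m+M^{-1}\m+U\m+E_{\m+T}\m+V\m+M^{-1}\v+s}_2
	\leq \frac{CW}{2n}\cdot\norm{\m+U}_2\cdot\norm{\m+V}_2\cdot\norm{\v+s}_2
		\cdot\frac{2n}{CW}\cdot\frac{\varepsilon/4}{\norm{\m+U}_2\cdot\norm{\m+V}_2\cdot\norm{\v+s}_2}
	= \frac{\varepsilon}{4},
\]
which is the claim. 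There is essentially no serious obstacle here — the only point requiring a little care is the observation that $\norm{\m+M^{-1}}_2\leq 1$ (so that the two copies of $\m+M^{-1}$ can simply be dropped), and the bookkeeping that the $\norm{\m+U}_2$, $\norm{\m+V}_2$, $\norm{\v+s}_2$ and $\tfrac{CW}{2n}$ factors cancel exactly against the corresponding terms baked into $\varepsilon_{\m+R}$ (and hence into $\norm{\m+E_{\m+T}}_2$) in Claim~\ref{claim:norm_E_T}.
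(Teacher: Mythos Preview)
Your proposal is correct and follows essentially the same approach as the paper: submultiplicativity of the spectral norm, then $\norm{\m+M^{-1}}_2\leq 1$ via Lemma~\ref{lem:eigenvalues-M}, then the bound on $\norm{\m+E_{\m+T}}_2$ from Claim~\ref{claim:norm_E_T}, with all factors cancelling to leave $\varepsilon/4$.
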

	\begin{proof}
		First, observe that $\norm{\m+M^{-1}}_2 \leq 1$ by
		Lemma~\ref{lem:eigenvalues-M}.
		Now using Claim~\ref{claim:norm_E_T} we get that
		\begin{align*}
			\frac{CW}{2n} \norm{ \m+M^{-1} \m+U \m+E_{\m+T} \m+V \m+M^{-1} \v+s}_2
			&\leq \frac{CW}{2n} \cdot \norm{ \m+M^{-1} }_2^2 
				\cdot \norm{\m+U}_2 \cdot \norm{\m+V}_2 \cdot \norm{\v+s}_2
				\cdot \norm{\m+E_{\m+T}}_2 \\
			&\leq \frac{CW}{2n}
				\cdot \norm{\m+U}_2 \cdot \norm{\m+V}_2 \cdot \norm{\v+s}_2
				\cdot \norm{\m+E_{\m+T}}_2 \\
			&\leq \varepsilon/4.
			\qedhere
		\end{align*}
	\end{proof}

	\begin{claim}
	\label{claim:error-second-term}
		$\frac{CW}{2n} \norm{ \m+M^{-1} \m+U \left(\left(\m+I - \frac{CW}{2n} \m+V \m+M^{-1} \m+U \right)^{-1} + \m+E_{\m+T}\right) \m+V \v+e_{\v+z_1}}_2
			\leq \varepsilon/4$.
	\end{claim}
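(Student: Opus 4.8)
The plan is to control the left-hand side by a single application of submultiplicativity of the spectral norm (together with $\norm{\m+A\m+B}_2\le\norm{\m+A}_2\norm{\m+B}_2$ and $\norm{\m+A\v+v}_2\le\norm{\m+A}_2\norm{\v+v}_2$) and then to substitute the bounds proved in the previous claims. The one non-trivial move is to first split the bracketed middle factor with the triangle inequality, $\norm{(\m+I - \frac{CW}{2n}\m+V\m+M^{-1}\m+U)^{-1} + \m+E_{\m+T}}_2 \le \norm{(\m+I - \frac{CW}{2n}\m+V\m+M^{-1}\m+U)^{-1}}_2 + \norm{\m+E_{\m+T}}_2$, after which everything factors. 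Concretely I would write
\begin{align*}
&\frac{CW}{2n} \norm{ \m+M^{-1} \m+U \left(\left(\m+I - \frac{CW}{2n} \m+V \m+M^{-1} \m+U \right)^{-1} + \m+E_{\m+T}\right) \m+V \v+e_{\v+z_1}}_2 \\
&\qquad\le \frac{CW}{2n}\,\norm{\m+M^{-1}}_2\,\norm{\m+U}_2 \left(\norm{\left(\m+I - \frac{CW}{2n}\m+V\m+M^{-1}\m+U\right)^{-1}}_2 + \norm{\m+E_{\m+T}}_2\right)\norm{\m+V}_2\,\norm{\v+e_{\v+z_1}}_2 .
\end{align*}

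Next I would bound each factor by an already-available estimate. First, $\norm{\m+M^{-1}}_2\le 1$ by Lemma~\ref{lem:eigenvalues-M}, since $\m+M=\m+I+\m+L+\diag(\AXC\v+1)$ is of the form $\m+D+\m+L$ with $\m+D_{ii}\ge1$. Second, $\norm{(\m+I-\frac{CW}{2n}\m+V\m+M^{-1}\m+U)^{-1}}_2\le100$ by Claim~\ref{claim:norm-inverse} and $\norm{\m+E_{\m+T}}_2\le100$ by Claim~\ref{claim:norm_E_T}, so the bracketed factor is at most $200$. Third, since $\v+z_1=\Solve(\m+M,\v+s,\varepsilon_{\v+z_1})$ and $\norm{\v+s}_2\le\sqrt n\le\poly(n)$ (the innate opinions lie in $[-1,1]$), Lemma~\ref{lem:laplacian-solver} gives $\norm{\v+e_{\v+z_1}}_2=\norm{\v+z_1-\m+M^{-1}\v+s}_2\le\varepsilon_{\v+z_1}$.

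Combining these, the left-hand side is at most $\frac{CW}{2n}\cdot200\cdot\norm{\m+U}_2\cdot\norm{\m+V}_2\cdot\varepsilon_{\v+z_1}$, and I would finish by plugging in the definition $\varepsilon_{\v+z_1}=\frac{\varepsilon}{4}\min\{1,\frac{2n}{200\,CW\,\norm{\m+U}_2\,\norm{\m+V}_2}\}\le\frac{\varepsilon}{4}\cdot\frac{2n}{200\,CW\,\norm{\m+U}_2\,\norm{\m+V}_2}$, so that the accumulated constant $\frac{CW}{2n}\cdot200\cdot\norm{\m+U}_2\cdot\norm{\m+V}_2$ cancels exactly, leaving $\varepsilon/4$. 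I do not expect a genuine obstacle here: this is a routine norm-bookkeeping argument, and the only thing worth double-checking is that the second argument of the minimum in the definition of $\varepsilon_{\v+z_1}$ was chosen to be precisely the reciprocal of the product of norms that accumulates along the chain above (the ``$1$'' branch of that minimum is what is needed for the separate first error term $\norm{\v+e_{\v+z_1}}_2\le\varepsilon/4$ in Eq.~\eqref{eq:error-terms}, and it can only make the present bound smaller).
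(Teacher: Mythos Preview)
Your proposal is correct and follows essentially the same approach as the paper: apply submultiplicativity, bound the bracketed middle factor by $200$ via the triangle inequality together with Claims~\ref{claim:norm-inverse} and~\ref{claim:norm_E_T}, use $\norm{\m+M^{-1}}_2\le 1$ from Lemma~\ref{lem:eigenvalues-M}, and then substitute the chosen value of $\varepsilon_{\v+z_1}$ so that all factors cancel to leave $\varepsilon/4$. Your additional remark explaining the role of the two branches of the $\min$ in $\varepsilon_{\v+z_1}$ is a nice clarification that the paper does not spell out.
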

	\begin{proof}
		First, observe that by Claims~\ref{claim:norm-inverse}
		and~\ref{claim:norm_E_T} and the triangle
		inequality,
		\begin{align*}
			\norm{ \left(\m+I - \frac{CW}{2n} \m+V \m+M^{-1} \m+U \right)^{-1} + \m+E_{\m+T} }_2
			\leq \norm{ \left(\m+I - \frac{CW}{2n} \m+V \m+M^{-1} \m+U
					\right)^{-1}}_2
		   		+ \norm{\m+E_{\m+T}}_2
			\leq 200.
		\end{align*}

		Using the inequality from above and Lemma~\ref{lem:eigenvalues-M} and
		our choice of
		$\varepsilon_{\v+z_1} = \frac{\varepsilon}{4}
				\cdot \min\left\{
					1,
					\frac{2n}{200 \cdot CW \cdot \norm{\m+U}_2 \cdot \norm{\m+V}_2}
				\right\}$,
		we obtain that
		\begin{align*}
			&\frac{CW}{2n} \norm{ \m+M^{-1} \m+U \left(\left(\m+I - \frac{CW}{2n} \m+V \m+M^{-1} \m+U \right)^{-1} + \m+E_{\m+T}\right) \m+V \v+e_{\v+z_1}}_2 \\
			&\leq \frac{CW}{2n}
				\cdot \norm{\m+M^{-1}}_2
				\cdot \norm{\m+U}_2
				\cdot \norm{ \left(\m+I - \frac{CW}{2n} \m+V \m+M^{-1} \m+U \right)^{-1} + \m+E_{\m+T}}_2
				\cdot \norm{\m+V}_2
				\cdot \norm{\v+e_{\v+z_1}}_2 \\
			&\leq \frac{CW}{2n}
				\cdot \norm{\m+U}_2
				\cdot 200
				\cdot \norm{\m+V}_2
				\cdot \norm{\v+e_{\v+z_1}}_2 \\
			&\leq \varepsilon/4.
			\qedhere
		\end{align*}
	\end{proof}

	Now continuing Eq.~\eqref{eq:error-terms}, using
	Claims~\ref{claim:error-second-term} and~\ref{claim:error-third-term}, and
	our choices $\varepsilon_{\v+z_1} = \frac{\varepsilon}{4}
				\cdot \min\left\{
					1,
					\frac{2n}{200 \cdot CW \cdot \norm{\m+U}_2 \cdot \norm{\m+V}_2}
				\right\}$ and
	$\varepsilon_{\v+z_2} = \frac{2n}{CW} \cdot \frac{\varepsilon}{4}$, we get that
	\begin{align*}
		&\norm{\tzXC - \zXC}_2 \\
		&\leq \norm{\v+e_{\v+z_1}}_2
			+ \frac{CW}{2n}\left(
					\norm{ \m+M^{-1} \m+U \left(\left(\m+I - \frac{CW}{2n} \m+V \m+M^{-1} \m+U \right)^{-1} + \m+E_{\m+T}\right) \m+V \v+e_{\v+z_1}}_2
					+ \norm{\m+M^{-1} \m+U \m+E_{\m+T} \m+V \m+M^{-1} \v+s}_2
					+ \norm{\v+e_{\v+z_2} }_2
				\right) \\
		&\leq \varepsilon/4 + \varepsilon/4 + \varepsilon/4 + \varepsilon/4 \\
		&= \varepsilon.
	\end{align*}

\subsection{Proof of Corollary~\ref{cor:fast-objective-function}}
	We compute $\tzXC$ using Proposition~\ref{prop:fast-opinions} with
	$\varepsilon' = \frac{\varepsilon}{\sqrt{n}}$. Then we set
	$\widetilde{f}=\v+s^\top \tzXC$. Using the Cauchy--Schwarz inequality, we
	obtain that
	\begin{align*}
		\abs{\widetilde{f} - f(\m+X)}
		&= \abs{ \v+s^\top \tzXC - \v+s^\top \zXC} \\
		&= \abs{ \v+s^\top (\tzXC - \zXC)} \\
		&\leq \norm{\v+s}_2 \cdot \norm{\tzXC - \zXC}_2 \\
		&\leq \sqrt{n} \cdot \frac{\varepsilon}{\sqrt{n}} \\
		&= \varepsilon,
	\end{align*}
	where we use that all entries in $\v+s$ are in the interval $[-1,1]$.

\subsection{Proof of Proposition~\ref{prop:gradient}}

\subsubsection{Derivation of the gradient}

\paragraph{Matrixcalculus.}
We use \url{matrixcalculus.org}~\cite{laue2018computing,laue2020simple} to
obtain the gradient. We set $F = \m+I + \m+L$, $c = \frac{CW}{2n}$, $v=\v+1$ and
use the input \emph{s'*inv(F + diag (c*(X*Y + Y'*X')*v) - c*(X*Y + Y'*X')) * s}.

We obtain:
\begin{align*}
	\frac{\partial}{\partial X} \left( s^\top \cdot \mathrm{inv}(F+\mathrm{diag}(c\cdot (X\cdot Y+Y^\top \cdot X^\top )\cdot v)-c\cdot (X\cdot Y+Y^\top \cdot X^\top ))\cdot s \right) = \\\quad\quad -(c\cdot t_{5}\odot t_{7}\cdot (Y\cdot v)^\top +c\cdot v\cdot ((t_{8}\odot t_{6})\cdot Y^\top )-(c\cdot t_{5}\cdot (t_{6}\cdot Y^\top )+c\cdot t_{7}\cdot (t_{8}\cdot Y^\top )))
\end{align*}

where
\begin{itemize}
	\item $T_0 = X\cdot Y$
	\item $T_1 = T_{0}^\top +T_{0}$
	\item $T_2 = T_{0}^\top +T_{0}$
	\item $T_3 = \mathrm{inv}(F+c\cdot \mathrm{diag}(T_{2}\cdot v)-c\cdot T_{2})$
	\item $T_4 = \mathrm{inv}(F^\top +c\cdot \mathrm{diag}(v^\top \cdot T_{1})-c\cdot T_{1})$
	\item $t_5 = T_{4}\cdot s$
	\item $t_6 = s^\top \cdot T_{4}$
	\item $t_7 = T_{3}\cdot s$
	\item $t_8 = s^\top \cdot T_{3}$
\end{itemize}
and
\begin{itemize}
  \item $F$ is a symmetric matrix
  \item $X$ is a matrix
  \item $Y$ is a matrix
  \item $c$ is a scalar
  \item $s$ is a vector
  \item $v$ is a vector
\end{itemize}

\paragraph{Simplification 1.}
We note above that $T_1=T_2$, hence we replace ever occurence of $T_2$ with
$T_1$. Furthermore, in our setting we have that $F$ and $T_1$ are symmetric
$n\times n$ matrices, which implies that $T_3 = T_4$; hence, we replace every
occurence of $T_4$ with $T_3$.

Then we get:
\begin{align*}
	\frac{\partial}{\partial X} \left( \v+s^\top \cdot \mathrm{inv}(F+\mathrm{diag}(c\cdot (X\cdot Y+Y^\top \cdot X^\top)\cdot v)-c\cdot (X\cdot Y+Y^\top \cdot X^\top ))\cdot \v+s \right) = \\\quad\quad -(c\cdot t_{5}\odot t_{7}\cdot (Y\cdot v)^\top +c\cdot v\cdot ((t_{8}\odot t_{6})\cdot Y^\top )-(c\cdot t_{5}\cdot (t_{6}\cdot Y^\top )+c\cdot t_{7}\cdot (t_{8}\cdot Y^\top )))
\end{align*}

where
\begin{itemize}
	\item $T_0 = X\cdot Y$
	\item $T_1 = T_{0}^\top +T_{0}$
	\item $T_3 = \mathrm{inv}(F+c\cdot \mathrm{diag}(T_{1}\cdot v)-c\cdot T_{1})$
	\item $t_5 = T_{3}\cdot s$
	\item $t_6 = s^\top \cdot T_{3}$
	\item $t_7 = T_{3}\cdot s$
	\item $t_8 = s^\top \cdot T_{3}$
\end{itemize}

\paragraph{Simplification 2.}
Next, observe that $t_5 = t_7$ and $t_6 = t_8$. Hence, we only use $t_5$ and
$t_6$. 

Then we get:
\begin{align*}
	\frac{\partial}{\partial X} \left( s^\top \cdot \mathrm{inv}(F+\mathrm{diag}(c\cdot (X\cdot Y+Y^\top \cdot X^\top )\cdot v)-c\cdot (X\cdot Y+Y^\top \cdot X^\top ))\cdot s \right) = \\\quad\quad -(c\cdot t_{5}\odot t_{5}\cdot (Y\cdot v)^\top +c\cdot v\cdot ((t_{6}\odot t_{6})\cdot Y^\top )-(c\cdot t_{5}\cdot (t_{6}\cdot Y^\top )+c\cdot t_{5}\cdot (t_{6}\cdot Y^\top )))
\end{align*}

where
\begin{itemize}
	\item $T_0 = X\cdot Y$
	\item $T_1 = T_{0}^\top +T_{0}$
	\item $T_3 = \mathrm{inv}(F+c\cdot \mathrm{diag}(T_{1}\cdot v)-c\cdot T_{1})$
	\item $t_5 = T_{3}\cdot s$
	\item $t_6 = s^\top \cdot T_{3}$
\end{itemize}

\paragraph{Simplification 3.}
Next, we remove the leading minus sign by multiplying it inside. We also observe
that $t_5 = t_6^\top$ since $T_3$ is symmetric. Hence, we only use $t_5$ and
$t_5^\top$. 

Then we get:
\begin{align*}
	\frac{\partial}{\partial X} \left( s^\top \cdot \mathrm{inv}(F+\mathrm{diag}(c\cdot (X\cdot Y+Y^\top \cdot X^\top )\cdot v)-c\cdot (X\cdot Y+Y^\top \cdot X^\top ))\cdot s \right) =
	\\\quad\quad
	-c\cdot t_{5}\odot t_{5}\cdot (Y\cdot v)^\top
	-c\cdot v\cdot ((t_{5}^\top \odot t_{5}^\top)\cdot Y^\top )
	+c\cdot t_{5}\cdot (t_{5}^\top\cdot Y^\top )
	+c\cdot t_{5}\cdot (t_{5}^\top \cdot Y^\top )
\end{align*}

where
\begin{itemize}
	\item $T_0 = X\cdot Y$
	\item $T_1 = T_{0}^\top +T_{0}$
	\item $T_3 = \mathrm{inv}(F+c\cdot \mathrm{diag}(T_{1}\cdot v)-c\cdot T_{1})$
	\item $t_5 = T_{3}\cdot s$
\end{itemize}

\paragraph{Simplification 4.}
Next, we first observe that the final two terms above are the same. Also, recall
that we set $v = \v+1$ and that $Y$ is a row-stochastic $k\times n$ matrix.
Hence, we get that $Yv = Y \v+1 = \v+1_k$, where $\v+1_k\in\mathbb{R}^k$ is a
row-vector in which all entries are set to $1$.
We also substitute our notation from above and observe that
$T_3=\mathrm{inv}(F+c\cdot \mathrm{diag}(T_{1}\cdot v)-c\cdot T_{1})
	= (\m+I + \m+L + \LXC)^{-1}$.

Then we get:
\begin{align*}
	\frac{\partial}{\partial X} \left( s^\top \cdot \mathrm{inv}(F+\mathrm{diag}(c\cdot (X\cdot Y+Y^\top \cdot X^\top )\cdot v)-c\cdot (X\cdot Y+Y^\top \cdot X^\top ))\cdot s \right) =
	\\\quad\quad
	-c\cdot t_{5}\odot t_{5}\cdot \v+1_k^\top
	-c\cdot \v+1\cdot ((t_{5}^\top \odot t_{5}^\top)\cdot Y^\top )
	+2c\cdot t_{5}\cdot (t_{5}^\top\cdot Y^\top )
\end{align*}
where
\begin{itemize}
	\item $t_5 = (\m+I + \m+L + \LXC)^{-1} \cdot s$
\end{itemize}

\paragraph{Simplification 5.}
Observing that $t_5 = \zXC$, we obtain at our final gradient.

Then we get:
\begin{align*}
	\nabla_{\m+X} \left( s^\top (\m+I + \m+L + \LXC)^{-1} s \right) =
	\frac{CW}{2n} (2\cdot \zXC \cdot (\zXC^\top\cdot \m+Y^\top )
	- \zXC\odot \zXC\cdot \v+1_k^\top
	- \v+1 ((\zXC^\top \odot \zXC^\top)\cdot \m+Y^\top ))
\end{align*}

\subsubsection{The gradient is Lipschitz}

We need to show that
$\norm{\nabla_{\m+X} f(\m+X_1) - \nabla_{\m+X} f(\v+X_2)}_F \leq
	L \norm{\m+X_1 - \m+X_2}_F$
	for all $\m+X_1,\m+X_2\in Q$.

Using the previously derived gradient and the triangle inequality, we get that
\begin{align*}
	\norm{\nabla_{\m+X} f(\m+X_1) - \nabla_{\m+X} f(\v+X_2)}_F
	&\leq \frac{CW}{2n} \Huge[
			\norm{2\cdot \zXCO \cdot (\zXCO^\top\cdot \m+Y^\top )
				- 2\cdot \zXCT \cdot (\zXCT^\top\cdot \m+Y^\top )}_F \\
		&\quad
			+ \norm{\zXCO\odot \zXCO\cdot \v+1_k^\top
				- \zXCT \odot \zXCT \cdot \v+1_k^\top}_F \\
		&\quad
			+ \norm{\v+1_n ((\zXCO^\top \odot \zXCO^\top)\cdot \m+Y^\top )
				- \v+1_n ((\zXCT^\top \odot \zXCT^\top)\cdot \m+Y^\top ))}_F
		\Huge] \\
	&= \frac{CW}{2n} \huge[
			\norm{2\cdot (\zXCO \cdot \zXCO^\top
					- \zXCT \cdot \zXCT^\top) \m+Y^\top )}_F \\
		&\quad
			+ \norm{(\zXCO\odot \zXCO
				- \zXCT \odot \zXCT) \v+1_k^\top)}_F \\
		&\quad
			+ \norm{\v+1_n (\zXCO^\top \odot \zXCO^\top
				- \zXCT^\top \odot \zXCT^\top)\cdot \m+Y^\top}_F
		\huge]
\end{align*}
Next, we will now bound each of these terms invidually.

We start by making a crucial observation about the difference of $\zXCO$ and
$\zXCT$, where we use Lemma~\ref{lem:eigenvalues-M} in the final step:
\begin{align*}
	&\norm{\zXCO - \zXCT}_F \\
	&= \norm{(\m+I + \m+L + \LXCO)^{-1} \v+s
					- (\m+I + \m+L + \LXCT)^{-1} \v+s}_F \\
	&\leq \norm{\v+s}_2
		\cdot \norm{(\m+I + \m+L + \LXCO)^{-1}
					- (\m+I + \m+L + \LXCT)^{-1}}_F \\
	&\leq \norm{\v+s}_2
		\cdot
   		\norm{(\m+I + \m+L + \LXCO)^{-1}}_2
		\cdot
   		\norm{(\m+I + \m+L + \LXCT)^{-1}}_2
		\cdot
   		\norm{(\m+I + \m+L + \LXCO)
			- (\m+I + \m+L + \LXCT)}_F \\
	&= 2 \norm{\v+s}_2
		\cdot
   		\norm{(\m+I + \m+L + \LXCO)^{-1}}_2
		\cdot
   		\norm{(\m+I + \m+L + \LXCT)^{-1}}_2
		\cdot
   		\norm{ (\m+X_1 - \m+X_2) \m+Y}_F \\
	&\leq 2 \norm{\v+s}_2
		\cdot
		\norm{\m+Y}_2
		\cdot
   		\norm{ \m+X_1 - \m+X_2}_F.
\end{align*}

Next, observe that for all $\m+X$ we have that $\norm{\zXC}_2 \leq \sqrt{n}$
since $\norm{\zXC}_2 \leq \norm{(\m+I + \m+L + \LXCO)^{-1}}_2 \cdot
\norm{\v+s}_2 \leq \norm{\v+s}_2 \leq \sqrt{n}$, where we use
Lemma~\ref{lem:eigenvalues-M} and the fact that the entries in $\v+s$ are in
$[-1,1]$. In particular, this implies that $\norm{\zXCO}_2 + \norm{\zXCT}_2 \leq
2\sqrt{n}$. 

Now using Lemma~\ref{lem:difference-rank-one} together with
$\norm{\zXCO}_2 + \norm{\zXCT}_2 \leq 2\sqrt{n}$ and our inequalty from above,
we obtain that
\begin{align*}
	&\norm{2\cdot (\zXCO \cdot \zXCO^\top - \zXCT \cdot
					\zXCT^\top) \m+Y^\top}_F \\
	&\leq 2 \norm{\m+Y}_2
		\cdot \norm{ \zXCO \cdot \zXCO^\top - \zXCT \cdot \zXCT^\top }_F \\
	&\leq 4\sqrt{n} \cdot \norm{\m+Y}_2
		\cdot \norm{\zXCO - \zXCT}_F \\
	&\leq 8 \sqrt{n}
		\cdot \norm{\v+s}_2
		\cdot
		\norm{\m+Y}_2^2
		\cdot
   		\norm{ \m+X_1 - \m+X_2}_F.
\end{align*}

Next, we show a fact about $\zXCO\odot \zXCO - \zXCT \odot \zXCT$. Using
Lemma~\ref{lem:difference-hadamard} and our inequality from above, we get that
\begin{align*}
	&\norm{\zXCO\odot \zXCO - \zXCT \odot \zXCT}_2 \\
	&\leq 2 \norm{\zXCO - \zXCT}_2 \\
	&\leq 4 \norm{\v+s}_2
		\cdot
		\norm{\m+Y}_2
		\cdot
   		\norm{ \m+X_1 - \m+X_2}_F.
\end{align*}
Using the inequality from above, we get that
\begin{align*}
	&\norm{(\zXCO\odot \zXCO - \zXCT \odot \zXCT) \v+1_k^\top}_F \\
	&= \sqrt{k} \cdot \norm{\zXCO\odot \zXCO - \zXCT \odot \zXCT}_2 \\
	&\leq 4 \sqrt{k} \cdot \norm{\v+s}_2
		\cdot \norm{\m+Y}_2
		\cdot \norm{\m+X_1 - \m+X_2}_F.
\end{align*}
Furthermore, we again use the inequality from above to obtain that
\begin{align*}
	&\norm{\v+1_n (\zXCO^\top \odot \zXCO^\top
				- \zXCT^\top \odot \zXCT^\top)\cdot \m+Y^\top}_F \\
	&\leq \norm{\m+Y}_2 \cdot
		\norm{\v+1_n (\zXCO^\top \odot \zXCO^\top
				- \zXCT^\top \odot \zXCT^\top)}_F \\
	&= \norm{\m+Y}_2 \cdot \sqrt{n} \cdot
		\norm{\zXCO^\top \odot \zXCO^\top
				- \zXCT^\top \odot \zXCT^\top}_2 \\
	&\leq 4 \sqrt{n} \cdot \norm{\v+s}_2
		\cdot \norm{\m+Y}_2^2
		\cdot \norm{\m+X_1 - \m+X_2}_F.
\end{align*}

By combining the results from above and using our assumption that $k\leq n$, we
get that the gradient is Lipschitz with
$L=\frac{8CW}{\sqrt{n}} \cdot \norm{\v+s}_2 \cdot \norm{\m+Y}_2^2$.

\subsubsection{Approximate gradient}
We use Proposition~\ref{prop:fast-opinions} with
$\varepsilon'=\frac{1}{8}\cdot
	\frac{\min\{ \varepsilon, \sqrt{\varepsilon} \} \cdot \sqrt{n}}
		{(1+CW)(1+\norm{\m+Y}_F)}$
to obtain $\tzXC$. Note that in the numerator we use
$\min\{\varepsilon,\sqrt{\varepsilon}\}$ since it might be that
$\sqrt{\varepsilon}>\varepsilon$ for
$\varepsilon<1$; in denominator we use the terms $1+CW$ and $1+\norm{\m+Y}_F$
because it is possible that $CW<1$ and also $\norm{\m+Y}_F<1$.
Observe that with this choice of $\varepsilon'$, Proposition~\ref{prop:fast-opinions}
guarantees a running time of $\tO((mk + nk^2 + k^3)\lg(W/\varepsilon))$.

Now we consider the approximate gradient
\begin{align*}
	\widetilde{\nabla}_{\m+X} f(\m+X, C) =
		\frac{CW}{2n} (2\cdot \tzXCT \cdot \tzXCT^\top\cdot \m+Y^\top
		- (\tzXCT\odot \tzXCT) \cdot \v+1_k^\top
		- \v+1_n (\tzXCT^\top \odot \tzXCT^\top)\cdot \m+Y^\top.
\end{align*}

\sninline{Somehow the spacing in the following equations is weird. Dunno why.
Also, we have \\huge does not seem to work for parentheses, so we have to fix
that.}

Let $\nabla_{\m+X} f(\m+X)$ denote the exact gradient. Then we get that
\begin{align*}
	\norm{\widetilde{\nabla}_{\m+X} f(\m+X) - \nabla_{\m+X} f(\m+X)}_F
	&\leq \frac{CW}{2n} \Huge[
			\norm{ 2\cdot (\tzXCT \cdot \tzXCT^\top - \zXC \cdot \zXC^\top)
					\cdot\m+Y^\top }_F \\
		&\quad + \norm{(\tzXCT\odot \tzXCT - \zXC\odot\zXC) \cdot \v+1_k^\top}_F \\
		&\quad + \norm{\v+1_n (\tzXCT^\top\odot\tzXCT^\top-\zXC^\top\odot\zXC^\top)\cdot \m+Y^\top}_F
		\Huge] \\
	&\leq \frac{CW}{2n} \Huge[
			2 \norm{\m+Y}_F \cdot \norm{ \tzXCT \cdot \tzXCT^\top - \zXC \cdot \zXC^\top}_2 \\
		&\quad + \sqrt{k} \cdot \norm{\tzXCT\odot \tzXCT - \zXC\odot\zXC}_2 \\
		&\quad + \sqrt{n} \norm{\m+Y}_F \cdot
		\norm{\tzXCT^\top\odot\tzXCT^\top-\zXC^\top\odot\zXC^\top}_2
		\Huge].
\end{align*}

Now let $\v+e$ be the error vector such that $\tzXC=\zXC + \v+e$ and recall that
by Proposition~\ref{prop:fast-opinions} we have that $\norm{\tzXC-\zXC}_2 =
\norm{\v+e}_2 \leq \varepsilon'$. Then
\begin{align*}
	\norm{ \tzXCT \cdot \tzXCT^\top - \zXC \cdot \zXC^\top}_2
	&= \norm{ (\zXC + \v+e)\cdot (\zXC + \v+e)^\top - \zXC \cdot \zXC^\top }_2 \\
	&\leq \norm{ \v+e \cdot \v+e^\top}_2 + 2\norm{\zXC \cdot \v+e^\top}_2 \\
	&\leq \norm{\v+e}_2^2 + 2\norm{\zXC}_2 \cdot \norm{\v+e}_2 \\
	&\leq \norm{\v+e}_2^2 + 2\sqrt{n} \cdot \norm{\v+e}_2 \\
	&\leq \frac{3}{8}\cdot \frac{\varepsilon \cdot n}
								{(1+CW) \cdot (1+\norm{\m+Y}_F)}.
\end{align*}

Next, using Lemma~\ref{lem:difference-hadamard} we get that
\begin{align*}
	\norm{\tzXCT\odot \tzXCT - \zXC\odot\zXC}_2
	&\leq 2 \norm{\tzXCT - \zXC}_2 \\
	&\leq 2 \norm{\v+e} \\
	&\leq \frac{1}{4} \cdot \frac{\min\{\varepsilon, \sqrt{\varepsilon}\} \cdot \sqrt{n}}
								 {(1+CW)\cdot(1+\norm{\m+Y}_F)}.
\end{align*}

\sninline{Spacing is weird again.}

Now continuing our inequalities from above and using that $k\leq n$, we obtain
that
\begin{align*}
	\norm{\widetilde{\nabla}_{\m+X} f(\m+X) - \nabla_{\m+X} f(\m+X)}_F
	&\leq \frac{CW}{2n} \Huge[
			2 \norm{\m+Y}_F \cdot
				\norm{ \tzXCT \cdot \tzXCT^\top - \zXC \cdot \zXC^\top}_2
					\\
		&\quad + \sqrt{k} \cdot \norm{\tzXCT\odot \tzXCT - \zXC\odot\zXC}_2 \\
		&\quad + \sqrt{n} \norm{\m+Y}_F \cdot
		\norm{\tzXCT^\top\odot\tzXCT^\top-\zXC^\top\odot\zXC^\top}_2 \Huge] \\
	&\leq \frac{CW}{2n} \left(
			\frac{6}{8}\cdot \frac{\varepsilon \cdot n}{1+CW}
			+ \frac{1}{4}\cdot \frac{\min\{\varepsilon, \sqrt{\varepsilon}\} \cdot n}{(1+CW)(1+\norm{\m+Y}_F)}
			+ \frac{1}{4}\cdot \frac{\min\{\varepsilon, \sqrt{\varepsilon}\} \cdot n}{1+CW}
		\right) \\
	&\leq \frac{5}{8} \varepsilon.
\end{align*}

\subsection{Proof of Theorem~\ref{thm:approximation-guarantee}}

\subsubsection{Recap of d'Aspremont's Algorithm}
We start by formally stating the result by D'Aspremont~\cite{d2008smooth} for
gradient descent with approximate gradient.

Suppose we wish to solve the convex optimization problem $\min_{x\in Q} f(x)$,
where $f$ is a convex function mapping to $\mathbb{R}$ and $Q$ is a convex
set of feasible solutions.  D'Aspremont~\cite{d2008smooth} gave an algorithm
which approximately solves this problem using gradient descent, where the
gradient contains some amount of noise.  This algorithm is based on a method by
Nesterov~\cite{nesterov1983method}.

We state the pseudo\-code in Algorithm~\ref{alg:dAspremont}, where we let
$\widetilde{T}_Q = \argmin_{y\in Q}
	\left\{ \langle \widetilde{\nabla}_{\m+X} f(x), y - x\rangle
		+ \frac{L}{2} \norm{y-x}^2 \right\}$
and $d(x)$ is a prox-function for the set $Q$, i.e., $d$ is continuous and
strongly convex with parameter $\kappa$.  We state the guarantees for the
algorithm in the lemma below, where $\norm{\cdot}^*$ is the dual norm of
$\norm{\cdot}$.

{\SetAlgoNoLine
 \LinesNumbered
 \DontPrintSemicolon
 \SetAlgoNoEnd
\begin{algorithm2e}[t]
$x_0 \gets \argmin_{x\in Q} d(x)$\;
\For{$k=1,\dots,T$}{
	Compute the approximate gradient $\widetilde{\nabla}_{\m+X} f(x_k)$\;
	$y_k \gets \widetilde{T}(x_k)$\;
	$z_k \gets \argmin_{x\in Q}
			\{\frac{L}{\kappa} d(x) 
			+ \sum_{i=0}^k \alpha_i[f(x_i) + \langle \widetilde{\nabla}_{\m+X} f(x_i), x-x_i\rangle\}$\;
	$A_k \gets \sum_{i=0}^k \alpha_i$\;
	$\tau_k \gets \alpha_{k+1}/A_{k+1}$\;
	$x_{k+1} \gets \tau_k z_k + (1-\tau_k)y_k$\;
}
\caption{Gradient-descent algorithm with noisy gradient}
\label{alg:dAspremont}
\end{algorithm2e}
}

\begin{lemma}[d'Aspremont~\cite{d2008smooth}]
\label{lem:dAspremont}
	Let $L,\kappa,\varepsilon > 0$.
	Suppose the following conditions hold:
	\begin{enumerate}
		\item $\norm{\nabla_{\m+X} f(x) - \nabla_{\m+X} f(y)}^*
				\leq L \cdot \norm{x - y}$ for all $x,y\in Q$,
		\item $\abs{\langle \widetilde{\nabla}_{\m+X} f(x) - \nabla_{\m+X} f(x),
		   				y - z \rangle} \leq \varepsilon$ for all $x,y,z\in Q$,
		\item $x_0 = \argmin_{x\in Q} d(x)$ and $d(x_0)=0$,
		\item $d(x) \geq \frac{\kappa}{2} \norm{x - x_0}^2$,
		\item $(\alpha_k)_k$ is a sequence such that $0<\alpha_0\leq1$ and
			$\alpha^2_k \leq A_k$ for all $k\geq 0$.
	\end{enumerate}
	Then Algorithm~\ref{alg:dAspremont} satisfies 
	$f(y_k) - f(x^*) \leq \frac{Ld(x^*)}{A_k\kappa} + 3\varepsilon$
	for all $k\leq T$, where $x^* = \argmin_{x\in Q} f(x)$ is the minimizer of
	the optimization problem.
\end{lemma}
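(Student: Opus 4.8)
The plan is to reproduce d'Aspremont's inexact-gradient adaptation of Nesterov's estimate-sequence argument, carefully tracking the extra error contributed by using $\widetilde{\nabla}_{\m+X} f$ in place of the true gradient. First I would introduce the estimate functions
\[
	\phi_k(x) = \frac{L}{\kappa}\, d(x)
		+ \sum_{i=0}^k \alpha_i\Big(f(x_i) + \langle \widetilde{\nabla}_{\m+X} f(x_i),\, x - x_i\rangle\Big),
\]
so that the algorithm's $z_k = \argmin_{x\in Q}\phi_k(x)$ and we write $\phi_k^\star = \phi_k(z_k)$. Because $d$ is $\kappa$-strongly convex (condition~4) while the remaining terms are affine, each $\phi_k$ is $L$-strongly convex, which yields the workhorse lower bound $\phi_k(x) \geq \phi_k^\star + \tfrac{L}{2}\norm{x-z_k}^2$ for all $x\in Q$. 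Together with $x_0 = \argmin_{x\in Q} d(x)$ and $d(x_0)=0$ (condition~3), this pins down the base of the induction.

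The heart of the proof is the invariant $A_k f(y_k) \leq \phi_k^\star + \rho_k$, where $\rho_k$ collects the accumulated gradient-approximation error. For the inductive step I would expand $\phi_{k+1}(x) = \phi_k(x) + \alpha_{k+1}\big(f(x_{k+1}) + \langle \widetilde{\nabla}_{\m+X} f(x_{k+1}),\, x - x_{k+1}\rangle\big)$, substitute the strong-convexity lower bound for $\phi_k$, and minimize the resulting quadratic in $x$ in closed form, producing a term $-\tfrac{\alpha_{k+1}^2}{2L}\big(\norm{\widetilde{\nabla}_{\m+X} f(x_{k+1})}^\star\big)^2$. In parallel I would apply the descent lemma (a consequence of $L$-smoothness, condition~1) to the projected step $y_{k+1} = \widetilde{T}(x_{k+1})$, which contributes $+\tfrac{L}{2}\norm{y_{k+1}-x_{k+1}}^2$. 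Condition~5, $\alpha_{k+1}^2 \leq A_{k+1}$, is precisely what matches these two coefficients, and the choice $x_{k+1} = \tau_k z_k + (1-\tau_k)y_k$ with $\tau_k = \alpha_{k+1}/A_{k+1}$ is what makes the momentum cross-terms telescope into $A_{k+1} f(y_{k+1})$ after one application of convexity of $f$.

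To conclude I would instantiate the invariant at $x = x^\star$ and bound
\[
	A_k f(y_k) \leq \frac{L}{\kappa}\, d(x^\star)
		+ \sum_{i=0}^k \alpha_i\Big(f(x_i) + \langle \widetilde{\nabla}_{\m+X} f(x_i),\, x^\star - x_i\rangle\Big) + \rho_k.
\]
Replacing $\widetilde{\nabla}_{\m+X} f(x_i)$ by $\nabla_{\m+X} f(x_i)$ in each summand costs at most $\varepsilon$ by condition~2, after which convexity gives $f(x_i) + \langle \nabla_{\m+X} f(x_i),\, x^\star - x_i\rangle \leq f(x^\star)$, so the weighted sum is at most $A_k f(x^\star) + A_k\varepsilon$. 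Dividing through by $A_k$ then leaves $f(y_k) - f(x^\star) \leq \tfrac{L d(x^\star)}{A_k \kappa} + \varepsilon + \rho_k/A_k$.

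The main obstacle is showing that $\rho_k / A_k$ stays bounded rather than growing with $k$: the inexact gradient is used once in the descent lemma and once in the estimate-function linearization at every iteration, so a careless bound would accumulate $\Theta(k)\varepsilon$. The crux of d'Aspremont's refinement of Nesterov — and the step I expect to demand the most care — is that condition~2 bounds \emph{each} inner-product error uniformly by $\varepsilon$, \emph{independently} of which iterates appear, and that the very same $\alpha_k^2 \leq A_k$ weighting that telescopes the exact terms keeps the weighted error sum proportional to $A_k$ rather than to $\sum_i \alpha_i^2$. Collecting the three distinct places where the approximation enters — the two per-iteration uses folded into $\rho_k/A_k$, plus the final convexity substitution at $x^\star$ — yields exactly the additive $3\varepsilon$ claimed, completing the proof.
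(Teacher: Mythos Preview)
The paper does not supply its own proof of this lemma: it is stated as a quotation of d'Aspremont's result~\cite{d2008smooth} and then applied as a black box in the proof of Theorem~\ref{thm:approximation-guarantee}. So there is no ``paper's proof'' to compare against; your proposal is effectively a reconstruction of d'Aspremont's original argument.

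As a sketch of that argument, your outline is on target. The estimate-sequence machinery, the $L$-strong convexity of $\phi_k$ coming from $\tfrac{L}{\kappa}d(\cdot)$, the invariant $A_k f(y_k)\le \phi_k^\star+\rho_k$, and the final instantiation at $x^\star$ are all the right moving parts, and you correctly identify condition~5 ($\alpha_k^2\le A_k$) as what keeps the accumulated inexact-gradient error proportional to $A_k$ rather than growing. One small inaccuracy: in the inductive step you say you ``minimize the resulting quadratic in $x$ in closed form,'' but $z_{k+1}$ is the constrained minimizer over $Q$, not the unconstrained one; the argument actually proceeds via the strong-convexity lower bound $\phi_k(x)\ge\phi_k^\star+\tfrac{L}{2}\norm{x-z_k}^2$ evaluated at $z_{k+1}$, not by an explicit formula for the minimizer. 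This does not affect the overall logic. The accounting that delivers exactly $3\varepsilon$ (two per-iteration uses of the inexact gradient in the invariant, one in the final convexity step) is the point where d'Aspremont's paper is most careful, and your proposal acknowledges this without fully carrying it out; if you were writing the proof in full, that bookkeeping is where the work lies.
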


\subsubsection{Proof of Theorem~\ref{thm:approximation-guarantee}}

Recall that we let $Q\subseteq\mathbb{R}^{n\times k}$ denote the convex subset
of feasible solutions for Problem~\ref{problem:min-dpi}. We prove the following
useful lemma, which shows how certain functions can be optimized over our set of
constraints $Q$.

We start by proving a lemma that will be useful later when implementing
d'Aspremont's algorithm.
\begin{lemma}
\label{lem:opt-constraints}
	Let $\beta>0$ and let $\m+B,\m+X\in\mathbb{R}^{n\times k}$. Then the
	minimizer
	\begin{align*}
		\m+X^* =
			\argmin_{\m+U\in Q} \left\{
				 \frac{\beta}{2} \norm{\m+U - \m+X}_F^2
				 + \langle \m+B, \m+U \rangle_F
			\right\}
			\in\mathbb{R}^{n\times k}
	\end{align*}
	satisfies
	\begin{align*}
		\m+X^*_i
		= \ProjQ{\m+X_i - \frac{1}{\beta}\m+B_i - \mu_i^* \v+1},
	\end{align*}
	for all $i$. Furthermore, $\m+X^*$ can be computed in time $\bigO(nk)$.
\end{lemma}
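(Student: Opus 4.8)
The plan is to exploit that the feasible set $Q$ of Problem~\ref{problem:min-dpi} is a Cartesian product over the rows of the variable: its constraints couple only entries lying in the same row, so $Q = Q_1 \times \cdots \times Q_n$ with $Q_i = \{\v+u \in \reals^k : \m+X^{(L)}_i \leq \v+u \leq \m+X^{(U)}_i \text{ and } \langle \v+u, \v+1\rangle = 1\}$. Since $\norm{\m+U - \m+X}_F^2 = \sum_{i=1}^n \norm{\m+U_i - \m+X_i}_2^2$ and $\langle \m+B, \m+U\rangle_F = \sum_{i=1}^n \langle \m+B_i, \m+U_i\rangle$ both split additively over rows, the minimization over $\m+U\in Q$ decouples into $n$ independent row subproblems
\[
	\m+X^*_i = \argmin_{\v+u\in Q_i}
		\left\{ \frac{\beta}{2}\norm{\v+u - \m+X_i}_2^2 + \langle \m+B_i, \v+u\rangle \right\}.
\]

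First I would complete the square in each row subproblem. Writing $\v+p_i := \m+X_i - \frac{1}{\beta}\m+B_i$, the objective equals $\frac{\beta}{2}\norm{\v+u - \v+p_i}_2^2$ plus a constant that does not depend on $\v+u$, so $\m+X^*_i$ is exactly the Euclidean projection of $\v+p_i$ onto the polytope $Q_i$ (the factor $\beta$ plays no role here beyond being positive, since it already sits inside $\v+p_i$). It then remains to express this projection in the stated closed form and to bound its cost.

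Next I would characterize, for an arbitrary point $\v+p\in\reals^k$, the projection of $\v+p$ onto $Q_i$ by peeling off the single equality constraint $\langle\v+u,\v+1\rangle = 1$ with a Lagrange multiplier $\mu\in\reals$ while keeping the box explicit. For fixed $\mu$, minimizing $\frac{1}{2}\norm{\v+u - \v+p}_2^2 + \mu(\langle\v+u,\v+1\rangle - 1)$ over the box $[\m+X^{(L)}_i,\m+X^{(U)}_i]$ is separable across coordinates and is solved by coordinate-wise clamping of $\v+p - \mu\v+1$ to $[\m+X^{(L)}_i,\m+X^{(U)}_i]$, i.e.\ by $\ProjQ{\v+p - \mu\v+1}$ with $\ProjQ{\cdot}$ now read as the projection onto that box. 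The dual variable is pinned down by forcing the clamped vector onto the hyperplane: $\mu$ must be a root of $g_i(\mu) = \v+1^\top\ProjQ{\v+p - \mu\v+1} - 1$, which is continuous, non-increasing, and piecewise linear in $\mu$. Because $\m+X_i\in Q_i$ by the standing assumptions $\m+X^{(L)}\leq\m+X\leq\m+X^{(U)}$ and $\sum_j\m+X_{ij}=1$, the set $Q_i$ is non-empty, $g_i$ changes sign, and a root $\mu_i^*$ exists; the resulting clamped vector is the unique projection onto the closed convex set $Q_i$, even in the degenerate case where $\mu_i^*$ itself is not unique. Since the equality constraint is affine there is no duality gap, so this KKT point is the genuine projection; specializing to $\v+p = \v+p_i = \m+X_i - \frac{1}{\beta}\m+B_i$ gives $\m+X^*_i = \ProjQ{\m+X_i - \frac{1}{\beta}\m+B_i - \mu_i^*\v+1}$, as claimed.

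For the running time I would invoke Kiwiel's breakpoint-search algorithm~\cite{kiwiel2008breakpoint}: the breakpoints of $g_i$ are the $2k$ numbers $(\v+p_i)_j - \m+X^{(L)}_{ij}$ and $(\v+p_i)_j - \m+X^{(U)}_{ij}$, $j=1,\dots,k$, and a median-based search isolates the linear piece containing the root and solves for $\mu_i^*$ in $\bigO(k)$ time, with no sorting. Doing this for each of the $n$ rows, together with the $\bigO(nk)$ work to form all the $\v+p_i$ and to perform the final clampings, yields the claimed $\bigO(nk)$ bound. I expect the main obstacle to be the duality bookkeeping rather than any computation: one must argue carefully that a single scalar multiplier suffices and that the ``clamp, then adjust $\mu$'' procedure returns the true projection onto $Q_i$ rather than merely a stationary point. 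I would settle this by appealing to strict convexity of $\v+u \mapsto \norm{\v+u - \v+p_i}_2^2$, which makes the projection onto the closed convex set $Q_i$ unique, together with affineness of the equality constraint, which makes the KKT conditions both necessary and sufficient.
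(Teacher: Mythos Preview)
Your proposal is correct and follows essentially the same approach as the paper: complete the square to reduce to a prox/projection problem, exploit the row-separability of $Q$ to decouple into $n$ independent $k$-dimensional projections, and invoke Kiwiel's algorithm for each row. If anything, your argument is more complete than the paper's: the paper's proof stops at ``$\m+X^*_i$ is the orthogonal projection of $\m+X_i - \tfrac{1}{\beta}\m+B_i$ onto $Q_i$, compute it with Kiwiel,'' and never actually derives or even defines the multiplier $\mu_i^*$ appearing in the lemma statement, whereas your Lagrangian/KKT discussion makes explicit that $\mu_i^*$ is the dual variable for the sum-to-one constraint and that $\ProjQ{\cdot}$ in the displayed formula must be read as the box clamp.
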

\begin{proof}
	We have that
	\begin{align*}
		\m+X^* 
		&=\argmin_{\m+U\in Q} \left\{
				 \frac{\beta}{2} \norm{\m+U - \m+X}_F^2
				 + \langle \m+B, \m+U \rangle_F
			\right\} \\
		&=\argmin_{\m+U\in \mathbb{R}^{n\times k}} \left\{
				 \delta_Q(\m+U)
				 + \frac{\beta}{2} \norm{\m+U - \m+X}_F^2
				 + \langle \m+B, \m+U \rangle_F
			\right\} \\
		&= \argmin_{\m+U\in \mathbb{R}^{n\times k}} \left\{
				 \delta_Q(\m+U)
				 + \frac{\beta}{2} \left(
						 \langle \m+U, \m+U \rangle_F
						 - 2 \langle \m+U, \m+X \rangle_F
						 + \langle \m+X, \m+X \rangle_F
						 \right)
				 + \langle \m+B, \m+U \rangle_F
			\right\} \\
		&= \argmin_{\m+U\in \mathbb{R}^{n\times k}} \left\{
				 \delta_Q(\m+U)
				 + \frac{\beta}{2} \left(
						 \langle \m+U, \m+U \rangle_F
						 - 2 \langle \m+U, \m+X - \frac{1}{\beta} \m+B \rangle_F
						 + \langle \m+X, \m+X \rangle_F
						 \right)
			\right\} \\
		&= \argmin_{\m+U\in \mathbb{R}^{n\times k}} \left\{
				 \delta_Q(\m+U)
				 + \frac{\beta}{2} \left(
						 \langle \m+U, \m+U \rangle_F
						 - 2 \langle \m+U, \m+X - \frac{1}{\beta} \m+B \rangle_F
						 + \langle \m+X, \m+X \rangle_F
						 - 2\langle \m+X, \frac{1}{\beta} \m+B \rangle_F
						 + \langle \frac{1}{\beta} \m+B, \frac{1}{\beta} \m+B \rangle_F
						 \right)
			\right\} \\
		&= \argmin_{\m+U\in \mathbb{R}^{n\times k}} \left\{
				 \delta_Q(\m+U)
				 + \frac{\beta}{2} \left(
						 \langle \m+U, \m+U \rangle_F
						 - 2 \langle \m+U, \m+X - \frac{1}{\beta} \m+B \rangle_F
						 + \langle \m+X-\frac{1}{\beta} \m+B, \m+X-\frac{1}{\beta} \m+B \rangle_F
						 \right)
			\right\} \\
		&= \argmin_{\m+U\in \mathbb{R}^{n\times k}} \left\{
				 \delta_Q(\m+U)
				 + \frac{\beta}{2} 
					\norm{\m+U - \left(\m+X - \frac{1}{\beta}\m+B\right)}_F
			\right\} \\
		&= \argmin_{\m+U\in \mathbb{R}^{n\times k}} \left\{
				 \delta_Q(\m+U)
				 + \frac{1}{2}
					\norm{\m+U - \left(\m+X - \frac{1}{\beta}\m+B\right)}_F
			\right\} \\
		&= \prox_{\delta_Q}\left(\m+X - \frac{1}{\beta}\m+B\right),
	\end{align*}
	where in the penultimate step we use that $\delta_Q(\m+U)$ only takes the
	values $0$ and~$\infty$ and therefore we did not have to rescale $\delta_Q$.

	Next, let us consider the projection on our set of $Q$, which is given by all
	$\m+X'$ such that $\norm{\m+X_i'}_{1,1} = 1$ for all $i$ and
	$\m+X^{(L)}\! \leq \m+X' \leq \m+X^{(U)}\!$.
	As discussed before, the first constraint is equivalent to the hyperplane
	constraint $\langle \m+X_i', \v+1 \rangle = 1$ for all $i$, since
	$\m+X_i'\in[0,1]^n$ and the second constraint can be rewritten as a sequence
	of box constraints $\m+X^{(L)}_i \leq \m+X_i' \leq \m+X^{(U)}_i$ for all $i$.

	From the previous paragraph we get that it suffices if
	we project on the feasible set for each row of $\m+X^*$ individually.
	Since in the definition of
	$\prox_{\delta_Q}\left(\m+X - \frac{1}{\beta}\m+B\right)$ we considered
	the Frobenius norm, we can find the minimizer for each row individually.
	More concretely, the rows $\m+X_i$ of $\m+X^*$ are given by 
	$\prox_{\delta_{Q_i}}(\m+X_i - \frac{1}{\beta}\m+B_i)$, where by $Q_i$ we
	denote the set of constraints $\norm{\m+X_i'}_{1,1} = 1$
	and $\m+X^{(L)}_i \leq \m+X_i' \leq \m+X^{(U)}_i$.

	We note that this is the same \sninline{double-check} as computing the
	orthogonal projection of $\m+X_i - \frac{1}{\beta}\m+B_i$ onto $Q_i$. This
	orthogonal projection can be computed by the algorithm of
	Kiwiel~\cite{kiwiel2008breakpoint}  in time $\bigO(k)$.  Since we have to
	run this procedure for each of the $n$~rows of $\m+X^*$, we can compute
	$\m+X^*$ in time $\bigO(nk)$.
\end{proof}

To prove the theorem, we need to argue that we can apply
Lemma~\ref{lem:dAspremont} to {\ouralgo} and we also have
to analyze the running time.

First, we note that {\ouralgo} is an implementation of
Algorithm~\ref{alg:dAspremont} with the following parameters.

We set $\kappa=1$ and $d(\m+X') = \frac{1}{2} \norm{\m+X' - \m+X^{(0)}}_F^2$;
note that this trivially implies
$d(\m+X') \geq \frac{1}{2} \kappa \norm{\m+X' - \m+X^{(0)}}_F^2$
and also $d(\m+X^{(0)}) = 0$.
Furthermore, we set 
$\alpha_T = \frac{T+1}{2}$, which satisfies the conditions of
Lemma~\ref{lem:dAspremont} as pointed out in~\cite{d2008smooth}.
It remains to show that $\m+Z^{(T)}$ and $\widetilde{T}(\m+X^{(T)})$ are implemented
in accordance with Lemma~\ref{lem:dAspremont}. To this end, observe that
\begin{align*}
	\m+Z^{(T)}
	&= \argmin_{\m+U \in Q} \left\{
	  		 \frac{L}{\kappa} d(\m+U)
			 + \sum_{t=0}^T \alpha_t [f(\m+X^{(t)}) + \langle \widetilde\nabla_{\m+X} f(\m+X^{(t)}), \m+U - \m+X^{(t)} \rangle_F]
   		\right\} \\
	&= \argmin_{\m+U\in Q} \left\{
	  		 L \norm{\m+U - \m+X^{(0)}}_F^2
			 + \langle \sum_{t=0}^T \alpha_t \widetilde\nabla_{\m+X} f(\m+X^{(t)}), \m+U \rangle_F
   		\right\}.
\end{align*}
Hence, we can compute $\m+Z^{(T)}$ using Lemma~\ref{lem:opt-constraints} with
parameters $\m+X=\m+X^{(0)}$,
$\m+B = \sum_{t=0}^T \alpha_t \widetilde\nabla_{\m+X} f(\m+X^{(t)})$, and
$\beta=2L$.
Additionally, for the projection on our set of constraints we obtain that
\begin{align*}
	\widetilde{T}_Q(\m+X^{(T)})
	&= \argmin_{\m+U\in Q}\left\{
			\langle \widetilde{\nabla}_{\m+X} f(\m+X^{(T)}), \m+U - \m+X^{(T)}\rangle_F
			+ \frac{L}{2} \norm{\m+U-\m+X^{(T)}}^2_F
		\right\} \\
	&= \argmin_{\m+U\in Q}\left\{
			\langle \widetilde{\nabla}_{\m+X} f(\m+X^{(T)}), \m+U\rangle_F
			+ \frac{L}{2} \norm{\m+U-\m+X^{(T)}}^2_F
		\right\}.
\end{align*}
Hence, we can compute $\widetilde{T}_Q(\m+X^{(T)})$ using
Lemma~\ref{lem:opt-constraints} with parameters $\m+X=\m+X^{(T)}$,
$\m+B = \widetilde\nabla_{\m+X} f(\m+X^{(T)})$, and $\beta=L$.

Observe that in {\ouralgo}, $\m+V^{(T)}$ corresponds to
$\widetilde{T}_Q(\m+X^{(T)})$ and $\m+W^{(T)}$ corresponds to $\m+Z^{(T)}$.

This implies that we can we can use Lemma~\ref{lem:dAspremont} if we can also
show that the gradient of our objective function is Lipschitz and that the error
for our gradient is small.

First, using that we are working with the Frobenius norm
which is self-dual, we can apply Proposition~\ref{prop:gradient} to obtain the
gradient is $L$-smooth for 
$L=\frac{8CW}{\sqrt{n}} \cdot \norm{\v+s}_2 \cdot \norm{\m+Y}_2^2$.

Second, we obtain our approximate gradient result as follows. Observe that our
feasible space only contains row-stochastic matrices with entries in $[0,1]$.
Thus, we get that for all $\m+X_2,\m+X_3\in Q$ we have that
$\norm{\m+X_2-\m+X_3}_F^2 \leq 2n$, since in each row the difference can be at
most~$2$. Thus, if we compute $\widetilde{\nabla}_{\m+X} f(\m+X_1)$ using
Lemma~\ref{prop:gradient} with $\varepsilon'=\frac{\varepsilon}{\sqrt{2n}}$ and
using the Cauchy--Schwarz inequality we get that
\begin{align*}
	&\abs{\langle \widetilde{\nabla}_{\m+X} f(\m+X_1) - \nabla_{\m+X} f(\m+X_2),
		   				\m+X_2 - \m+X_2 \rangle_F} \\
	&\leq \norm{\widetilde{\nabla}_{\m+X} f(\m+X_1) - \nabla_{\m+X} f(\m+X_2)}_F
		\cdot \norm{\m+X_2 - \m+X_2}_F \\
	&\leq \varepsilon' \cdot \sqrt{2n} \\
	&\leq \varepsilon
\end{align*}
for all $\m+X_1,\m+X_2,\m+X_3\in Q$.

As pointed out in~\cite{d2008smooth}, if we compute the gradient with precision
$\varepsilon/6$ then we obtain a solution with additive error at most
$\varepsilon$ after $O\left(\frac{Ld(\m+X^*)}{\varepsilon}\right)$ iterations of the
algorithm, where $\m+X^*$ is the optimal solution. Since $\m+X^*$ and
$\m+X^{(0)}$ are row-stochastic matrices with entries in $[0,1]$, we get that
$d(\m+X^*) \leq n$. Now using our previous bound on $L$ and
$\norm{\v+s}_2\leq \sqrt{n}$, as well as
$\norm{\m+Y}_2^2 \leq \norm{\m+Y}_F^2 \leq k$ since $\m+Y$ is row stochastic
with $k$~rows, we get that the number of iterations is bounded by
$O\left(\sqrt{\frac{CWkn}{\varepsilon}}\right)$.

Next, observe that in each iteration we spend expected time
$\tO((mk + nk^2 + k^3)\lg(W/\varepsilon))$
to compute the approximate gradient by Proposition~\ref{prop:gradient}. 
Computing the matrices $\m+V^{(T)}$ and $\m+W^{(T)}$ takes time $\bigO(nk\lg k)$ by
Lemma~\ref{lem:opt-constraints}. Thus, the total time of each iteration is
bounded by $\tO((mk + nk^2 + k^3)\lg(W/\varepsilon))$.

Combing these results we obtain a total expected running time of
$\tO\left(\sqrt{\frac{CWkn}{\varepsilon}}(mk + nk^2 + k^3)\lg(W/\varepsilon)\right)$.

\end{document}